\newtheorem{theorem}{Theorem}
\newtheorem{proposition}[theorem]{Proposition}
\newtheorem{corollary}[theorem]{Corollary}
\newtheorem{lemma}[theorem]{Lemma}
\newtheorem{notation}[theorem]{Notation}
\newtheorem{assumption}[theorem]{Assumption}
\theoremstyle{conjecture}
\theoremstyle{definition}
\newtheorem{definition}{Definition}
\theoremstyle{remark}
\newtheorem{remark}{Remark}
\newcommand{\C}[3]{\mathcal C_L(\mathcal{#1},\mathcal {#2},{#3})}
\renewcommand{\P}{\mathcal{P}}
\newcommand{\eqdef}{\stackrel{\text{def}}{=}}
\newcommand{\F}{\ensuremath{\mathbb{F}}}
\newcommand{\R}{\ensuremath{\mathbb{R}}}
\newcommand{\X}{\mathcal{X}}
\newcommand{\Ec}{\mathcal{E}}
\newcommand{\Lc}{\mathcal{L}}
\newcommand{\Xc}{\mathcal{X}}
\newcommand{\Yc}{\mathcal{Y}}
\newcommand{\Zc}{\mathcal{Z}}
\newcommand{\prob}{\ensuremath{\textsf{prob}}}
\newcommand{\code}[1]{\ensuremath{\mathscr{#1}}}
\newcommand{\word}[1]{\mathbf{#1}}
\newcommand{\cv}{\word{c}}
\newcommand{\pv}{\word{p}}
\newcommand{\piv}{{\mathbold{\pi}}}
\newcommand{\uv}{\word{u}}
\newcommand{\vv}{\word{v}}
\newcommand{\wv}{\word{w}}
\newcommand{\xv}{\word{x}}
\newcommand{\yv}{\word{y}}
\newcommand{\mat}[1]{\ensuremath{\boldsymbol{#1}}}
\newcommand{\Am}{\mat{A}}
\newcommand{\Bm}{\mat{B}}
\newcommand{\Gm}{\mat{G}}
\newcommand{\Jm}{\mat{J}}
\newcommand{\Mm}{\mat{M}}
\newcommand{\Mv}{\mat{M}}
\newcommand{\Pim}{\mat{\Pi}}
\newcommand{\zero}{\mat{0}}
\newcommand{\one}{\mat{1}}
\newcommand{\fq}{\F_{q}}
\newcommand{\Fq}{\F_{q}}
\newcommand{\Fqm}{\F_{q^m}}
\newcommand{\UVW}{\left(U\mid U+V \mid U+V+W\right)}
\renewcommand{\P}{\mathcal{P}}
\newcommand{\UV}{\left(U\mid U+V\right)}
\newcommand{\uuv}{\left(\mathbf u \mid \mathbf u+\mathbf v\right)}
\newcommand{\Oq}{\mathcal O\left(\frac{1}{q}\right)}
\newcommand{\OO}[1]{\mathcal{O}\left( #1 \right)}
\newcommand{\Th}[1]{\Theta\left( #1 \right)}
\newcommand{\esp}{{\mathbb{E}}}
\newcommand{\norm}[1]{\left|\!\left|#1\right|\!\right|}
\newcommand{\eqs}{\stackrel{\sigma}{=}}
\newcommand{\ire}[1]{{\textcolor{violet}{#1}}}
\newcommand{\Bha}[1]{{\mathcal{Z}\left( #1 \right)}}
\newcommand{\CKV}{C_{\text{KV}}}
\newcommand{\qSCp}{{q\text{-SC}_p}}
\newcommand{\Wc}[2]{{W(#2|#1)}}
\DeclareMathOperator{\trace}{trace}
\DeclareMathOperator{\diag}{diag}
\begin{document}

\title{Attaining Capacity with  Algebraic Geometry Codes  through the $(U|U+V)$ Construction and Koetter-Vardy Soft Decoding}

\author{Irene M{\'a}rquez-Corbella\footnote{Dept. Mathematics, Statistic and Operation Research, University of La Laguna,  Email: \texttt{irene.marquez.corbella@ull.es}.},\quad Jean-Pierre Tillich\footnote{Inria, Email: \texttt{jean-pierre.tillich@inria.fr}.}}

\maketitle
\begin{abstract}
In this paper we show how to attain the capacity of discrete symmetric channels with polynomial time decoding complexity by considering iterated $\UV$ constructions with 
Reed-Solomon code or algebraic geometry code components. These  codes are decoded with a recursive computation of the 
{\em a posteriori} probabilities of the code symbols together with the Koetter-Vardy soft decoder used 
for decoding the code components in polynomial time. We show that when the
number of levels of the iterated $\UV$ construction tends to infinity, we attain the capacity of any discrete symmetric channel in this way.
This result follows from the polarization theorem together with a simple lemma 
explaining how the Koetter-Vardy decoder behaves for Reed-Solomon codes of rate close to $1$. However, even if this way of attaining 
the capacity of a symmetric channel is essentially the Ar{\i}kan polarization theorem, there are some differences with standard polar codes.
 Indeed, with this strategy we can operate succesfully close to channel
capacity even with a small number of levels of the iterated $\UV$ construction and the probability of error
decays quasi-exponentially with the codelength in such a case (i.e. exponentially if we forget about the 
logarithmic terms in the exponent). We can even improve on this result by considering the algebraic geometry codes constructed in 
\cite{TVZ82}. In such a case, the probability of error decays exponentially in the codelength for any rate below the capacity of the channel.
Moreover, when comparing this strategy to Reed-Solomon codes (or more generally algebraic geometry codes) decoded with the
Koetter-Vardy decoding algorithm, it does not only improve the noise level that the code can tolerate, it also results in a significant complexity gain.
 \end{abstract}

\section{Introduction}\label{sec:introduction}
 \paragraph{Improving upon the error correction performance of Reed-Solomon codes.}
 Reed-Solomon codes are among the most extensively used error correcting codes. 
It has long been known how to decode them
up to half the minimum distance. This gives a decoding algorithm that is able to 
correct a fraction $\frac{1-R}{2}$ of errors in a Reed-Solomon code of  rate $R$. 
However, it is only in the late nineties that a breakthrough was obtained
in this setting with Sudan's algorithm \cite{S97b} and its improvement in \cite{GS99} who showed how to go beyond this barrier with an algorithm 
which in its \cite{GS99} version decodes any fraction of errors smaller than $1 - \sqrt{R}$. 
 This exceeds the minimum distance
bound $\frac{1-R}{2}$ in the whole region of rates $[0,1)$. Later on, it was shown that this decoding algorithm could also be modified a little bit in order to cope with soft information on the errors \cite{KV03}. A few years later, it was also realized by Parvaresh and Vardy in \cite{PV05} 
that by a slight modification of Reed-Solomon codes and by an increase of the alphabet size it was possible to beat the $1-\sqrt{R}$ decoding radius. 
Their new family of codes is  list decodable beyond this radius for low rate. 
Then, Guruswami and Rudra \cite{GR06} improved on these codes by presenting a new family of codes, namely {\em folded Reed-Solomon codes} 
with a polynomial time decoding algorithm achieving the list decoding capacity $1-R-\epsilon$ for every rate $R$ and $\epsilon >0$.

The initial motivation of this paper is to present another modification of Reed-Solomon codes that improves the fraction of errors
that can be corrected.  It consists in using them in a $\UV$ construction. In other words, we choose in this construction $U$ and $V$
to be Reed-Solomon codes. 
 We will show that, in the low rate regime, this class of codes outperforms a little bit a 
 Reed-Solomon code decoded with the Guruswami and Sudan decoder. The point is that this $\UV$ code can be decoded in two steps : 
\begin{enumerate}
\item First by subtracting the left part $y_1$ to the right part $y_2$ of the received vector $(y_1|y_2)$ 
and decoding it with respect to $V$. In such a case, we are
left with decoding a Reed-Solomon code with about twice as many errors. 
\item Secondly, once we have 
recovered the right part $v$ of the codeword, we can get a word $(y_1,y_2-v)$ which should
match two copies of a same word $u$ of $U$. 
We can model this decoding problem by having some
soft information on the received word when we have sent $u$. 
\end{enumerate}
It turns that this channel error model is much less noisy than the original $q$-ary symmetric channel we started with. This soft information can be used in Koetter and Vardy's decoding algorithm.
By this means we can choose $U$ to be a Reed-Solomon code of much bigger rate than $V$. All in all, it turns out that by choosing $U$ and $V$ with appropriate rates we can beat the $1 - \sqrt{R}$ bound of Reed-Solomon codes in the low-rate regime.

It should be noted however that beating this $1- \sqrt{R}$ bound comes at the cost of having now an algorithm which does not work as for the aforementioned
papers \cite{S97b,GS99,PV05,GR06} for every error of a given weight (the so called adversarial error model) but with probability $1-o(1)$ for errors of a given weight. However contrarily to  \cite{PV05,GR06} which results in a significant increase of the alphabet size of the code, our alphabet size actually 
decreases when compared to a Reed-Solomon code: it can be half of the code length and can be even smaller when we apply this construction recursively. Indeed, we will show
that we can even improve the error correction performances by applying this construction again to the $U$ and $V$ components, i.e  
 we can choose $U$ to be a $(U_1|U_1+V_1)$ code
 and we replace in the same way the 
Reed-Solomon code $V$ by a $(U_2|U_2+V_2)$ code where $U_1,U_2,V_1$ and $V_2$ are  Reed-Solomon codes
(we will say that these $U_i$'s and $V_i$'s codes are the consituent codes of the iterated $\UV$-construction).
This improves slightly the decoding performances again in the low rate regime.

\paragraph{Attaining the capacity by letting the depth of the construction go to infinity with an exponential decay of the probability of error after decoding.}
The first question raised by these results is to understand what happens when we apply this iterative construction a number of times
which goes to infinity with the codelength. In this case, the channels faced by  the constituent Reed-Solomon codes polarize: they become either very noisy channels or very
clean channels of capacity close to $1$. This is precisely the polarization phenomenon discovered by Ar{\i}kan in \cite{A09}. Indeed this iterated $\UV$-construction is nothing but a standard polar code when the constituent codes are Reed-Solomon codes of length $1$ (i.e. just a single symbol).
The polarization phenomenon together with a result proving that the Koetter-Vardy decoder is able to operate sucessfully at rates close to $1$
for channels of capacity close to $1$ can be used to show that it is possible to choose the rates of the constituent Reed-Solomon codes in such a way 
that the code construction together with the Koetter-Vardy decoder is able to attain the capacity of symmetric channels.
On a theoretical level, proceeding in this way would not change however the asymptotics of the decay of the probability of error after decoding: the codes obtained in this way would still behave
as polar codes and would in particular have a probability of error which decays exponentially with respect to (essentially) the square root of the 
codelength. 

The situation changes completely however when we allow ourself to change the input alphabet of the channel and/or to use Algebraic Geometry (AG) codes.
The first point can be achieved by grouping together the symbols and view them as a symbol of a larger alphabet. The second point is also relevant here
since the Koetter and Vardy decoder also applies to AG codes (see \cite{KV03a}) with only a rather mild penalty in the error-correction capacity
related to the genus of the curve used for constructing the code. Both approaches can be used to overcome the limitation of having 
constituent codes in the iterated $\UV$-construction whose length is upper-bounded by the alphabet size. When we are allowed to choose long enough 
constituent codes the asymptotic behavior changes radically. We will indeed show that if we insist on using Reed-Solomon codes in the code
construction we obtain a quasi-exponential decay of the probability of error in terms of the codelength (i.e. exponential if we 
forget about the logarithmc terms in the exponent) and an exponential decay if we use the right AG codes. This improves very significantly upon polar codes.
Not only are we able to attain the channel capacity with a polynomial time decoding algorithm with this approach but we are also able to do 
so with an exponential decay of the probability of error after decoding. In essence, this sharp decay of the probability of error after decoding is due
to a result of this paper (see Theorems \ref{th:exponential} and \ref{th:exponentialAG}) showing that even if the Koetter-Vardy decoder is not able to attain the capacity with a probability of error going to zero
as the codelength goes to infinity  its probability of error decays like $2^{-K \epsilon^2 n}$ where 
$n$ is the codelength and $\epsilon$ is the difference between a quantity which is strictly smaller than the capacity of the channel
and the code-rate.

{\bf Notation.}
Throughout the paper we will use the following notation.
\begin{itemize}
\item  A linear code of length $n$, dimension $k$ and distance $d$ over a finite field $\mathbb F_q$ is referred to as an $[n,k,d]_q$-code. 
\item The concatenation of two vectors $\xv$ and $\yv$ is denoted by $(\xv|\yv)$. 
\item For a vector $\xv$ we either denote by $x(i)$ or by $x_i$ the $i$-th coordinate of $\xv$.
We use the first notation when the subscript is already used for other purposes or when there is 
already a superscript for $\xv$.
\item For a vector $\xv=(x_\alpha)_{\alpha \in \Fq}$ we denote by $\xv^{+\beta}$ the vector
$(x_{\alpha+\beta})_{\alpha \in \Fq}$.
\item For a matrix $\Mm$ we denote by $\Mm^j$ the $j$-th column of $\Mm$.
\item By some abuse of terminology, we also view a discrete memoryless channel $W$ with input alphabet $\Xc$ and 
output alphabet $\Yc$ as an $\Xc \times \Yc$ matrix whose $(x,y)$ entry is denoted by 
$W(y|x)$ which is defined as the probability of receiving $y$ given that $x$ was sent. 
We will identify the channel with this matrix later on.
\end{itemize}

\section{The code construction and the link with polar codes}
\label{sec:polar}

\noindent
{\bf Iterated $\UV$ codes.} This section details the code construction we deal with. It can be seen as a variation of polar codes 
and is nothing but an iterated $\UV$ code construction.
We first recall the definition of a $\UV$ code.   We refer to \cite[Th.33]{MS86} for the statements on the dimension and 
minimum distance that are given below.

\begin{definition}[$\UV$ code]
Let $U$ and $V$ be two codes of the same length and defined over the same finite field $\Fq$.
We define the $\left(U\mid U+V\right)$-construction of $U$ and $V$ as the linear code:
$$\UV = \left\{ (\mathbf u\mid\mathbf u + \mathbf v); \mathbf u \in U \hbox{ and } \mathbf v \in V\right\}.$$
The dimension of the $\UV$ code is $k_U+k_V$ and its minimum distance is $\min(2d_U,d_V)$ when the dimensions of $U$ and 
$V$ are $k_U$ and $k_V$ respectively, the minimum distance of $U$ is $d_U$ 
and the minimum distance of $V$ is $d_V$.
\end{definition}

The codes we are going to consider here are iterated $\UV$ constructions defined by
\begin{definition}[iterated $\UV$-construction of depth $\ell$]
\label{def:iterated_uv}
An iterated $\UV$-code $U_\epsilon$ of depth $\ell$ is defined from a set of $2^\ell$ codes $\left\{U_\xv;\xv \in \{0,1\}^\ell\right\}$
which have all the same length and are defined over the same finite field $\Fq$  by using the recursive definition
\begin{eqnarray*}
U_\epsilon & \eqdef & (U_0\mid U_0+U_1)\\
U_\xv &\eqdef &(U_{\xv\mid0}\mid U_{\xv\mid0}+ U_{\xv\mid1}) \;\;\text{for $\xv \in \{0,1\}^i$, $i \in \{1,\dots,\ell-1\}$}.
\end{eqnarray*}
The codes $U_\xv$ for $\xv \in \{0,1\}^\ell$ are called the {\em constituent codes} of the construction.
\end{definition}

In other words, an iterated $\UV$-code of depth $1$ is nothing but a standard $\UV$-code and an iterated $\UV$-code of depth $2$ is a $\UV$-code where
$U$ and $V$ are themselves $\UV$-codes.

\noindent{\bf Graphical representation of an iterated $\UV$ code.}
Iterated $\UV$-codes can be represented by complete binary trees in which each node has exactly two children except  the leaves.
A $\UV$-code is represented by a node with two childs, the  left child representing the $U$ code and the right child representing the $V$ code.
The simplest case is given is given in  Figure \ref{fig:uv}. Another example is given in Figure \ref{fig:example} and represents an iterated $\UV$-code $U_\epsilon$ of depth $3$ with a binary tree of depth $3$ whose leaves are the $8$ constituent codes of this construction.

\begin{figure}[h!]
\caption{Graphical representation of a $\UV$-code. \label{fig:uv}}
\centering
\includegraphics[width=4cm]{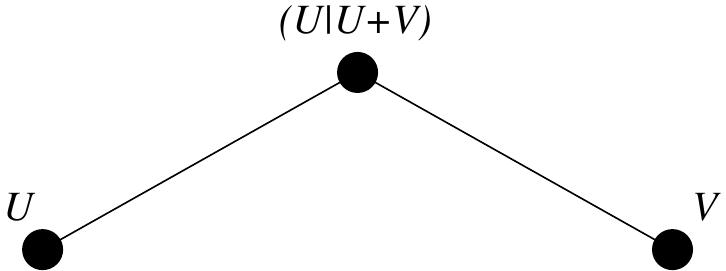}
\end{figure}

\begin{figure}[h!]
\caption{Example of an iterated $\UV$ code of depth $3$. \label{fig:example}}
\centering
\includegraphics[width=8cm]{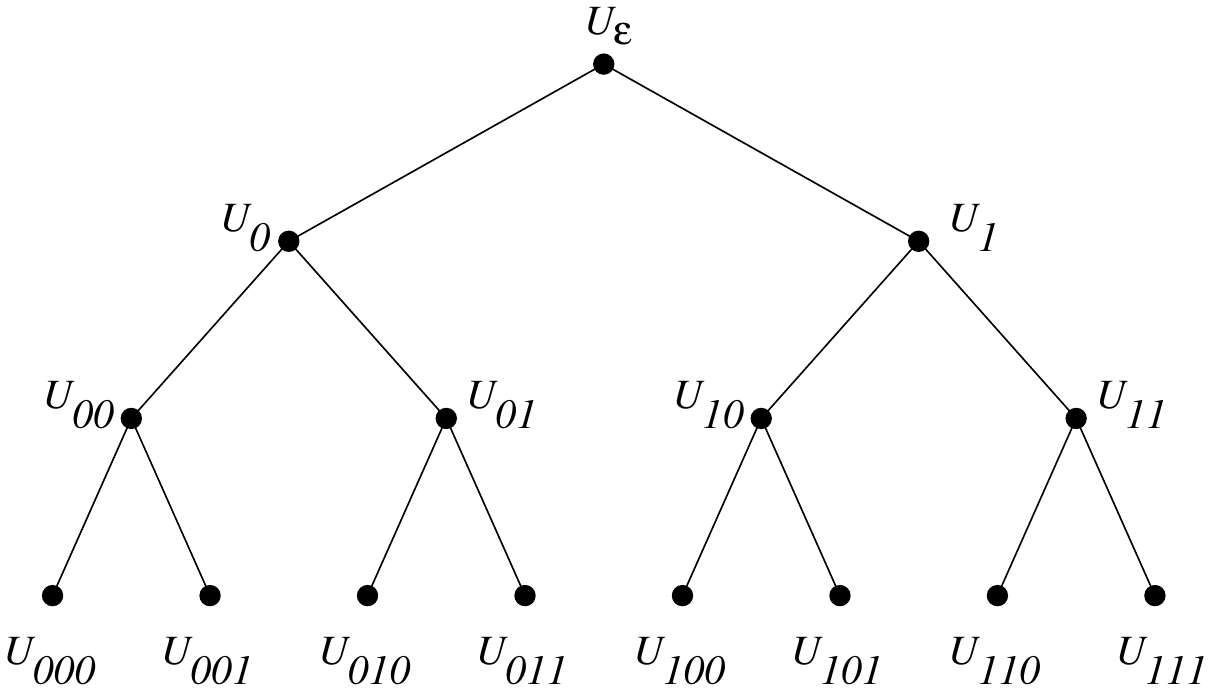}
\end{figure}

\begin{remark}
Standard polar codes (i.e. the ones that were constructed by Ar{\i}kan in \cite{A09}) are clearly a special case of the iterated $\UV$ construction.
Indeed such a polar code of length $2^\ell$  can be viewed as an iterated $\UV$-code of depth $\ell$ where the set $\left\{U_\xv;\xv \in \{0,1\}^\ell\right\}$
of constituent codes are just codes of length $1$. In other words, standard polar codes correspond to binary trees where all leaves are just single bits.
\end{remark}

\noindent{\bf Recursive soft decoding of an iterated $\UV$-code.}
As explained in the introduction our approach is to use the same decoding strategy as for Ar{\i}kan polar codes (that is his successive cancellation decoder) but by using 
now leaves that are codes which are much longer than single symbols.
This will have the effect of lowering rather significantly the error probability of error after decoding when compared
to standard polar codes. It will be helpful to change slightly the way the successive cancellation decoder is generally explained. Indeed this decoder can be viewed 
as an iterated decoder for a $\UV$-code, where decoding the $\UV$-code consists in first decoding the $V$ code and then the $U$ code with a decoder using soft information
in both cases. This decoder was actually considered before the invention of polar codes and has been considered for
decoding for instance Reed-Muller codes based on the fact that they are $\UV$ codes \cite{D06a,DS06}.

Let us recall how such a $\UV$-decoder works.
Suppose we transmit the codeword $\uuv\in \UV$ over a noisy channel and we receive the vector:
$\yv = (\mathbf y_1 \mid \mathbf y_2) $. We denote by $p(b\mid a)$ the probability of receiving $b$ when $a$ was sent
and assume a memoryless channel here. We also assume that all the codeword symbols $u(i)$ and  $v(i)$ are uniformly distributed.

\begin{itemize}
\item[\bf{Step 1.}] We first decode  $V$. We compute  the probabilities $\prob(v(i)=\alpha|y_1(i),y_2(i))$ for all positions $i$ and all $\alpha$ in $\fq$. 
 Under the assumption that we use a memoryless channel and that the $u(i)$'s and the $v(i)$'s are uniformly distributed for all $i$, it is straightforward to check that this probability is given by
 \begin{equation}
\label{eq:sum}
 \prob(v(i)=\alpha|y_1(i),y_2(i)) = \sum_{\beta \in \Fq} p(y_1(i)|\beta) p(y_2(i)|\alpha+\beta)
 \end{equation}
\item[\bf{Step 2.}] We use now Ar{\i}kan's successive decoding approach and assume that the $V$ decoder was correct and thus we have recovered $\vv$.
We compute now for all $\alpha \in \fq$ and all coordinates $i$ the probabilities $\prob(u(i)=\alpha|y_1(i),y_2(i),v(i))$ by using the
formula
\begin{equation}
\label{eq:product}
\prob(u(i)=\alpha|y_1(i),y_2(i),v(i)) = 
\frac{p(y_1(i) \mid \alpha) p(y_2(i) \mid \alpha + v(i))} {\sum_{\beta\in \mathbb F_q} p(y_1(i) \mid \beta) p(y_2(i) \mid \beta + v(i))}
\end{equation}
This can be considered as soft-information on $\uv$ which can be used by a soft information decoder for $U$.
\end{itemize} 

This decoder can then be used recursively for decoding an iterated $\UV$-code.
For instance if we denote by $U_\epsilon$ an iterated $\UV$-code of depth $2$ derived from the set of codes $\{U_{00},U_{01},U_{10},U_{11}\}$, the decoding works as follows (we used here the same notation as  in Definition \ref{def:iterated_uv}).
\begin{itemize}
\item {\bf Decoder for $U_1 = \left(U_{10} \mid U_{10}+U_{11}\right)$}. We first compute the probabilities for decoding $U_{11}$, this code is decoded with a soft information decoder. Once we have recovered the $U_{11}$ part (we denote the corresponding codeword by $\uv_{11}$), we can compute the relevant probabilities for decoding the $U_{10}$ code. This code is also decoded with a soft information decoder and we output a codeword $\uv_{10}$. All this work allows to recover the $U_1$ codeword denoted by $\uv_1$ by combining the 
$U_{10}$ and $U_{11}$ part as $\uv_1 = (\uv_{10}\mid \uv_{10}+\uv_{11})$.
\item {\bf Decoder for $U_0 = \left(U_{00} \mid U_{00}+U_{01}\right)$}. Once the $U_1$ codeword is recovered we can compute the probabilities for decoding the code $U_0$ and we decode this code in the same way as we decoded the code $U_1$.
\end{itemize}
Figure \ref{fig:order} gives the order in which we recover each codeword during the decoding process.

\begin{figure}[h!]
\caption{This figure summarizes in which order we recover each codeword of a $\UV$ code of depth $2$. 
Nodes in red represent codes that are decoded with a soft information decoder, nodes in black correspond to 
codes that are not decoded directly and whose decoding is accomplished by first recovering the two descendants of the node and then combining them
to recover the codeword we are looking for at this node.\label{fig:order}}
\centering
\includegraphics[width=8cm]{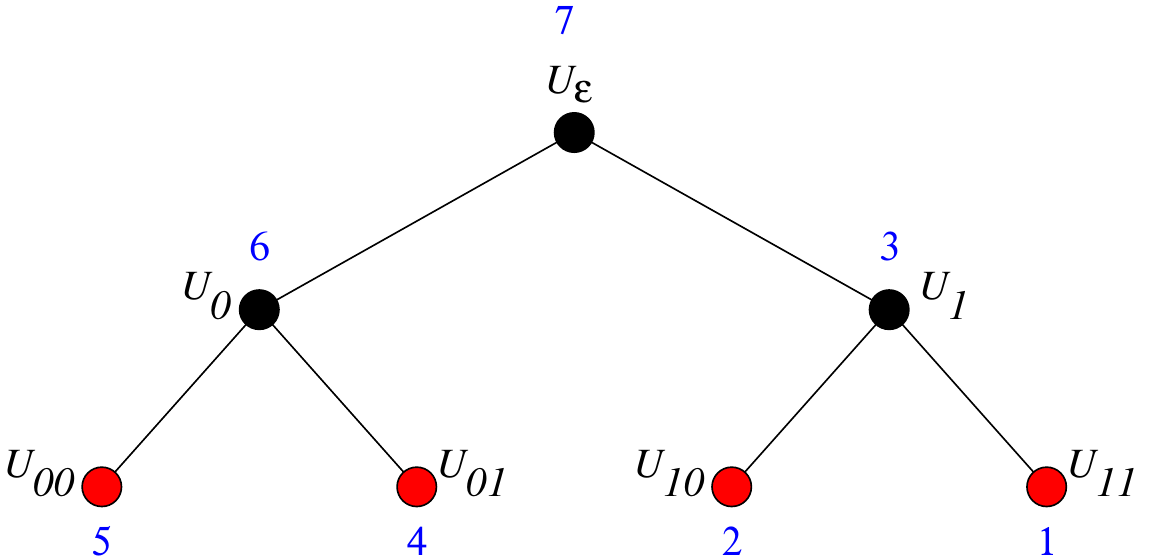}
\end{figure}

When the constituent codes of this recursive $\UV$ construction are just codes of length $1$, it is readily seen that this decoding simply amounts to the successive cancellation 
decoder of Ar{\i}kan. We will be interested in the case where these constituent codes are longer than this.
In such a case, we have to use as constituent codes, codes for which we have an efficient but possibly suboptimal decoder which can make use of soft information. Reed-Solomon
codes or algebraic geometry codes with the Koetter Vardy decoder are precisely codes with this kind of property. 

\noindent
{\bf Polarization.}
The probability computations made during the $\UV$ decoding \eqref{eq:sum} and \eqref{eq:product} correspond in a natural way to changing the channel model for the $U$ code and for
the $V$ code. These two channels really correspond to the two channel combining models considered for polar codes. More precisely, if we consider a memoryless channel of input alphabet $\Fq$ and output alphabet 
$\Yc$ defined by a transition matrix $W= (W(y|u))_{\substack{u \in \Fq \\ y \in \Fq}}$, then the channel viewed by the $U$ decoder, respectively the $V$ decoder is a memoryless channel with transition matrix $W^0$ and 
$W^1$ respectively, which are given by
\begin{eqnarray*}
 W^0(y_1,y_2,u_2|u_1) & \eqdef & \frac{1}{q} W(y_1|u_1) W(y_2|u_1 \oplus u_2) \\
 W^1(y_1,y_2|u_2) & \eqdef & \frac{1}{q} \sum_{u_1 \in \F_q} W(y_1|u_1) W(y_2|u_1 \oplus u_2)
 \end{eqnarray*}
Here the $y'_i$'s belong to $\Yc$ and the $u_i$'s belong to $\Fq$. If we define the channel $W^x$ for $x=(x_1\dots x_n) \in \{0,1\}^n$ recursively by 
$$
W^{x_1 \dots x_{n-1} x_n} = \left(W^{x_1 \dots x_{n-1} } \right)^{x_n}
$$
then the channel viewed by the decoder for one of the constituent codes $U_{x_1 \dots x_n}$  of an iterated $\UV$ code of depth $n$ 
(with the notation of Definition \ref{def:iterated_uv}) is nothing but the 
channel $W^{x_1 \dots x_n}$.

The key result used for showing that polar codes attain the capacity is that these channels polarize in the following sense
\begin{theorem} [{\cite[Theorem 1]{STA09} and \cite[Theorem 4.10]{S11g}}]\label{th:polarization}
Let $q$ be an arbitrary prime.  Then for a discrete $q$-ary input channel $W$ of symmetric capacity 
\footnote{Recall that the symmetric capacity of such a channel is defined as the mutual information between 
a uniform input and the corresponding output of the channel, that is 
$C \eqdef \frac{1}{q}\sum_{\alpha \in \Fq}\sum_{y \in \Yc} W(y|\alpha) \log_q \frac{W(y|\alpha)}{\sum_{\beta \in \Fq} \frac{1}{q} W(y|\beta)}$, 
where $\Yc$ denotes the output alphabet of the channel. 
} 
$C$ 
we have for all
$0 < \beta < \frac{1}{2}$
$$
\lim_{\ell \rightarrow \infty} \frac{1}{n} \left| i \in \{0,1\}^\ell : \Bha{W^i} \leq 2^{-n^\beta} \right| = C,
$$
where $n \eqdef 2^\ell$.
\end{theorem}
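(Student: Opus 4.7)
The strategy is the standard one going back to Ar\i kan and its $q$-ary extension of Şa\c{s}o\u{g}lu--Telatar--Ar\i kan. I would encode the recursive channel construction as a random process: let $B_1,B_2,\dots$ be i.i.d.\ uniform bits and set $W_0 \eqdef W$, $W_\ell \eqdef W_{\ell-1}^{B_\ell}$. The synthesized channels $W^i$ for $i \in \{0,1\}^\ell$ are then exactly the realizations of $W_\ell$, each occurring with probability $2^{-\ell}$, so the theorem amounts to showing
\[
\lim_{\ell\to\infty}\prob\!\left(\Bha{W_\ell}\leq 2^{-n^\beta}\right) \;=\; C, \qquad n = 2^\ell.
\]

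The first step is the classical \emph{convergence} part. Consider $I_\ell \eqdef I(W_\ell)$, the symmetric capacity of $W_\ell$. The conservation identity $I(W^0) + I(W^1) = 2 I(W)$, proved by a short direct computation from the definitions of $W^0$ and $W^1$, shows that $\{I_\ell\}$ is a $[0,1]$-valued martingale, hence converges almost surely (and in $L^1$) to a limit $I_\infty$. For the Bhattacharyya parameter $Z_\ell \eqdef \Bha{W_\ell}$ one checks the two recursions
\[
Z(W^0) = Z(W)^2, \qquad Z(W^1) \leq (q-1)\,Z(W),
\]
which imply that $\{Z_\ell\}$ is a bounded supermartingale, and combine with the standard comparison inequalities relating $I$ and $Z$ on $q$-ary input channels (roughly $c_q\, Z^2 \leq 1 - I \leq \log_2 q \cdot Z$) to conclude that $I_\infty \in \{0,1\}$ and correspondingly $Z_\infty \in \{0,1\}$ a.s.\ with $\prob(Z_\infty=0) = \esp[I_\infty] = I_0 = C$.

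The second, and main, step is the \emph{rate of polarization}: upgrading $Z_\ell \to 0$ to $Z_\ell \leq 2^{-n^\beta}$ with probability approaching $C$. Here I would follow the Ar\i kan--Telatar argument adapted to $q$-ary alphabets by Şa\c{s}o\u{g}lu. Fix $\beta < 1/2$ and a small $\eta > 0$. For any $\delta>0$, by the first step one can choose $\ell_0$ so large that with probability at least $C-\delta$ we have $Z_{\ell_0} \leq \eta$. Condition on this event and consider the remaining $m \eqdef \ell - \ell_0$ steps. Writing $L_\ell \eqdef -\log_2 Z_\ell$, the two recursions become $L_{\ell+1} = 2 L_\ell$ when $B_{\ell+1}=0$ and $L_{\ell+1} \geq L_\ell - \log_2(q-1)$ when $B_{\ell+1}=1$. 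So $L_\ell$ is essentially a biased random walk with doubling whenever $B=0$; by a Chernoff bound on the binomial number $N$ of indices $j \in \{\ell_0+1,\dots,\ell\}$ with $B_j = 0$, one has $N \geq (1/2 - \epsilon) m$ with probability at least $1 - e^{-\Omega(m)}$, which forces $L_\ell \geq 2^{(1/2-\epsilon)m} \cdot L_{\ell_0} - O(m)$, hence $Z_\ell \leq 2^{-n^\beta}$ once $\beta < 1/2 - \epsilon$ and $m$ is large enough. Taking $\delta, \epsilon \to 0$ and $\ell \to \infty$ in the right order gives the claimed liminf of $C$; the opposite inequality is automatic since the fraction of indices with $\Bha{W^i}$ small is bounded above by the fraction with $I(W^i)$ close to $1$, which tends to $C$ by conservation of capacity.

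The technical heart, and the only place where I expect real work beyond bookkeeping, is verifying the $q$-ary Bhattacharyya recursions $Z(W^0) = Z(W)^2$ and $Z(W^1) \leq (q-1)Z(W)$ together with the sandwich bounds between $I(W)$ and $Z(W)$ used to transfer convergence of $I_\ell$ to convergence of $Z_\ell$; the appropriate $q$-ary Bhattacharyya parameter (a sum over unordered pairs of input symbols) and the associated inequalities are precisely what is worked out in \cite{STA09,S11g}, which is why we cite those sources rather than reproving the statement from scratch.
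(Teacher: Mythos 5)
The first thing to say is that the paper does not prove Theorem~\ref{th:polarization}: it is imported verbatim from \cite{STA09} and \cite{S11g}, so there is no internal proof to compare your sketch against. Measured against the cited literature, your outline does follow the standard route (capacity martingale for convergence, Bhattacharyya recursions plus the Ar{\i}kan--Telatar bootstrapping for the rate of polarization), but the two steps you dismiss as bookkeeping are exactly where the $q$-ary case is genuinely harder than the binary one, and one of them is where the hypothesis ``$q$ prime'' lives.

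Concretely: (i) the identity $Z(W^0)=Z(W)^2$ holds only for $q=2$. For general $q$ one gets
$Z(W^0)=\frac{1}{q(q-1)}\sum_{u\neq u'} Z_{u,u'}(W)\cdot\frac{1}{q}\sum_{u_2}Z_{u+u_2,\,u'+u_2}(W)$
in terms of the pairwise parameters $Z_{x,x'}$, and only the inequality $Z(W^0)\leq (q-1)Z(W)^2$ survives; this still suffices for the rate argument, but your claimed recursions do not imply that $\{Z_\ell\}$ is a supermartingale, so convergence must be run entirely through the capacity martingale. (ii) More seriously, martingale convergence of $I_\ell$ yields a limit $I_\infty$ but \emph{not} $I_\infty\in\{0,1\}$. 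That extremality statement is the heart of \cite{STA09} and is precisely where primality of $q$ is used: for composite $q$ the $\UV$ kernel can polarize to intermediate channels associated with nontrivial subgroups of the additive group, which is why the paper's remark after Theorem~\ref{th:errprob2} resorts to a randomized $(U|U+\alpha V)$ kernel for prime powers. Your sketch never invokes primality, so as written it would ``prove'' a false statement. (iii) The displayed bound $L_\ell\geq 2^{(1/2-\epsilon)m}L_{\ell_0}-O(m)$ is not literally valid: a minus-step followed by $k$ doublings costs $2^k\log_2(q-1)$, not $O(m)$, so the correct argument is the multi-stage bootstrap of Ar{\i}kan--Telatar rather than a single Chernoff bound. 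None of this is fatal to the strategy, and all of it is repaired in the sources you cite, but these points are the actual content of the theorem rather than routine verification.
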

Here $\Bha{W}$ denotes the Bhattacharyya parameter of $W$ which is assumed to be a memoryless channel with
 $q$-ary inputs and outputs in an alphabet $\Yc$. It is given by
 \begin{equation}
\label{eq:Bha0}
 \Bha{W} \eqdef \frac{1}{q(q-1)} \sum_{x,x' \in \F_q, x' \neq x} \sum_{y \in \Yc} \sqrt{W(y|x) W(y|x')}
 \end{equation}
Recall that this Bhattacharrya parameter quantifies the amount of noise in the channel. It is close to $0$ for channels with very low noise (i.e. channels of capacity close
to $1$) whereas it is close to 
$1$ for very noisy channels (i.e. channels of capacity close to $0$).

\section{Soft decoding of Reed-Solomon codes with the Koetter-Vardy decoding algorithm}
\label{Section3}
It has been a long standing open problem  to obtain an efficient soft-decision decoding algorithm for Reed-Solomon
codes until Koetter and Vardy showed in \cite{KV03} how to modify appropriately the Guruswami-Sudan decoding algorithm in order to achieve this purpose. The complexity of this algorithm is polynomial and we will show here that the probability of error decreases exponentially  in the codelength when the noise level is below a certain threshold. 
Let us first review a few basic facts about this decoding algorithm.

\paragraph{\bf The reliability matrix.}
The Koetter-Vardy decoder \cite{KV03} is based on a  {\em reliability matrix} $\Pim_{\yv}$
of the codeword symbols $x(1),\dots,x(n)$ computed from the knowledge of the received word $\yv$ and which is defined by 
$$
\Pim_{\yv} = \left(\prob(x(j)=\alpha|y(j))\right)_{\substack{\alpha \in \fq \\ 1 \leq j \leq n}}
$$
Recall that the $j$-th column of this matrix $\Pim_{\yv}$ is denoted by $\Pim_{\yv}^j$. It  gives the a posteriori probabilities (APP) that the $j$-th codeword symbol is equal to $\alpha$ where $\alpha$ ranges over $\fq$.

We will be particularly interested in the $q$-ary symmetric channel model. 
The $q$-ary symmetric channel with error probability $p$, denoted by $q\hbox{-SC}_{p}$, takes a $q$-ary symbol at its input and outputs either the unchanged symbol, with probability $1-p$, or any of the other $q-1$ symbols, with probability $\tfrac{p}{q-1}$. Therefore, if the channel input symbols are uniformly distributed, the reliability matrix $\Pim_{\mathbf y}$ for $\qSCp$ is given by

$$\Pim_{\mathbf y}^j (\alpha) = \prob\left(x(j)=\alpha\mid y(j)\right) = \left\{ \begin{array}{ll}
1-p & \hbox{ if } \alpha = y(j)\\
\frac{p}{q-1} & \hbox{ if } \alpha \neq y(j)
\end{array}\right.$$

Thus, all columns of $\Pim_{\mathbf y}$ are identical up to permutation:
$$\Pim_{\mathbf y}^i =  \left(\begin{array}{c}
1-p \\ \frac{p}{q-1} \\ \vdots \\ \frac{p}{q-1}
\end{array}\right) \hbox{ (up to permutation)} $$
with $i=1, \ldots, n$.

This matrix is used by the Koetter-Vardy decoder
to compute a multiplicity matrix that serves as the input to its soft interpolation step.
When used in a $\UV$ construction and decoded as mentioned before, we will need to understand how the reliability matrix 
behaves through the $\UV$ decoding process. This is what we will do now.

\paragraph{\bf Reliability matrix for the $V$-decoder.} We denote the reliability matrix of the  $V$ decoder
by $(\Pim \oplus \Pim)_{\yv}$ 
when $\Pim_{\yv_1}$ and $\Pim_{\yv_2}$ are the initial reliability matrices corresponding to the two halves
of the received word $\yv=(\yv_1,\yv_2)$.
From the definition of the reliability matrix and \eqref{eq:sum} we readily obtain that 
\begin{equation}
\label{eq:pi_oplus}
(\Pim\oplus \Pim)_{\mathbf y}^i (\alpha) \eqdef \sum_{\beta \in \mathbb F_q} \Pim_{\mathbf y_1}^i (\beta) \cdot \Pim_{\mathbf y_2}^i (\alpha - \beta).
\end{equation}

\paragraph{\bf Reliability matrix for the $U$-decoder} 
Similarly, by using \eqref{eq:product} we see that the reliability matrix of the $U$ decoder, that we denote by
$\Pim \times \Pim_{\yv,\vv}$ is given by
\begin{equation}
\label{eq:pi_otimes}( \Pim\times \Pim)_{\mathbf y, \vv}^i(\alpha) = \frac{\Pim_{\yv_1}^i (\alpha) \cdot \Pim_{\mathbf \yv_2}^i (\alpha+v(i))}{\sum_{\beta \in \mathbb F_q} \Pim_{\mathbf \yv_1}^i  (\beta ) \cdot \Pim_{\mathbf \yv_2}^i(\beta+v(i))}.
\end{equation}
To simplify notation we will generally avoid the dependency on $\yv$ and $\vv$ and simply write $\Pim \oplus \Pim$ and $\Pim\times \Pim$.

\paragraph{\bf When does the Koetter-Vardy decoding algorithm succeed ?}
Let us recall how the Koetter-Vardy soft decoder \cite{KV03} can be analyzed.
 By \cite[Theorem 12]{KV03} their decoding algorithm outputs a list that contains the codeword $\mathbf c\in C$ if
\begin{equation}
\label{eq:success}
\frac{\left\langle \Pim, \lfloor \mathbf c\rfloor \right\rangle}{\sqrt{\left\langle \Pim, \Pim\right\rangle}} \geq \sqrt{k-1}+o(1)
\end{equation}
as the codelength $n$ tends to infinity,
where $\lfloor \mathbf c\rfloor$ represents a $q\times n$ matrix with entries $c_{i,\alpha} = 1$ if $c_i = \alpha$, and $0$ otherwise; and
$\left\langle \Am, \Bm \right\rangle$
 denotes the inner product of the two $q\times n$ matrices $A$ and $B$, i.e. 
$$\left\langle \Am, \Bm \right\rangle \eqdef \sum_{i=1}^q\sum_{j=1}^n a_{i,j}b_{i,j}.$$
The algorithm uses a parameter $s$ (the total number of interpolation points counted with multiplicity). The little-O $o(1)$ depends on 
the choice of this parameter and the parameters $n$ and $q$. 

We need a more precise formulation of the little-O of (\ref{eq:success}) to understand that we can get arbitrarily close to the lower bound $\sqrt{k-1}$ with polynomial complexity. 
In order to do so, let us provide more details about the Koetter Vardy decoding algorithm. Basically this algorithm starts by computing with Algorithm A of \cite[p.2814]{KV03} 
from the knowledge of the reliability matrix
$\Pim$ and for the aforementioned integer parameter $s$ a $q \times n$ nonnegative integer matrix $\Mv(s)$ whose entries sum up to $s$.
When $s$ goes to infinity
$\Mv(s)$ becomes proportional to $\Pim$. The cost of this matrix (we will drop the dependency in $s$) $C(\Mv)$ is defined as

\begin{equation}
\label{eq:cost}
C(\Mv) \eqdef \frac{1}{2} \sum_{i=1}^q \sum_{j=1}^n m_{ij}(m_{ij}+1) = \frac{1}{2} 
\left( \left\langle \Mv,  \Mv   \right\rangle + \left\langle \Mv, \one \right\rangle \right)
\end{equation}
where $m_{ij}$ denotes the entry of $\Mv$ at row $i$ and column $j$ and $\one$ is the
all-one matrix. The complexity of the Koetter-Vardy decoding algorithm is dominated by solving a system of $C(\Mv)$ linear equations.
Then, the number of codewords on the list produced by the Koetter-Vardy decoder for a given multiplicity matrix $\Mv$ does not exceed
$$
\Lc(\Mv) \eqdef \sqrt{\frac{2C(\Mv)}{k-1}}.
$$
It is straightforward to obtain from these considerations a soft-decision list decoder with a list which does not 
exceed some prescribed quantity $L$. Indeed it suffices to increase the value of $s$ in \cite[Algorithm A]{KV03}
 until getting a matrix $\Mv$ which is such that
$$
L \leq \Lc(\Mv) < L+1
$$
and to use this multiplicity matrix $\Mv$ in the Koetter-Vardy decoding algorithm. By following the terminology of \cite{KV03}
we refer to this decoding procedure as {\em algebraic soft-decoding with list size limited to $L$}.
\cite[Theorem 17]{KV03} explains that convergence to the $\sqrt{k-1}$ lower-bound is at least as fast as $\OO{1/L}$ 
\begin{theorem}[Theorem 17, \cite{KV03}]
\label{th:loose}
Algebraic soft-decoding with list size limited to $L$ produces a codeword $\cv$ if
\begin{equation}
\label{eq:loose}
\frac{\left\langle \Pim, \lfloor \mathbf c\rfloor \right\rangle}{\sqrt{\left\langle \Pim, \Pim\right\rangle}} \geq 
\frac{\sqrt{k-1}}{1 - \frac{1}{L}\left( \frac{1}{R^*}+\frac{\sqrt{q}}{2\sqrt{R^*}} \right)}
= \sqrt{k-1} \left( 1 + \OO{\tfrac{1}{L}} \right)
\end{equation}
where $R^* \eqdef \frac{k-1}{n}$ and the constant in $\OO{\cdot}$ depends only on $R^*$ and 
$q$.
\end{theorem}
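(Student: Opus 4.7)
The plan is to sharpen the asymptotic criterion (\ref{eq:success}) into the explicit $\OO{1/L}$ bound (\ref{eq:loose}) by carefully tracking the rounding error that Algorithm A introduces when converting the real reliability matrix $\Pim$ into the integer multiplicity matrix $\Mv$ that is actually fed to the interpolation step.

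First I would recall the exact, non-asymptotic Koetter--Vardy success condition expressed at the level of $\Mv$. A standard dimension count on bivariate polynomials of $(1,k-1)$-weighted degree at most $\Delta$ that pass through every point $(x_j,\alpha)$ with prescribed multiplicity $m_{\alpha,j}$ imposes $C(\Mv)$ linear constraints, so such an interpolating polynomial exists as soon as $\binom{\lfloor\Delta/(k-1)\rfloor+1}{2}(k-1)>C(\Mv)$. Taking $\Delta$ just above $\sqrt{2(k-1)C(\Mv)}$ and combining with the Guruswami--Sudan factorisation step shows that $\cv$ is on the output list whenever
$$
\langle \Mv,\lfloor \cv\rfloor\rangle\,\geq\,\sqrt{(k-1)\bigl(\langle \Mv,\Mv\rangle+\langle \Mv,\one\rangle\bigr)}.
$$

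Second, I would analyse the output of Algorithm A at the chosen value of $s$. This algorithm is a priority-queue water-filling whose result satisfies $\langle \Mv,\one\rangle=s$ and which essentially maximises $\langle \Mv,\Pim\rangle$ over nonnegative integer matrices of total mass $s$; in particular each entry obeys $|m_{\alpha,j}-s\,\Pim_{\alpha,j}|\leq 1$. From this I would extract two quantitative estimates, namely $\langle \Mv,\lfloor \cv\rfloor\rangle\geq s\langle \Pim,\lfloor \cv\rfloor\rangle-n$ (column-by-column rounding) and $\langle \Mv,\Mv\rangle\leq s^2\langle \Pim,\Pim\rangle+s\sqrt{qn\langle \Pim,\Pim\rangle}$ (by Cauchy--Schwarz on the cross-term in the expansion $(s\Pim_{\alpha,j}+\epsilon_{\alpha,j})^2$, using that the column sums of $\epsilon_{\alpha,j}$ vanish).

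Third, substituting these two estimates into the exact success criterion, using $\sqrt{1+x}\leq 1+x/2$, dividing through by $s\sqrt{\langle \Pim,\Pim\rangle}$, and finally using the calibration $\Lc(\Mv)\in[L,L+1)$ to substitute $s\simeq L\sqrt{(k-1)/\langle \Pim,\Pim\rangle}$, the two error terms collapse into
$$
\frac{n}{s\sqrt{(k-1)\langle \Pim,\Pim\rangle}}\,=\,\frac{1}{L R^\ast}\qquad\text{and}\qquad\frac{\sqrt{qn}/2}{s\sqrt{\langle \Pim,\Pim\rangle}}\,=\,\frac{\sqrt q}{2L\sqrt{R^\ast}},
$$
respectively, which assemble into the claimed $\frac{1}{L}\bigl(\tfrac{1}{R^\ast}+\tfrac{\sqrt q}{2\sqrt{R^\ast}}\bigr)$ correction.

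The main obstacle is obtaining those precise constants rather than a vague $\OO{1/L}$. The delicate piece is the $\sqrt q$ factor: it does not come from a naive global rounding argument but from the column-wise fact that Algorithm A deposits successive units of multiplicity in proportion to the entries of $\Pim^j$, so the linear term $\langle \Mv,\one\rangle=s$ inside $C(\Mv)$ contributes only $\sqrt{s(k-1)}$ instead of a larger amount. Tracking this column-wise behaviour is what turns a rough $\OO{1/L}$ into the exact constants above; everything else is elementary manipulation.
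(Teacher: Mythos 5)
First, note that the paper does not actually prove this statement: it is quoted verbatim as \cite[Theorem~17]{KV03}, and the only related argument the paper supplies is the appendix proof of the AG analogue (Theorem~\ref{th:listsize_AG}), which follows Koetter--Vardy's method: write the output of Algorithm~A as $\Mv=\lambda\Pim-\Jm$ with $\Jm\in[0,1]^{q\times n}$, solve the quadratic $2C(\Mv)=\lambda^2\left\langle\Pim,\Pim\right\rangle-\lambda\left\langle\Pim,2\Jm-\one\right\rangle+\left\langle\Jm,\Jm-\one\right\rangle$ for $\lambda$, and bound the resulting three terms, using $\left\langle\Jm,\lfloor\cv\rfloor\right\rangle\leq n$ for the $\frac{1}{R^*}$ term and $\left\langle\Pim,\one-2\Jm\right\rangle\leq n$ together with $\left\langle\Pim,\Pim\right\rangle\geq n/q$ for the $\frac{\sqrt q}{2\sqrt{R^*}}$ term, before converting $\sqrt{2C(\Mv)}\geq L\sqrt{k-1}$ via the calibration $L\leq\Lc(\Mv)$. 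Your overall architecture (exact multiplicity-level success criterion, structure of Algorithm~A's output, calibration to $L$) is the same, so the strategy is sound.

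However, several of your intermediate claims are not correct as stated. The bound $|m_{\alpha,j}-s\,\Pim_{\alpha,j}|\leq 1$ cannot hold: since each column of $\Pim$ sums to $1$ we have $\left\langle\Pim,\one\right\rangle=n$, so summing your inequality over all entries forces $s(n-1)\leq qn$, which is incompatible with taking $s$ large enough that $\Lc(\Mv)\geq L$. You are conflating $s=\left\langle\Mv,\one\right\rangle$ with the proportionality constant $\lambda$ of Lemma~16 of \cite{KV03}; these differ by a factor of roughly $n$. Your final arithmetic silently uses $s$ as if it were $\lambda$ (your ``$s\simeq L\sqrt{(k-1)/\left\langle\Pim,\Pim\right\rangle}$'' is the formula for $\lambda$), which is why the constants come out right, but the step as written is false. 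Likewise, Algorithm~A is a global priority-queue allocation, so there is no reason the rounding errors should have vanishing column sums. Finally, the constant $\frac{\sqrt q}{2\sqrt{R^*}}$ does not come from Cauchy--Schwarz over the $qn$ entries of the error matrix: your expansion of $\left\langle\Mv,\Mv\right\rangle$ produces a cross term $2\lambda\left\langle\Pim,\Jm\right\rangle$ whose factor of $2$ you drop, and carrying it through $\sqrt{1+x}\leq 1+x/2$ would give $\frac{\sqrt q}{\sqrt{R^*}}$ rather than $\frac{\sqrt q}{2\sqrt{R^*}}$. The clean source of the $\frac12$ is the $-b/2a$ in the quadratic root for $\lambda$ combined with the one-sided bound $\left\langle\Pim,\one-2\Jm\right\rangle\leq\left\langle\Pim,\one\right\rangle=n$ and $\left\langle\Pim,\Pim\right\rangle\geq n/q$. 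Since the exact constants of \eqref{eq:loose} are used quantitatively later (they define $L$ in Theorem~\ref{th:exponential}), these are genuine gaps, not cosmetic ones, even though each is repairable within your framework.
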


\begin{remark}
\begin{enumerate}
\item
This theorem shows that the size of the list required to approach the asymptotic performance does not depend (directly) on the length
of the code, it may depend on the rate of the code and the cardinality of the alphabet though.
\item
As observed in \cite{KV03}, this theorem is a very loose bound. The actual performance of algebraic soft-decoding with list size limited to $L$ 
is usually orders of magnitude better than that predicted by \eqref{eq:loose}. A somewhat better bound is given by \cite[(44) p. 2819]{KV03} where the condition for successful decoding $\cv$ is
\begin{eqnarray}
\frac{\left\langle \Pim, \lfloor \mathbf c\rfloor \right\rangle}{\sqrt{\left\langle \Pim, \Pim\right\rangle}} 
& \geq &
\frac{\sqrt{k-1}}{1 - \frac{1}{L}\left( \frac{1}{R^*}+\frac{\sqrt{n}}{2\sqrt{R^* \left\langle \Pim, \Pim\right\rangle }} \right)}
 \nonumber \\
& \approx & 
\frac{\sqrt{k-1}}{1 - \frac{1}{L}\left( \frac{1}{R^*}+\frac{1}{2\sqrt{R^*}} \right)}.
\end{eqnarray}
where the approximation assumes that $\left\langle \Pim, \Pim\right\rangle \approx n$ which holds for noise levels of practical interest.
Note that this strengthens a little bit the constant in $\OO{\cdot}$ that appears in Theorem \ref{th:loose}, since it would  not depend on $q$ anymore.
\end{enumerate}
\end{remark}

\paragraph{\bf Decoding capability of the Koetter-Vardy decoder when the channel is symmetric.}
The previous formula does not explain directly under which condition on the rate of  the Reed-Solomon code
 decoding typically succeeds (in some sense this would be a ``capacity'' result for the Koetter-Vardy decoder).
 We will derive now such a result that appears to be new (but see the discussion at the end of this section). It will be convenient 
to restrict a little bit the class of memoryless channels we will consider- this will simplify formulas a great deal.
The idea  underlying this restriction is to make the behavior of the quantity 
$\left\langle \Pim, \lfloor \mathbf c\rfloor \right\rangle$ which appears in the condition of successful decoding \eqref{eq:success}
independent of the codeword $\cv$ which is sent. This is readily obtained by restricting the channel to be {\em weakly symmetric}.

\begin{definition}[weakly symmetric channel]
\label{def:weakly_symmetric}
A discrete memoryless $W$ with input alphabet $\Xc$ and output alphabet $\Yc$ 
is said to be weakly symmetric if and only if  there is a partition of 
the output alphabet $\Yc= Y_1 \cup \dots \cup Y_n$ such that all the submatrices $W_i \eqdef (\Wc{x}{y})_{\substack{x \in \Xc \\ y \in Y_i}}$ 
are symmetric.
A matrix is said to be symmetric if all if its rows are permutations of each other, and all its columns are permutations of each other.
\end{definition}

\noindent
{\em Remarks.}\\
\begin{itemize}
\item Such a channel is called {\em symmetric} in \cite[p.94]{G68}. We avoid using the same terminology as Gallager since ``symmetric channel'' is 
generally used now to denote a channel for which any row is a permutation of each other row and the same property also holds for the columns.
\item This notion is a generalization (when the output alphabet is discrete) of what is called a binary input symmetric channel in \cite{RU08}.
It also generalizes the notion of a cyclic symmetric channel in \cite{BB06}.
\item It is shown that for such channels \cite[Th. 4.5.2]{G68} a uniform distribution on the inputs maximizes the mutual information between the output and the input of the channel and gives therefore its capacity. In such a case, linear codes attain the capacity of such a channel.
\item This notion captures the notion of symmetry of a channel in a very broad sense. In particular the erasure channel is weakly symmetric 
(for many definitions of ``symmetric channels'' an erasure channel is not symmetric).
\end{itemize}

\begin{notation}
We denote for such a channel and for a given output $y$ by $\pi_y=(\pi(\alpha))_{\alpha \in \fq}$  the associated APP vector, that is $\pi(\alpha) = \prob(x=\alpha|y)$
where we denote by 
$x$ the input symbol to the channel. 
\end{notation}
To compute this APP vector we will make throughout the paper the following assumption
\begin{assumption}
The input of the communication channel is assumed to be uniformly distributed
over $\fq$.
\end{assumption}

We give now the asymptotic behavior of
the Koetter-Vardy decoder for a weakly symmetric channel, but before doing this we will need a few lemmas.
\begin{lemma}\label{lem:simple_formula}
Assume that $x$ is the input symbol that was sent and that the communication is weakly symmetric, then by viewing $\pi$ as a function of the random variable $y$ we have for any 
$x \in \Fq$:
$$
\esp_y(\pi(x)) = \esp_y\left(\norm{\pi}^2\right),
~~~~\hbox{ with } \norm{\pi}^2 \eqdef \sum_{\alpha \in \fq} \pi(\alpha)^2.$$
\end{lemma}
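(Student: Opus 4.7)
The plan is to unfold both expectations explicitly and then exploit weak symmetry twice: once in the $y$ direction (column-permutations of each block make the denominators constant) and once in the $x$ direction (row-permutations of each block make the block sums independent of which symbol was sent).

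First, under the uniform-input assumption, Bayes's rule gives the closed form
\[
\pi_y(\alpha) \;=\; \frac{W(y|\alpha)}{S(y)}, \qquad S(y) \eqdef \sum_{\beta \in \fq} W(y|\beta).
\]
Since the conditional law of $y$ given that $x$ was sent is $W(\cdot|x)$, I would write
\[
\esp_y\bigl(\pi(x)\bigr) \;=\; \sum_{y \in \Yc} \frac{W(y|x)^2}{S(y)}, \qquad
\esp_y\bigl(\norm{\pi}^2\bigr) \;=\; \sum_{y \in \Yc} \frac{W(y|x)}{S(y)^2}\sum_{\alpha \in \fq} W(y|\alpha)^2.
\]

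Next, I would split the outer sums according to the partition $\Yc = Y_1 \cup \cdots \cup Y_N$ of Definition \ref{def:weakly_symmetric}. Since the columns of each submatrix $W_i$ are permutations of each other, both $S(y)$ and $\sum_\alpha W(y|\alpha)^2$ are constant on $Y_i$; call them $S_i$ and $Q_i$. Dually, since the rows of $W_i$ are permutations of each other, the sums $\sum_{y \in Y_i} W(y|x)$ and $\sum_{y \in Y_i} W(y|x)^2$ are independent of $x$; call them $R_i$ and $T_i$. With this notation the first expectation collapses to $\sum_i T_i/S_i$ and the second to $\sum_i Q_i R_i/S_i^2$.

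Finally, a double-counting identity on each block closes the argument: summing $R_i = \sum_{y \in Y_i} W(y|x)$ over $x \in \fq$ yields $qR_i = |Y_i|S_i$, and summing $T_i = \sum_{y \in Y_i} W(y|x)^2$ over $x$ yields $qT_i = |Y_i|Q_i$. Substituting these back, both $\sum_i T_i/S_i$ and $\sum_i Q_i R_i / S_i^2$ simplify to the common value $\sum_i |Y_i|Q_i/(qS_i)$, which proves the equality.

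The only subtlety I expect is keeping track of which direction of permutation (rows versus columns of $W_i$) produces which invariance; once that bookkeeping is straight, the rest is pure accounting.
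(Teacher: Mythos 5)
Your proposal is correct and follows essentially the same route as the paper's own proof: the same Bayesian closed form for $\pi_y(\alpha)$, the same block decomposition with column-sum and row-sum invariants (your $S_i, Q_i, R_i, T_i$ are the paper's $C_i, C_i^{(2)}, R_i, R_i^{(2)}$), and the same double-counting identities $qR_i = |Y_i|S_i$ and $qT_i = |Y_i|Q_i$ to equate the two expressions. The bookkeeping of which permutation invariance (rows versus columns) yields which constancy is handled correctly.
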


\begin{proof}
To prove this result, let us introduce some notation. Let us denote by
\begin{itemize}
\item
$\Yc$ the output alphabet and $ Y_1 \cup \dots \cup Y_n = \Yc$  is a partition of $\Yc$  such that all the submatrices $W_i \eqdef (\Wc{x}{y})_{\substack{x \in \Xc \\ y \in Y_i}}$ are symmetric for $i=1, \ldots, n$.
 \item $C_i \eqdef \sum_{x \in \Fq} \Wc{x}{y}$ and $C_i^{(2)} \eqdef \sum_{x \in \Fq} \Wc{x}{y}^2$ where $y$ is arbitrary in  $Y_i$ (these quantities do not depend
 on the element  $y$ chosen in $Y_i$);
 \item $R_i \eqdef \sum_{y \in Y_i} \Wc{x}{y}$  and $R_i^{(2)} \eqdef \sum_{y \in Y_i} \Wc{x}{y}^2$ where $x$ is arbitrary in $\Fq$.
\end{itemize}

We observe now that from the assumption that $x$ was uniformly distributed
\begin{equation}
\label{eq:begin}
\pi_y(\alpha) = \frac{\frac{1}{q}  \Wc{\alpha}{y}}{\prob(\text{receiving y})} = \frac{\frac{1}{q} \Wc{\alpha}{y} }{ \frac{1}{q}\sum_{\beta \in \Fq} \Wc{\beta}{y} } =
\frac{\Wc{\alpha}{y} }{\sum_{\beta \in \Fq} \Wc{\beta}{y} }.
\end{equation}
We observe now that 
\begin{equation*}
\esp_y(\pi(x)) =  \sum_{y \in \Yc} \pi_y(x) \Wc{x}{y} = 
\sum_{y \in \Yc} \frac{\Wc{x}{y} }{\sum_{\beta \in \Fq} \Wc{\beta}{y} } \Wc{x}{y}  
= \sum_{i=1}^n \sum_{y \in Y_i} \frac{\Wc{x}{y}^2 }{\sum_{\beta \in \Fq} \Wc{\beta}{y} } 
= \sum_{i=1}^n \frac{R_i^{(2)} }{C_i } \label{eq:pi_x}.
\end{equation*}
where the second equality is due to \eqref{eq:begin}.

On the other hand
\begin{eqnarray}
\esp_y\left(\norm{\pi}^2\right) &= & \sum_{y \in \Yc} \sum_{\alpha \in \Fq} \pi_y(\alpha)^2  \Wc{x}{y}  
 =  \sum_{y \in \Yc} \sum_{\alpha \in \Fq}  \frac{\Wc{\alpha}{y}^2 }{\left(\sum_{\beta \in \Fq} \Wc{\beta}{y}\right)^2 } \Wc{x}{y}  \nonumber\\
& = &  \sum_{i=1}^n \sum_{y \in Y_i} \sum_{\alpha \in \Fq} \frac{\Wc{\alpha}{y}^2 }{\left(\sum_{\beta \in \Fq} \Wc{\beta}{y}\right)^2 } \Wc{x}{y}  
 =   \sum_{i=1}^n \sum_{y \in Y_i} \sum_{\alpha \in \Fq} \frac{\Wc{\alpha}{y}^2 }{(C_i)^{2} } \Wc{x}{y}  \nonumber \\
& = &  \sum_{i=1}^n \sum_{y \in Y_i} \frac{C_i^{(2)}}{(C_i)^2 } \Wc{x}{y}  
 =   \sum_{i=1}^n \frac{R_i C_i^{(2)}}{(C_i)^2 } \label{eq:pi2}
\end{eqnarray}
where the second equality is due to \eqref{eq:begin}.

By summing all the elements (or the square of the elements) of the symmetric matrix $W_i$ either by columns or by rows and 
since all these row sums or all these column sums are equal, we obtain that
\begin{equation*}
\sum_{\alpha \in \Fq, y \in Y_i} \Wc{\alpha}{y} = |Y_i| C_i  =  q R_i
\end{equation*}
and 
\begin{equation*}
\sum_{\alpha \in \Fq, y \in Y_i} \Wc{\alpha}{y}^2 = |Y_i| C_i^{(2)}  
=  q R^{(2)}_i
\end{equation*}

By using these two equalities in \eqref{eq:pi2} we obtain
\begin{equation*}
\esp_y\left(\norm{\pi}^2\right) = \sum_{i=1}^n \frac{\frac{C_i |Y_i|}{q}  \frac{R_i^{(2)} q}{|Y_i|}}{(C_i)^2 }
=  \sum_{i=1}^n \frac{R_i^{(2)} }{C_i }
\end{equation*}
This yields the same expression as the one for $\esp_y(\pi(x))$ given in \eqref{eq:pi_x}. 
\end{proof}

As we will now show, this quantity $\esp\left(\norm{\pi}^2\right)$ turns out to be the limit of the rate for which the Koetter-Vardy 
decoder succeeds in decoding when the alphabet gets large. 
For this reason, we will denote this quantity by the {\em Koetter-Vardy capacity} of the channel.
\begin{definition}[Koetter-Vardy capacity]
Consider a weakly symmetric channel and denote by $\pi$ the associated probability vector.
The Koetter-Vardy capacity of this channel, which we denote by $\CKV$, is defined by
$$
\CKV \eqdef \esp(\norm{\pi}^2).
$$
\end{definition}

To prove that this quantity captures the rate at which the Koetter-Vardy is successful (at least for large lengths and therefore large field size) let us first prove concentration results around the expectation for the numerator and denominator appearing in the left-hand term of \eqref{eq:success}.

\begin{lemma}\label{lem:concentration}
Let $\epsilon > 0$ and $\mu \eqdef \esp( \norm{\pi}^2)$. We have
\begin{eqnarray}
\prob\left( \left\langle \Pim, \lfloor \mathbf 0 \rfloor \right\rangle \leq (1-\epsilon)  \mu n \right) 
 &\leq & e^{-2n \mu^2\epsilon^2}\label{eq:concentration_num}\\
\prob\left( \left\langle \Pim, \Pim \right\rangle \geq (1+\epsilon)  n \mu \right) 
&\leq&  e^{-2n \mu^2\epsilon^2}\label{eq:concentration_den}
\end{eqnarray}
\end{lemma}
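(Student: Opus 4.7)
The plan is to recognize both $\langle \Pim, \lfloor \mathbf{0} \rfloor \rangle$ and $\langle \Pim, \Pim \rangle$ as sums of $n$ independent, bounded random variables whose common expectation has just been computed in Lemma \ref{lem:simple_formula}, and then apply Hoeffding's inequality. Concretely, since the channel is memoryless, the received symbols $y(1),\dots,y(n)$ are independent, and under the standing assumption that $\mathbf{0}$ was transmitted they are i.i.d.\ draws of the channel output when the input is $0$. Thus the per-column entries $\pi_{y(j)}(0)$ and the per-column squared norms $\|\pi_{y(j)}\|^2$ form two families of $n$ i.i.d.\ random variables.

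For the first bound, I would write
\[
\left\langle \Pim, \lfloor \mathbf{0} \rfloor \right\rangle \;=\; \sum_{j=1}^{n} \Pim^{j}(0) \;=\; \sum_{j=1}^{n} \pi_{y(j)}(0),
\]
observe that each summand lies in $[0,1]$ and, by Lemma \ref{lem:simple_formula} applied with $x=0$, has expectation $\mu = \esp(\|\pi\|^2)$. Hoeffding's inequality with range $b_i - a_i = 1$ then yields
\[
\prob\!\left( \sum_{j=1}^{n} \pi_{y(j)}(0) \leq n\mu - t \right) \;\leq\; e^{-2t^{2}/n},
\]
and choosing $t = n\mu\epsilon$ gives precisely \eqref{eq:concentration_num}.

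For the second bound, I would write
\[
\left\langle \Pim, \Pim \right\rangle \;=\; \sum_{j=1}^{n} \sum_{\alpha \in \Fq} \Pim^{j}(\alpha)^{2} \;=\; \sum_{j=1}^{n} \|\pi_{y(j)}\|^{2},
\]
and note that each summand again lies in $[0,1]$, since $\pi_{y(j)}$ is a probability vector and therefore $\|\pi_{y(j)}\|^{2} \leq \sum_{\alpha} \pi_{y(j)}(\alpha) = 1$. By the second equality in Lemma \ref{lem:simple_formula}, each $\|\pi_{y(j)}\|^{2}$ also has mean $\mu$. Applying Hoeffding again with $t = n\mu\epsilon$ gives \eqref{eq:concentration_den}.

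There is no real obstacle: the only points that need a moment of care are (i) invoking the weak symmetry of the channel (together with linearity of the code) to justify conditioning on the transmission of $\mathbf{0}$ without loss of generality, and (ii) checking that $\|\pi\|^{2} \leq 1$ so that both summands fall in a unit-length interval, which is what makes the Hoeffding exponent equal to $2t^{2}/n$ rather than something larger. Once these are in place the two inequalities are a one-line application of the standard Hoeffding bound.
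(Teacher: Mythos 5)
Your proposal is correct and follows essentially the same route as the paper: both bounds are obtained by writing the inner products as sums of $n$ i.i.d.\ random variables in $[0,1]$ with common mean $\mu$ (using Lemma \ref{lem:simple_formula} for the numerator term) and applying Hoeffding's inequality with deviation $t=n\mu\epsilon$. The extra care you take in checking $\norm{\pi}^2\leq 1$ and in justifying the zero-codeword assumption matches what the paper does implicitly.
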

\begin{proof}
Let us first prove \eqref{eq:concentration_den}. 
We can write the left-hand  term as a
sum of $n$ i.i.d. random variables
$$
\left\langle \Pim, \Pim\right\rangle = \sum_{j=1}^n X_j,
$$
where $X_j \eqdef \norm{\Pim(j)}^2$.
Note that (i) $\esp \left(X_j\right) = \esp(\norm{\pi}^2)$, (ii) $0 \leq X_j \leq 1$.
By using Hoeffding's inequality 
we obtain that for any $\epsilon>0$ we have
\begin{equation}
\prob\left( \sum_{j=1}^n X_j  \geq n  \mu (1+\epsilon) \right) \leq e^{-2n \mu^2\epsilon^2}.
\end{equation}

Now \eqref{eq:concentration_num}  can be dealt with in a similar way by writing
$$
\left\langle \Pim, \lfloor \zero \rfloor \right\rangle = \sum_{j=1}^n Y_j
$$
where $Y_j \eqdef \Pi^j(0)$. The channel is assumed to be symmetric and we can therefore use Lemma \ref{lem:simple_formula}
from which we deduce that $\esp (Y_j ) = \esp( \norm{\pi}^2) = \mu$.
We also have $0 \leq Y_j \leq 1$ and by applying Hoeffding's inequality we obtain that for any $\epsilon>0$ we have
\begin{equation}
\prob\left( \sum_{j=1}^n Y_j \leq n  \mu (1-\epsilon) \right) \leq e^{-2n \mu^2\epsilon^2}.
\end{equation}
\end{proof}

This result can be used to derive a rather tight upper-bound on the probability of error of the Koetter-Vardy decoder.
\begin{theorem}\label{th:exponential}
Consider a weakly symmetric $q$-ary input channel of Koetter-Vardy capacity $\CKV$.
Consider a Reed-Solomon code over $\Fq$ of length $n$, dimension $k$ such that its rate $R= \frac{k}{n}$ satisfies $R < \CKV$.
Let 
\begin{eqnarray*}
\delta & \eqdef &  \frac{\CKV - R}{R} \\
R^* & \eqdef & \frac{k-1}{n} \\
L & \eqdef & \left\lceil \frac{3 \left( \frac{1}{R^*}+\frac{\sqrt{q}}{2\sqrt{R^*}} \right)(1+\tfrac{\delta}{3})} {\delta} \right\rceil
\end{eqnarray*}
The probability that the Koetter-Vardy decoder with list size bounded by $L$ does not output in its list the right codeword is upper-bounded
by $\OO{e^{-K \delta^2 n}}$
for some constant $K$.
\end{theorem}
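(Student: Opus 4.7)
The plan is to show that the sufficient condition of Theorem \ref{th:loose} holds with overwhelming probability, by combining the weak symmetry of the channel with the concentration estimates of Lemma \ref{lem:concentration}. First I would reduce to the case of sending the zero codeword. Since the Reed-Solomon code is linear and the channel is weakly symmetric, the joint law of $(\langle \Pim, \lfloor \cv\rfloor \rangle, \langle \Pim, \Pim\rangle)$ conditional on sending $\cv \in C$ does not depend on $\cv$: this follows from the same row-and-column-permutation argument that underlies Lemma \ref{lem:simple_formula}, applied independently at every coordinate (for each block $Y_i$ of the output partition, both $\pi_{y}(c_j)$ and $\|\pi_y\|^2$ have the same distribution under $W(\cdot|c_j)$ regardless of $c_j$). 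Hence the probability of decoding failure is the same for every transmitted codeword, and the bounds of Lemma \ref{lem:concentration} apply.

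I would then apply Lemma \ref{lem:concentration} with tolerance $\epsilon := \eta\delta$, where $\eta > 0$ is a constant (depending on $\delta$) to be fixed. By a union bound, with probability at least $1 - 2e^{-2\CKV^2\eta^2\delta^2 n}$ one has simultaneously $\langle \Pim, \lfloor \zero \rfloor \rangle \geq (1-\epsilon)\CKV n$ and $\langle \Pim, \Pim \rangle \leq (1+\epsilon)\CKV n$, and in that case
\[
\frac{\langle \Pim, \lfloor \zero \rfloor\rangle}{\sqrt{\langle \Pim, \Pim\rangle}} \;\geq\; \sqrt{\CKV\,n}\cdot\frac{1-\epsilon}{\sqrt{1+\epsilon}}.
\]
The prescribed choice of $L$ forces the slack parameter $A := \tfrac{1}{L}\bigl(\tfrac{1}{R^*}+\tfrac{\sqrt{q}}{2\sqrt{R^*}}\bigr)$ of \eqref{eq:loose} to satisfy $A \leq \delta/(3+\delta)$, i.e.\ $1/(1-A) \leq 1+\delta/3$, so it suffices to verify
\[
\CKV\,\frac{(1-\epsilon)^2}{1+\epsilon} \;\geq\; R^*\Bigl(1+\tfrac{\delta}{3}\Bigr)^{\!2}.
\]

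Using $R^* \leq R = \CKV/(1+\delta)$ and the elementary estimate $(1-\epsilon)^2/(1+\epsilon)\geq 1-3\epsilon$ (which follows from $(1-\epsilon)^2 - (1-3\epsilon)(1+\epsilon) = 4\epsilon^2\geq 0$), this reduces to the purely numerical inequality $(1-3\eta\delta)(1+\delta) \geq (1+\delta/3)^2$, which admits a positive solution $\eta = \eta(\delta) > 0$ for every relevant $\delta$ (for small $\delta$ one may even pick $\eta$ as a universal constant such as $1/10$). Once $\eta$ is fixed, Theorem \ref{th:loose} guarantees successful decoding on the good event, so the overall failure probability is bounded by $2e^{-2\CKV^2\eta^2\delta^2 n}$, which is of the advertised form $\OO{e^{-K\delta^2 n}}$ with $K := 2\CKV^2\eta^2$. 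The main obstacle is precisely this last verification: one must be careful that $\eta$ can be chosen as a constant independent of $n$ (so that the exponent genuinely scales as $\delta^2 n$ rather than degenerating), and to absorb the mild discrepancy between $R$ and $R^* = R - 1/n$ into the constants; everything else is a direct combination of the symmetry reduction, the Hoeffding-based Lemma \ref{lem:concentration}, and Theorem \ref{th:loose}.
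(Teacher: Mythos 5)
Your proposal follows essentially the same route as the paper's proof: reduce to the all-zero codeword by symmetry, apply Lemma \ref{lem:concentration} with tolerance $\epsilon=\Theta(\delta)$ and a union bound, and check that the prescribed $L$ turns the sufficient condition of Theorem \ref{th:loose} into $\sqrt{k-1}\left(1+\tfrac{\delta}{3}\right)$, which the concentration event then guarantees; your squared inequality $(1-\epsilon)^2/(1+\epsilon)\geq 1-3\epsilon$ is just the squared form of the paper's bound $\tfrac{1-x}{\sqrt{1+x}}\geq 1-\tfrac{3}{2}x$. The only caveat --- shared with the paper, whose choice $\left(1-\tfrac{3}{2}\epsilon\right)\sqrt{1+\delta}=1+\tfrac{\delta}{3}$ has the same limitation --- is that your numerical inequality $(1-3\eta\delta)(1+\delta)\geq\left(1+\tfrac{\delta}{3}\right)^2$ only admits $\eta>0$ when $\delta<3$, so very large $\delta$ would need a separate (easy) remark in either write-up.
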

\begin{proof}
Without loss of generality we can assume that the all-zero codeword $\zero$ was sent.
From Theorem \ref{th:loose},  we know that the Koetter-Vardy decoder succeeds if and only if the following condition is met
$$
\frac{\left\langle \Pim, \lfloor \mathbf 0 \rfloor \right\rangle}{\sqrt{\left\langle \Pim, \Pim\right\rangle}} \geq 
\frac{\sqrt{k-1}}{1 - \frac{1}{L}\left( \frac{1}{R^*}+\frac{\sqrt{q}}{2\sqrt{R^*}} \right)}.
$$
Notice that the right-hand side satisfies 
\begin{equation}
\label{eq:lower-bound-rhs}
\frac{\sqrt{k-1}}{1 - \frac{1}{L}\left( \frac{1}{R^*}+\frac{\sqrt{q}}{2\sqrt{R^*}} \right) } \leq  
\frac{\sqrt{k-1}}{ 1 - \frac{\delta \left( \frac{1}{R^*}+\frac{\sqrt{q}}{2\sqrt{R^*}}\right)}{3 \left( \frac{1}{R^*}+\frac{\sqrt{q}}{2\sqrt{R^*}} \right)\left(1+\tfrac{\delta}{3}\right)}} =  
\frac{\sqrt{k-1}}{1- \frac{\delta}{3+\delta}} 
 =  \sqrt{k-1}\left(1+\tfrac{\delta}{3}\right) 
\end{equation}

Let $\epsilon$ be a positive constant that we are going to choose afterward.
Define the events $\Ec_1$ and $\Ec_2$ by 
\begin{eqnarray*}
\Ec_1 & \eqdef & \{\Pim: \left\langle \Pim, \Pim\right\rangle \leq n \CKV (1+\epsilon)\}\\
\Ec_2 & \eqdef & \{\Pim: \left\langle \Pim, \lfloor \zero \rfloor \right\rangle \geq n \CKV (1-\epsilon)\}.
\end{eqnarray*}
Note that
by Lemma \ref{lem:concentration} the events $\Ec_1$ and $\Ec_2$ have both probability $\geq 1 - \epsilon'$ 
where $\epsilon' \eqdef e^{-2n \CKV^2\epsilon^2}$.

Thus, the probability that event $\Ec_1$ and event $\Ec_2$ both occur is
$$\prob(\mathcal E_1 \cap \mathcal E_2) = \prob(\mathcal E_1) + 
\prob(\mathcal E_2) - \prob(\mathcal E_1 \cup \mathcal E_2) \geq  1 - \epsilon' + 1 - \epsilon' -1
= 1-2\epsilon'.$$
In the case $\Ec_1$ and $\Ec_2$ both hold, we have
\begin{equation}\label{eq:final2}
\frac{\left\langle \Pim, \lfloor \mathbf 0 \rfloor \right\rangle}{\sqrt{ \left\langle \Pim, \Pim \right\rangle}}
\geq \frac{1-\epsilon}{\sqrt{1+\epsilon}} \sqrt{ \CKV n} 
\end{equation}
A straightforward computation shows that for any $x>0$ we have 
$$
\frac{1-x}{\sqrt{1+x}} \geq 1-\frac{3}{2}x.
$$
Therefore for $\epsilon  >0$ we have in the aforementioned case
\begin{equation*}
\frac{\left\langle \Pim, \lfloor \mathbf 0 \rfloor \right\rangle}
{\sqrt{\left\langle \Pim, \Pim \right\rangle}}  \geq \left(1- \tfrac{3}{2} \epsilon\right) \sqrt{\CKV n} 
= \left(1- \tfrac{3}{2} \epsilon\right) \sqrt{(1+\delta)R n}
= \left(1- \tfrac{3}{2} \epsilon\right) \sqrt{k (1+ \delta) }
\end{equation*}

Let us choose now $\epsilon$ such that 
\begin{equation}
\label{eq:choice-epsilon}
\left(1-\tfrac{3}{2}\epsilon \right)\sqrt{1+\delta} = 1 + \tfrac{\delta}{3}.
\end{equation}
Note that $\epsilon = \Theta(\delta)$.
This choice implies that
\begin{equation*}
\frac{\left\langle \Pim, \lfloor \mathbf 0 \rfloor \right\rangle}
{\sqrt{\left\langle \Pim, \Pim \right\rangle}}  \geq  \sqrt{k} \left(1+\tfrac{\delta}{3}\right)  
 \geq  \sqrt{k-1} \left(1+\tfrac{\delta}{3}\right)
 \geq   \frac{\sqrt{k-1}}{1 - \frac{1}{L}\left( \frac{1}{R^*}+\frac{\sqrt{q}}{2\sqrt{R^*}} \right) }
\end{equation*}
where we
used in the last inequality the bound given in \eqref{eq:lower-bound-rhs}.

In other words, the Koetter Vardy decoder outputs the codeword $\zero$ in its list.
The probability that this does not happen is at most 
$2e^{-2n \CKV^2\epsilon^2}=e^{-n \Theta(\delta^2)}$.
\end{proof}

An immediate corollary of this theorem is the following result that gives a (tight) lower bound on the 
error-correction capacity of the Koetter-Vardy decoding algorithm over a discrete memoryless channel.
\begin{corollary}\label{cor:KVP}
Let $(\code{C}_n)_{n \geq 1}$ be an infinite family of Reed-Solomon codes of rate $\leq R$. Denote by $q_n$ the alphabet size
of $\code{C}_n$ that is assumed to be a non decreasing sequence that goes to infinity with $n$.
Consider an infinite family of $q_n$-ary weakly symmetric channels  with associated probability error vectors $\pi_n$ such 
that $\esp\left( \norm{\pi_n}^2 \right)$ has a limit as $n$ tends to infinity.
Denote by $\CKV^\infty$ the asymptotic Koetter-Vardy capacity of these channels, i.e.
$$
\CKV^\infty \eqdef \lim_{n \rightarrow \infty}  \esp\left( \norm{\pi_n}^2 \right).
$$
This infinite family of codes  can be decoded correctly 
by the Koetter-Vardy decoding algorithm with probability $1-o(1)$ 
as $n$ tends to infinity as soon as there exists 
$\epsilon >0$ such that 
$$
R \leq  \CKV^\infty -\epsilon.
$$
\end{corollary}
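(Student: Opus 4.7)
The plan is to reduce the corollary to a direct application of Theorem~\ref{th:exponential}, after verifying that the relevant parameter $\delta$ stays uniformly bounded away from zero as $n$ grows. The key observation is that, although the Koetter-Vardy capacity $\mu_n \eqdef \esp(\norm{\pi_n}^2)$ of the $n$-th channel in the family varies with $n$, by hypothesis it converges to $\CKV^\infty$, so for $n$ large it is uniformly bounded away from the target rate $R$.

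First I would fix $\epsilon > 0$ as in the statement and use the hypothesis $\mu_n \to \CKV^\infty$ to choose $N$ such that $\mu_n \geq \CKV^\infty - \epsilon/2$ for all $n \geq N$. Denoting by $R_n \leq R$ the rate of $\code{C}_n$, this gives $\mu_n - R_n \geq \epsilon/2$, and hence the parameter $\delta_n \eqdef (\mu_n - R_n)/R_n$ appearing in Theorem~\ref{th:exponential} satisfies $\delta_n \geq \epsilon/(2R)$, i.e., it is bounded below by a positive constant that is independent of $n$.

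Next I would apply Theorem~\ref{th:exponential} to each code $\code{C}_n$ with $n \geq N$: choosing the list size $L_n$ as prescribed there (a quantity depending on $q_n$, $R_n^*$ and $\delta_n$ but finite for every $n$), the theorem bounds the failure probability by $\OO{e^{-K_n \delta_n^2 n}}$. Tracing through the proof of Theorem~\ref{th:exponential} and, in particular, through the concentration estimates of Lemma~\ref{lem:concentration}, the constant $K_n$ is, up to universal factors, of order $\mu_n^2$, which stays bounded below by $(\CKV^\infty - \epsilon/2)^2 > 0$ for $n \geq N$. Combined with the lower bound $\delta_n \geq \epsilon/(2R)$, this yields a failure probability of the form $\OO{e^{-c\, n}}$ for some absolute constant $c > 0$, which is indeed $o(1)$ as $n \to \infty$.

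The only potential concern is the behaviour of the list size $L_n$: because $q_n \to \infty$, $L_n$ grows with $n$, but this enters only into the decoding complexity and not into the asymptotic error bound, and the statement of the corollary says nothing about complexity. The genuine point to verify is that the exponent $K_n \delta_n^2$ is bounded below by a positive constant, which follows at once from $\mu_n \to \CKV^\infty > R$ together with the uniform lower bound on $\delta_n$. Beyond this bookkeeping I do not foresee any serious obstacle, since everything needed has already been prepared by Theorem~\ref{th:exponential}.
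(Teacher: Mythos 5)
Your proposal is correct and is exactly the derivation the paper intends: the paper offers no explicit proof, calling the corollary ``immediate'' from Theorem~\ref{th:exponential}, and your argument simply makes that reduction precise by checking the two uniformity points that matter (that $\delta_n$ stays bounded away from $0$ because $\mu_n \to \CKV^\infty \geq R+\epsilon$, and that the exponent constant, of order $\mu_n^2$, stays bounded away from $0$), so that the per-code failure probability is uniformly $\OO{e^{-cn}} = o(1)$. Your remark that the growth of $L_n$ with $q_n$ affects only complexity and not the error bound is also the right observation.
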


\begin{remark}
\label{Rem-UV-1}
Let us observe that for the $q\hbox{-SC}_{p}$ we have
$$
\esp\left( \norm{\pi}^2 \right) = (1-p)^2 +(q-1)\frac{p^2}{(q-1)^2} = (1-p)^2 + \Oq.
$$ 
By letting $q$ going to infinity, we recover in this way the performance of the Guruswami-Sudan algorithm which works as soon
as 
$R < (1-p)^2$.
\end{remark}

\noindent
\par{\bf Link with the results presented in \cite{KV03} and \cite{KV03a}.}
In \cite[Sec. V.B eq. (32)]{KV03} an arbitrarily small upper bound on the error probability $P_e$ is given, it is namely explained that 
$P_e \leq \epsilon$ as soon as the rate $R$ and the length $n$ of the Reed-Solomon code satisfy
$\sqrt{R} \leq \esp(\Zc^*) - \frac{1}{\sqrt{\epsilon n}}$ (where the expectation is taken with respect to the 
{\em a posteriori} probability distribution of the codeword). Here $\Zc^*$ is some function of the
multiplicity matrix which itself depends on the received word. This is not a bound of the same form as the one given in Theorem 
\ref{th:exponential} whose upper-bound on the error probability only depends on some well defined quantities which govern the complexity 
of the algorithm (such as the size $q$ of the field over which the Reed-Solomon code is defined and a bound on the list-size) and the Koetter-Vardy
capacity of the channel. 

However, many more details are given in the preprint version \cite{KV03a} of \cite{KV03} in Section 9.
There is for instance implicitly in the proof of Theorem 27 in \cite[Sec. 9]{KV03a}  an upper-bound on the error probability
of decoding a Reed-Solomon code with the Koetter-Vardy decoder which goes to zero polynomially fast in the length 
as long as the rate is less than $C \eqdef \frac{\trace \left(W (\diag P_{Y})^{-1} W^T \right)}{q^2}$ where $W$ is the transition probability matrix of the channel
and $\diag P_Y$ is the $|\Yc| \times |\Yc|$ matrix which is zero except on the diagonal where the diagonal elements give the probability distribution of the
output of the channel when the input is uniformly distributed. It is readily verified that in the case of a weakly symmetric channel $C$ is nothing but 
the Koetter-Vardy capacity of the channel defined here. $C$ can be viewed as a more general definition of the ``capacity'' of a channel adapted to the Koetter-Vardy decoding algorithm.
 However it should be said that ``error-probability'' in \cite{KV03,KV03a} should be understood here as ``average error probability of error''
where the average is taken over the set of codewords of the code. It should be said that this average may vary wildly among the codewords in the case of a non-symmetric channel.
In order to avoid this, we have chosen a different route here and have assumed some weak form of symmetry for the channel which ensures that the probability of error
does not depend on the codeword which is sent. The authors of \cite{KV03a} use a second moment method to bound the error probability, this can only give polynomial
upper-bounds on the error probability. This is why we have also used a slightly different route in Theorem \ref{th:exponential}  to obtain stronger (i.e. exponentially small) upper-bounds on the error probability.

\section{Algebraic-soft decision decoding of AG codes.}
\label{SectionAG}

The problem with Reed-Solomon codes is that their length is limited by the alphabet size.
To overcome this limitation it is possible to proceed as in \cite{KV03a} and use instead Algebraic-Geometric codes (AG codes in short)
which can also be decoded by an extension of the Koetter-Vardy algorithm and which have more or less a similar error correction capacity 
as Reed-Solomon codes under this decoding strategy.
The extension of this decoding algorithm to AG codes is sketched in Section \ref{sec:KV_AG}.
Let us first recall how these codes are defined.

An AG code is constructed  from a  triple $(\X, \P, mQ)$ where:
\begin{itemize}
\item $\X$ denotes an algebraic curve over a finite field $\Fq$
(we refer to \cite{S93a} for more information about algebraic geometry codes);
\item $\mathcal P=\left\{ P_1, \ldots, P_n\right\}$ denotes a set of $n$ distinct points of $\X$ with coordinates in $\Fq$; 
\item $mQ$ is a divisor of the curve, here $Q$ denotes another point in $\X$ with coordinates in $\Fq$ which is not in $\mathcal P$
and $m$ is a nonnegative integer.
\end{itemize}

We define $\mathcal L(mQ)$ as the vector space of rational functions on $\X$ that  
may contain only a pole at $Q$ and the multiplicity of this pole is at most $m$.
Then, the \emph{algebraic geometry}  code associated to the above triple denoted by $\C{X}{P}{mQ}$ is the image of $\mathcal L(mQ)$ under the evaluation map $\begin{array}{cccc}\mathrm{ev}_{\mathcal P}: & \mathcal L(mQ) & \longrightarrow & \mathbb F_q^n\end{array}$ defined by $\mathrm{ev}_{\mathcal P}(f) = \left(f(P_1), \ldots , f(P_n) \right)$, i.e.
$$\C{X}{P}{mQ} \eqdef \left\{ \mathrm{ev}_{\mathcal P}(f)=\left(f(P_1), \ldots, f(P_n) \right) \mid f\in \mathcal L(mQ)\right\}$$
Since the evaluation map is linear,  the code $\C{X}{P}{mQ}$ is a linear code of length $n$ over $\Fq$ and dimension $k=\dim(\mathcal L(mQ))$.
This dimension can be lower bounded by $k \geq m-g+1$ where 
$g$ is the genus of the curve. Recall that this quantity is defined by 
$$
g \eqdef \max_{m \geq 0} \{ m - \dim \Lc (mQ) \} +1
$$
Moreover the minimum distance $d$ of this code satisfies $d \geq n - m$.

Reed-Solomon codes are a particular case of the family of AG codes and correspond to the case where
$\X$ is the affine line over $\Fq$, $\P$ are $n$ distinct elements of $\Fq$ and $\Lc$ is the vector space of polynomials of degree at most $k-1$ and with coefficients in $\Fq$.

Recall that it is possible to obtain for any designed rate $R=\tfrac{k}{n}$ and any square prime power $q$ an infinite family of AG codes over $\Fq$ of rate  $ \geq R$ 
of increasing length $n$ and minimum distance $d$ meeting ``asymptotically'' the MDS bound as $q$ goes to infinity
$$ \frac{d}{n}  \geq  (1-R) - O\left(\frac{1}{\sqrt{q}}\right)$$ 
This follows directly from the two aforementioned lower bounds $k \geq m-g+1$ and $d \geq n-m$ and the well known result of Tsfasman, Vl{\u{a}}duts and Zink \cite{TVZ82}
\begin{theorem}[\cite{TVZ82}]
\label{th:TVZ}
For any number $R \in [0,1]$ and any square prime power $q$ there exists an infinite family of AG codes over $\Fq$ of rate  $ \geq R$ 
of increasing length $n$ such that the normalized genus $\gamma \eqdef \frac{g}{n}$ of the underlying curve  satisfies
$$
\gamma \leq \frac{1}{\sqrt{q}-1}
$$
\end{theorem}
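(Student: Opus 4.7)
The plan is to reduce the statement to the existence of a tower of curves over $\Fq$ attaining the Drinfeld--Vl\u{a}duts bound, and then instantiate the standard AG-code construction on each curve in the tower. Since the theorem is really a deep arithmetic-geometric existence statement dressed up in coding language, almost all the mathematical content is concentrated in a single cited input.

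First, I would invoke the existence (originally due to Tsfasman--Vl\u{a}duts--Zink via reductions of modular/Shimura curves, and later realized in a more elementary way by the Garcia--Stichtenoth tower of function fields) of an infinite sequence of smooth projective curves $\{\X_i\}_{i \geq 1}$ over $\Fq$ with genera $g_i \to \infty$ and $\Fq$-rational point counts $N_i$ satisfying
$$
\lim_{i \to \infty} \frac{N_i}{g_i} \;=\; \sqrt{q} - 1.
$$
The hypothesis that $q$ is a \emph{square} prime power enters essentially here, since this is exactly the condition under which the Drinfeld--Vl\u{a}duts upper bound $N_i/g_i \leq \sqrt{q}-1$ can be attained.

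Second, for each $\X_i$ I would pick one $\Fq$-rational point $Q_i \in \X_i$, let $\P_i$ be the set of the remaining $n_i \eqdef N_i - 1$ rational points, and set the multiplicity to $m_i \eqdef \lceil R\, n_i \rceil + g_i - 1$. The associated AG code $\C{X_i}{P_i}{m_i Q_i}$ then has length $n_i$, and the Riemann--Roch lower bound $k_i \geq m_i - g_i + 1$ recalled just above the theorem gives $k_i \geq \lceil R\, n_i \rceil$, so its rate is at least $R$. (Implicitly one needs $m_i < n_i$ for the evaluation map to be injective on $\Lc(m_i Q_i)$, which holds for $i$ large whenever $R < 1 - 1/(\sqrt{q}-1)$; the complementary range of very high rates can be handled via the dual construction, or, for the purposes of the excerpt, is essentially unused.)

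Third, the normalized genus along this family satisfies
$$
\gamma_i \;=\; \frac{g_i}{n_i} \;=\; \frac{g_i}{N_i - 1} \;\xrightarrow[i \to \infty]{}\; \frac{1}{\sqrt{q}-1},
$$
so along this family $\gamma \leq \frac{1}{\sqrt{q}-1}$ holds in the limit (or up to an arbitrarily small $o(1)$ slack for any finite index $i$). The main obstacle is not in any of the bookkeeping of the last two steps but in the first: producing curves that actually attain the Drinfeld--Vl\u{a}duts bound is a deep result whose proof is outside the scope of anything derivable from the material in the excerpt, and I would quote it as a black box from \cite{TVZ82}.
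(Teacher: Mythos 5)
Your proposal is correct and matches the paper's treatment: the paper states this result purely as a citation of \cite{TVZ82} and offers no proof, so the entire mathematical content is, exactly as you say, the black-box existence of curve families attaining the Drinfeld--Vl\u{a}duts bound, with the remaining steps being the standard Riemann--Roch bookkeeping you supply. The only point worth noting is the one you already flag yourself: to get the clean inequality $\gamma \leq \tfrac{1}{\sqrt{q}-1}$ for every member of the family (rather than only in the limit) one should use the slightly stronger pointwise bounds available for the modular or Garcia--Stichtenoth towers, e.g.\ $N_i \geq (\sqrt{q}-1)(g_i+1)$, rather than just the asymptotic ratio.
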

We will call such codes {\em Tsfasman-Vl{\u{a}}duts-Zink} AG codes in what follows.

As is done in \cite{KV03a}, it will be helpful to assume that $2g-1 \leq m < n$. This implies among other things that the 
dimension of the code is given my $k=m-g+1$. 
\ire{$k=m-g+1$}
We will make this assumption from now on. 
As in \cite{KV03} it is possible to obtain a soft-decision list decoder with a list which does not exceed some prescribed quantity $L$. Similar to the Reed-
Solomon case considered in \cite{KV03}, it suffices to increase the value of $s$ in \cite{KV03}[Algorithm A] until we get a matrix $\Mm$ such that
$L< L_m(\Mm) < L+1$, where $L(\Mm)$ is a bound on the list of the codewords output by the algorithm
which is given in Lemma \ref{AG-1}, and then to use this matrix $\Mm$ in the Koetter Vardy decoding algorithm.

The following result is similar to \cite[Th. 17]{KV03} 
\begin{theorem}\label{th:listsize_AG}
Algebraic soft-decoding for AG codes with list-size limited to $L$ produces a list that contains a codeword $\mathbf c\in \C{X}{P}{mQ}$ if 
\begin{equation}
\label{eq:listsize_AG}
\frac{\left\langle \Pim, \lfloor \mathbf c\rfloor\right\rangle}{\sqrt{\left\langle \Pim, \Pim\right\rangle}}
\geq \sqrt{m} 
\frac{1+ \frac{\tilde{\gamma}+\sqrt{2 \tilde{\gamma}}}{L \sqrt{1-\tfrac{2\tilde{\gamma}}{L}\left( 1+\tfrac{2}{L}\right)}}}{1 - \frac{1}{L \sqrt{1-\tfrac{2 \tilde{\gamma}}{L}\left( 1+\tfrac{2}{L}\right)}}
\left( \frac{\sqrt{q }}{2 \sqrt{\tilde{R}}} +   \frac{1}{\tilde{R}} \right)} = \sqrt{m}\left(1 + \OO{\frac{1}{L}} \right)
\end{equation}
where $\tilde{R}=\frac{m}{n}$, $\tilde{\gamma} = \frac{g}{m}$ and $\mathcal O(\cdot)$ depends only on $\tilde{R}$, $\tilde{g}$ and $q$.
\end{theorem}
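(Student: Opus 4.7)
The plan is to follow exactly the template of the proof of Theorem~\ref{th:loose} (i.e.\ \cite[Th.~17]{KV03}) for Reed-Solomon codes, transposed to the AG-code setting with the appropriate Riemann-Roch corrections involving the genus $g$. The starting point is the algebraic soft-decoding algorithm for AG codes sketched in Section~\ref{sec:KV_AG}: from the reliability matrix $\Pim$ one builds with Algorithm A a nonnegative integer multiplicity matrix $\Mv=\Mv(s)$ whose entries sum to $s$, whose cost is still $C(\Mv)=\tfrac{1}{2}(\langle \Mv,\Mv\rangle+\langle \Mv,\one\rangle)$, and one then interpolates and factors to produce the codeword list. As in \cite{KV03}, I would start by choosing $s$ (iteratively enlarging it along Algorithm A) so that the resulting matrix satisfies $L\leq \Lc_m(\Mv)<L+1$, where $\Lc_m(\Mv)$ is the AG list-size bound from Lemma~\ref{AG-1}. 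Because of Riemann-Roch, the dimension $k=m-g+1$ replaces $k-1$ by $m$ in the various degree counts, which is exactly what produces the $\sqrt{m}$ (rather than $\sqrt{k-1}$) in front of \eqref{eq:listsize_AG}.

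The core inequality to establish is the AG analog of the sufficient condition $\langle \Mv,\lfloor \cv\rfloor\rangle>\sqrt{2C(\Mv)(k-1)}$ used in the Reed-Solomon case: here a codeword is on the output list provided $\langle \Mv,\lfloor \cv\rfloor\rangle\ge \sqrt{2C(\Mv)\,m}+(\tilde{\gamma}+\sqrt{2\tilde{\gamma}})\,\Lc_m(\Mv)$ or something of this shape. The additive $\tilde\gamma$ and $\sqrt{2\tilde\gamma}$ contributions come from (i) the gap between the dimension of $\Lc(mQ)$ and $m$, which forces us to bump up the $y$-degree of the interpolation polynomial by about $g$, and (ii) the Bézout-style count of common zeros between the interpolation polynomial and a codeword, where the genus enters through a square-root correction. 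These two contributions are precisely what produces the factor $1+\frac{\tilde\gamma+\sqrt{2\tilde\gamma}}{L\sqrt{1-\tfrac{2\tilde\gamma}{L}(1+\tfrac{2}{L})}}$ in the numerator of \eqref{eq:listsize_AG}, once we divide through by $\Lc_m(\Mv)\ge L$.

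Next, exactly as in \cite{KV03}, I would translate from the multiplicity matrix $\Mv$ back to $\Pim$. Algorithm A guarantees that $\Mv$ is essentially proportional to $\Pim$, more precisely that componentwise $m_{ij}\ge s\,\Pim_{ij}-1$, which yields
\[
\langle \Mv,\lfloor \cv\rfloor\rangle\ge s\,\langle \Pim,\lfloor \cv\rfloor\rangle-n,
\qquad
\langle \Mv,\Mv\rangle\le s^{2}\langle \Pim,\Pim\rangle/\text{(total mass)}+O(s).
\]
Combining these with $2C(\Mv)=\langle \Mv,\Mv\rangle+\langle \Mv,\one\rangle=\langle \Mv,\Mv\rangle+s$ and plugging into the AG interpolation criterion gives, after some algebraic simplification, the condition
\[
\frac{\langle \Pim,\lfloor \cv\rfloor\rangle}{\sqrt{\langle \Pim,\Pim\rangle}}\ge \sqrt{m}\;\frac{1+\frac{\tilde\gamma+\sqrt{2\tilde\gamma}}{L\sqrt{1-\tfrac{2\tilde\gamma}{L}(1+\tfrac{2}{L})}}}{1-\frac{1}{L\sqrt{1-\tfrac{2\tilde\gamma}{L}(1+\tfrac{2}{L})}}\bigl(\tfrac{\sqrt{q}}{2\sqrt{\tilde R}}+\tfrac{1}{\tilde R}\bigr)},
\]
which is exactly \eqref{eq:listsize_AG}. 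The term $\tfrac{\sqrt{q}}{2\sqrt{\tilde R}}+\tfrac{1}{\tilde R}$ in the denominator is the AG analog of the Reed-Solomon correction $\tfrac{\sqrt{q}}{2\sqrt{R^{*}}}+\tfrac{1}{R^{*}}$, arising from bounding $\sqrt{n/\langle \Pim,\Pim\rangle}$ and the $+1$ in $C(\Mv)$; only $k-1$ is replaced by $m$ so that $R^{*}$ becomes $\tilde R$.

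The main obstacle, and where the $\tilde\gamma$-dependence requires care, is the factorisation/Bézout step. One must show that any AG codeword with a sufficiently large interpolation score is actually recovered by the factorisation phase, and quantify the loss in the resulting list-size bound $\Lc_m(\Mv)$ relative to the Reed-Solomon expression $\sqrt{2C(\Mv)/(k-1)}$. This is where the slightly awkward factor $\sqrt{1-\tfrac{2\tilde\gamma}{L}(1+\tfrac{2}{L})}$ appears: it is the ratio between $\Lc_m(\Mv)$ and the bound one would have without any genus penalty, inverted. Once this lemma (essentially Lemma~\ref{AG-1}) is in place, the remainder of the proof is a direct adaptation of \cite{KV03}. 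The final $\mathcal{O}(1/L)$ statement comes from expanding each of the two factors in \eqref{eq:listsize_AG} as $1\pm \Theta(1/L)$ (with constants depending on $\tilde R$, $\tilde\gamma$ and $q$ only), which is straightforward.
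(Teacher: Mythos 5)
Your proposal follows essentially the same route as the paper's own proof: the AG interpolation/B\'ezout criterion with genus corrections (Theorem \ref{AG-1} together with Lemmas \ref{AG-2} and \ref{AG-3}), the passage from the multiplicity matrix to the reliability matrix via Algorithm A's proportionality guarantee $\Mm = \lambda \Pim - \Jm$, the bounds $\left\langle \Pim,\Pim\right\rangle \geq n/q$ and $\left\langle \Pim,\one\right\rangle = n$ producing the $\tfrac{\sqrt{q}}{2\sqrt{\tilde{R}}}+\tfrac{1}{\tilde{R}}$ term, and the lower bound on $\sqrt{2C(\Mm)}$ coming from $L \leq L_m(\Mm) \leq L+1$ producing the $\sqrt{1-\tfrac{2\tilde{\gamma}}{L}\left(1+\tfrac{2}{L}\right)}$ factor. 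The sketch is correct and matches the paper's argument in all essential respects.
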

The proof of this theorem can be found in Section \ref{sec:KV_AG} of the appendix. It heavily relies on results proved in the preprint version 
\cite{KV03a} of \cite{KV03}.

\begin{theorem}\label{th:exponentialAG}
Consider a weakly symmetric $q$-ary input channel of Koetter-Vardy capacity $\CKV$ where $q$ is a square prime power.
Consider a Tsfasman-Vl{\u{a}}duts-Zink AG code over $\Fq$ of length $n$, dimension $k$ such that its rate $R= \frac{k}{n}$ satisfies $R < \CKV - \gamma$ where $\gamma \eqdef \frac{1}{\sqrt{q}-1}$.
Let 
\begin{eqnarray*}
\delta & \eqdef &  \frac{\CKV - R - \gamma}{R} \\
\tilde{R} & \eqdef & \frac{m}{n} \\
\tilde{\gamma} & \eqdef & \frac{g}{m}\\
f(\ell) & \eqdef & \frac{1+ \frac{\tilde{\gamma}+\sqrt{2 \tilde{\gamma}}}{\ell \sqrt{1-\tfrac{2\tilde{\gamma}}{\ell}\left( 1+\tfrac{2}{\ell}\right)}}}{1 - \frac{1}{\ell \sqrt{1-\tfrac{2 \tilde{\gamma}}{L}\left( 1+\tfrac{2}{\ell}\right)}}
\left( \frac{\sqrt{q }}{2 \sqrt{\tilde{R}}} +   \frac{1}{\tilde{R}} \right)} \\
L & \eqdef & f^{-1}\left(1+\frac{\delta}{3} \right) \end{eqnarray*}
The probability that the Koetter-Vardy decoder with list size bounded by $L$ does not output in its list the right codeword is upper-bounded
by $$\OO{e^{-K \delta^2 n}}
$$
for some constant $K$. Moreover $L = \Th{1/\delta}$ as $\delta$ tends to zero.
\end{theorem}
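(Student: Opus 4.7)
The plan is to transport the argument of Theorem \ref{th:exponential} to the AG setting, replacing Theorem \ref{th:loose} by Theorem \ref{th:listsize_AG} and using the Tsfasman-Vl{\u{a}}duts-Zink bound of Theorem \ref{th:TVZ}. Since the channel is weakly symmetric, I would first assume without loss of generality that the all-zero codeword was transmitted. Theorem \ref{th:listsize_AG} then guarantees that the Koetter-Vardy decoder with list size bounded by $L$ outputs $\zero$ whenever
$$
\frac{\left\langle \Pim, \lfloor \zero \rfloor \right\rangle}{\sqrt{\left\langle \Pim, \Pim \right\rangle}} \;\geq\; \sqrt{m}\, f(L),
$$
and by our choice $L = f^{-1}(1+\delta/3)$ this threshold becomes $\sqrt{m}\,(1+\tfrac{\delta}{3})$.

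Next I would apply Lemma \ref{lem:concentration}, which is purely a channel-level statement independent of the code and so applies verbatim here. Combining \eqref{eq:concentration_num} and \eqref{eq:concentration_den} by a union bound yields, with probability at least $1-2e^{-2n\CKV^2\epsilon^2}$,
$$
\frac{\left\langle \Pim, \lfloor \zero \rfloor \right\rangle}{\sqrt{\left\langle \Pim, \Pim \right\rangle}} \;\geq\; \frac{1-\epsilon}{\sqrt{1+\epsilon}}\sqrt{\CKV\, n} \;\geq\; \left(1-\tfrac{3}{2}\epsilon\right)\sqrt{\CKV\, n}.
$$
The closing step is then to calibrate $\epsilon = \Theta(\delta)$ so that this lower bound dominates $\sqrt{m}\,(1+\delta/3)$. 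The key identities are $m = k+g-1$ together with $g/n \leq \gamma$ from Theorem \ref{th:TVZ}, which gives $m/n \leq R+\gamma$, and $\CKV = R(1+\delta)+\gamma$, which is the definition of $\delta$ rearranged. After squaring and using $m \leq n(R+\gamma)$, the target inequality reduces to the deterministic comparison
$$
\left(1-\tfrac{3}{2}\epsilon\right)^{2}\bigl(R(1+\delta)+\gamma\bigr) \;\geq\; (R+\gamma)\bigl(1+\tfrac{\delta}{3}\bigr)^{2},
$$
and one would pick $\epsilon$ satisfying $(1-\tfrac{3}{2}\epsilon)^{2} = (R+\gamma)(1+\delta/3)^{2}/(R(1+\delta)+\gamma)$, which lies in $\Theta(\delta)$. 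Plugging this back into the concentration bound yields the announced failure probability $\OO{e^{-K\delta^2 n}}$.

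The statement $L = \Th{1/\delta}$ is then immediate from the ``$\sqrt{m}(1+\OO{1/L})$'' form of \eqref{eq:listsize_AG}: a first-order expansion $f(\ell) = 1 + a/\ell + \OO{1/\ell^2}$ for a positive constant $a = a(\tilde{R},\tilde{\gamma},q)$ inverts at $1+\delta/3$ to give $L \sim 3a/\delta$. I expect the arithmetic of the closing inequality to be the main obstacle: unlike the Reed-Solomon case where the gap $\CKV - k/n = R\delta$ comes already proportional to $\CKV$ and directly absorbs the $(1+\delta/3)^{2}$ inflation, here the genus correction $\gamma$ competes against that inflation, and one must verify that the residual $R\delta - (R+\gamma)(2\delta/3 + \delta^{2}/9)$ has the right sign and magnitude to accommodate a $\Theta(\delta)$-sized $\epsilon$. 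All remaining ingredients---weak symmetry, Hoeffding-type concentration, Theorems \ref{th:listsize_AG} and \ref{th:TVZ}---transfer mechanically from the Reed-Solomon proof.
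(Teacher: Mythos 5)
Your proposal is correct and follows essentially the same route as the paper, whose proof simply declares that the argument of Theorem \ref{th:exponential} carries over word for word with $k-1$ replaced by $m$ and with Theorem \ref{th:listsize_AG} in place of Theorem \ref{th:loose}; you have merely filled in the arithmetic that the paper omits. The residual-sign concern you flag at the end is real, but it is a defect of the paper's fixed constant $\tfrac{1}{3}$ rather than of your argument: the closing inequality $(R+\gamma)(1+\delta/3)^2 < R(1+\delta)+\gamma$ requires roughly $R>2\gamma$ as $\delta \to 0$, and for smaller rates one would have to take $L=f^{-1}(1+c\delta)$ with $c < R/(2(R+\gamma))$, which still yields $\epsilon=\Theta(\delta)$ and $L=\Th{1/\delta}$.
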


\begin{proof}
The proof follows word by word the proof of Theorem \ref{th:exponential} with the only difference that 
$k-1$ is replaced by $m=k+g$. The only new ingredient is that 
we use Theorem \ref{th:listsize_AG} instead of \eqref{eq:loose} which explains the 
new form chosen for the list-size $L$. 
The last part, namely that $L = \Th{1/\delta}$ is a simple consequence of the fact that 
$f(L) = 1 + \Th{\frac{1}{L} }$ as $L$ tends to infinity.
\end{proof}

\section{Correcting errors beyond the Guruswami-Sudan bound}
\label{sec:iteratedUV}
The purpose of this section is to show that the $\UV$ construction improves significantly the noise level that the
Koetter-Vardy decoder is able to correct. 
To be more specific, consider the $q$-ary symmetric channel. 
The asymptotic Koetter-Vardy capacity of a family of $q$-ary symmetric channels of crossover probability $p$ is equal to $(1-p)^2$.  
It turns out that this 
is also the maximum
crossover probability that the Guruswami-Sudan decoder is able to sustain when the alphabet and the length go to infinity.
We will prove here 
that the $\UV$ construction with 
Reed-Solomon components already performs a 
bit better than 
$(1-p)^2$
when the rate is small enough.
By using iterated $\UV$ constructions we will be able to improve rather significantly the performances and this even for a moderate number of levels.

Our analysis of the Koetter-Vardy decoding is done for weakly symmetric channels. When we want to analyze a $\UV$ code based on Reed-Solomon codes
used over a channel $W$ it will be helpful that the channels $W^0$ and $W^1$ viewed by the decoder of $U$ and $V$ respectively are also weakly symmetric.
Simple examples show that this is not necessarily the case. However a slight restriction of 
the notion of weakly symmetric channel considered in \cite{BB06} does the job. It consists in the notion of a cyclic-symmetric channel whose definition is given below.

\begin{definition}[cyclic-symmetric channel]
\label{def:strictly_cyclic_symmetric}
We denote for a vector $\yv=(y_i)_{i \in \Fq}$ with coordinates indexed by a finite field $\Fq$ by 
$\yv^{+g}$ the vector $\yv^{+g}=(y_{i+g})_{i \in \Fq}$, by $n(\yv)$ the number of $g$'s in $\Fq$ such that 
$\yv^{+g} = \yv$ and by $\yv^*$ the set $\{\yv^{+g}, g \in \Fq\}$. A $q$-ary input channel 
is cyclic-symmetric channel 
if and only
there exists a probability function $Q$ defined over the sets of possible $\pi^*$ such that for 
any $i \in \Fq$ we have
$$
\prob(\pi = \yv|x=i) = y_i n(\yv) Q(\yv^*).
$$
\end{definition}

The point about this notion is that $W^0$ and $W^1$ stay cyclic-symmetric when $W$ is cyclic-symmetric and
that a cyclic-symmetric channel is also weakly symmetric.
This will allow to analyze the asymptotic error correction capacity of 
iterated $\UV$ constructions.
\begin{proposition}[\cite{BB06}]
\label{pr:csc}
Let $W$ be a cylic-symmetric channel. Then $W$ is weakly symmetric and $W^0$ and $W^1$ are also cyclic-symmetric.
\end{proposition}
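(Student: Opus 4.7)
The plan is to recast the output of $W$ by its APP vector, so that the cyclic-symmetry formula reads $\prob(\pi = \yv \mid x = i) = y_i\, n(\yv)\, Q(\yv^*)$, and then exploit how cyclic shifts interact with the combining operations \eqref{eq:pi_oplus} and \eqref{eq:pi_otimes}. For weak symmetry of $W$, I would partition the output alphabet by the equivalence relation $y \sim y'$ iff $\pi_y$ and $\pi_{y'}$ belong to the same cyclic orbit. On each piece $\Yc_{\yv^*}$, the cyclic-symmetry formula makes $W(y\mid i)$ manifestly a function of $\pi_y(i)$ times orbit-invariants $n(\pi_y)$ and $Q(\pi_y^*)$; since the shift $\yv \mapsto \yv^{+g}$ permutes the columns of the restricted submatrix and simultaneously effects the relabeling $y_i \mapsto y_{i+g}$ on rows, all rows are mutual permutations of one another, and likewise for columns. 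Thus the restricted submatrix is symmetric in the sense of Definition~\ref{def:weakly_symmetric}.

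For the cyclic symmetry of $W^1$, the key observation is that by \eqref{eq:pi_oplus} the APP vector of $W^1$ at output $(y_1,y_2)$ is the convolution $\sigma = \pi_{y_1}\oplus\pi_{y_2}$, and convolution is equivariant under cyclic shifts: $\pi_1^{+a}\oplus \pi_2^{+b} = (\pi_1\oplus \pi_2)^{+(b-a)}$. Conditioning on $u_2=i$ with $u_1$ uniform and using the cyclic-symmetry of $W$ for both $\pi_{y_1}$ and $\pi_{y_2}$, I would compute
$$
\prob(\pi_{y_1}=\yv,\ \pi_{y_2}=\yv'\mid u_2=i)=\tfrac{1}{q}\,(\yv\oplus\yv')(i)\,n(\yv)\,n(\yv')\,Q(\yv^*)\,Q(\yv'^*),
$$
where the factor $(\yv\oplus\yv')(i)$ arises from summing $y_{u_1}y'_{u_1+i}$ over $u_1$. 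Marginalizing over all $(\yv,\yv')$ with $\yv\oplus\yv'=\zv$ and using the equivariance identity above to exhibit a bijection between preimages of $\zv$ and preimages of any $\zv^{+g}$, the summed quantity $n(\yv)n(\yv')Q(\yv^*)Q(\yv'^*)$ turns out to depend only on the orbit $\zv^*$. Factoring out $z_i$ and defining $Q_1(\zv^*)$ as the remaining orbit-invariant quantity divided by $n(\zv)$ gives precisely the cyclic-symmetric form for $W^1$.

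For the cyclic symmetry of $W^0$, I would proceed analogously starting from \eqref{eq:pi_otimes}: the APP vector $\sigma$ at output $(y_1,y_2,u_2)$ is the normalized pointwise product $\pi_{y_1}(\alpha)\pi_{y_2}(\alpha+u_2)$, and is equivariant under the diagonal shift $(\yv,\yv',c)\mapsto(\yv^{+g},\yv'^{+g},c)$, which sends $\sigma$ to $\sigma^{+g}$. A direct computation shows that this shift, together with relabeling $u_1 = i \mapsto u_1 = i - g$, leaves the joint probability invariant, so the dependence of $\prob(\sigma=\zv\mid u_1=i)$ on $i$ is captured exactly by a shift of $\zv$ by $-i$. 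Since the resulting function $F(\zv)\eqdef \prob(\sigma=\zv\mid u_1=0)$ evaluated at $\zv$ pulls out the factor $z_0$ by the same orbit-bijection argument as above, we obtain $\prob(\sigma=\zv\mid u_1=i) = z_i\, n(\zv)\, Q_0(\zv^*)$.

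The main technical obstacle will be the bookkeeping required to isolate the factor $z_i$ cleanly and to verify that the residual orbit-invariant function is indeed a probability distribution on orbits. This amounts to checking the normalization $\sum_{\zv^*}Q_0(\zv^*)=\sum_{\zv^*}Q_1(\zv^*)=1$, for which the useful identity is $\sum_{\zv'\in\zv^*} z'_0 = 1/n(\zv)$ (each orbit element contributes its $0$-th coordinate and the orbit has size $q/n(\zv)$, while $\sum_{g\in\Fq}z_g=1$), followed by a double-counting argument. Once the orbit-invariance of the summed factors and the normalization are in hand, both $W^0$ and $W^1$ meet Definition~\ref{def:strictly_cyclic_symmetric}, completing the proof.
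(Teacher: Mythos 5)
The paper gives no proof of this proposition at all: it is imported from \cite{BB06}, so there is no in-house argument to compare yours against. Your blind derivation is, as far as I can check, a correct self-contained proof, and its skeleton is the natural one: identify the channel with the distribution of its APP vector, observe that $\prob(\sigma=\zv\mid x=i)$ always factors as $z_i$ times $q\,\prob(\sigma=\zv)$ by Bayes' rule, and reduce cyclic symmetry of $W^0,W^1$ to showing that $\prob(\sigma=\zv)$ is constant on cyclic orbits, which follows from the shift-equivariance of the correlation and of the normalized pointwise product together with the fiber bijections $(\yv,\yv')\mapsto(\yv,\yv'^{+g})$ and $(\yv,\yv',c)\mapsto(\yv^{+g},\yv'^{+g},c)$. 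Your normalization identity $\sum_{\zv'\in\zv^*}z'_i=1/n(\zv)$ is correct (and holds for every coordinate $i$, since the stabilizer makes $\zv$ constant on cosets).

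Two points deserve explicit care in a write-up. First, for weak symmetry your orbit partition is applied to the raw output alphabet $\Yc$; if two outputs $y,y'$ have APP vectors in the same orbit but different column sums $\sum_\beta W(y\mid\beta)\neq\sum_\beta W(y'\mid\beta)$, the corresponding columns of the restricted submatrix are not permutations of each other. Since Definition~\ref{def:strictly_cyclic_symmetric} is phrased entirely at the level of the APP distribution, the clean fix is to work with the canonical channel whose output \emph{is} the APP vector (or to refine the partition by the value of the column sum); this costs nothing for Lemma~\ref{lem:simple_formula}, which only uses the APP distribution. Second, in the $W^0$ case the quantity that factors off the fiber sum is $y_iy'_{i+c}=z_i\cdot D(\yv,\yv',c)$ with $D(\yv,\yv',c)=\sum_\beta y_\beta y'_{\beta+c}$ the normalizer of the pointwise product, so the residual orbit function $Q_0$ must absorb $D$; this is harmless because $D$ is invariant under the diagonal shift you use, but it should be stated rather than left inside ``the same orbit-bijection argument.'' With these two points made explicit, the proof is complete.
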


\subsection{The $\UV$-construction}
We study here how a $\UV$ code performs when $U$ and $V$ are both 
Reed-Solomon
codes decoded with the Koetter-Vardy decoding algorithm 
when the communication channel 
is a $q$-ary symmetric channel of error probability $p$.

\begin{proposition}
\label{UV-1}
For any real $p$ in $[0,1]$ and real $R$ such that 
$$R<C^\infty_{\UV}(p) \eqdef \frac{(p^3-4p^2+4p-4)(1-p)^2}{2(p-2)},$$
there exists an infinite family of $\UV$-codes of rate $ \geq R$ based on Reed-Solomon codes whose alphabet size $q$ increases with the length 
and whose probability of error on the $\qSCp$ when decoded by the iterated $\UV$-decoder based on the Koetter-Vardy decoding algorithm
goes to $0$ with the alphabet size.
\end{proposition}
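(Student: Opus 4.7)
My plan is to exploit the two-stage structure of the $\UV$ decoder. On the $\qSCp$, the $V$ decoder sees the induced channel $W^1$ with input $u_2$ and, conditional on having recovered $V$ correctly, the $U$ decoder then sees the channel $W^0$ with input $u_1$. I will pick Reed-Solomon codes of length $n$ (with $n\le q$) and rates $R_U, R_V$ strictly below the limiting Koetter-Vardy capacities $\CKV(W^0)$ and $\CKV(W^1)$. A union bound on the two failure events then yields total error probability $o(1)$, and since the $\UV$ code has length $2n$ and dimension $k_U+k_V$, its rate $(R_U+R_V)/2$ can be pushed arbitrarily close to $\tfrac12(\CKV(W^0)+\CKV(W^1))$.

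To apply Corollary \ref{cor:KVP} componentwise, I need $W^0$ and $W^1$ to be weakly symmetric. The $\qSCp$ is plainly cyclic-symmetric in the sense of Definition \ref{def:strictly_cyclic_symmetric}, so by Proposition \ref{pr:csc} both $W^0$ and $W^1$ inherit this property, and are in particular weakly symmetric. Corollary \ref{cor:KVP} thus applies and guarantees that each constituent Reed-Solomon code with alphabet $q \to \infty$ is decoded with vanishing error probability at any fixed rate below the limiting KV capacity of its induced channel.

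The core calculation is the two limits $\lim_{q \to \infty} \CKV(W^i)$. For $W^1$, the APP for $u_2$ depends on $(y_1,y_2)$ only through $y_2-y_1 = u_2+(e_2-e_1)$, where $e_1,e_2$ are the independent $q$-SC noise samples; a short combinatorial count shows that $e_2-e_1$ equals $0$ with probability $A \eqdef (1-p)^2+p^2/(q-1)$ and is uniform on $\Fq\setminus\{0\}$ otherwise. Thus $W^1$ is itself a $q$-SC of crossover probability $1-A\to 2p-p^2$, and Remark \ref{Rem-UV-1} gives $\CKV(W^1) \to (1-p)^4$. For $W^0$, I use Lemma \ref{lem:simple_formula} to rewrite $\CKV(W^0)=\esp(\pi_U(u_1))$ (WLOG with $u_1=u_2=0$) and evaluate the expectation by splitting according to which of $e_1,e_2$ are zero, using formula \eqref{eq:pi_otimes}. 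As $q\to\infty$ one finds $\pi_U(u_1)\to 1$ on $\{e_1=e_2=0\}$ (probability $(1-p)^2$), $\pi_U(u_1)\to (1-p)/(2-p)$ when exactly one $e_i$ vanishes (total probability $2p(1-p)$), and $\pi_U(u_1)\to 0$ otherwise, yielding
\[
\CKV(W^0) \longrightarrow (1-p)^2 + \frac{2p(1-p)^2}{2-p} = \frac{(1-p)^2(2+p)}{2-p}.
\]

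The proof then concludes with the algebraic identity
\[
\frac{1}{2}\left((1-p)^4 + \frac{(1-p)^2(2+p)}{2-p}\right) = \frac{(1-p)^2(p^3-4p^2+4p-4)}{2(p-2)} = C^\infty_{\UV}(p),
\]
which is a routine manipulation after clearing denominators. The only non-mechanical part is the case analysis for $\CKV(W^0)$: the subtlety to verify is that the ``generic'' event $\{e_1\ne e_2, \text{both nonzero}\}$, which has non-vanishing probability $p^2$ in the limit, contributes \emph{zero} to $\esp(\pi_U(u_1))$ because the APP mass concentrates on the two coordinates $\alpha\in\{e_1,e_2\}$ rather than on $\alpha = u_1$; and the ``collision'' event $\{e_1=e_2\ne 0\}$, on which $\pi_U(u_1)$ does not vanish, has probability $p^2/(q-1) = o(1)$ and so is negligible in the limit.
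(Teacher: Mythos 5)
Your proposal is correct and follows essentially the same route as the paper: reduce to correct decoding of the two constituent codes, use cyclic-symmetry of the $\qSCp$ and Proposition \ref{pr:csc} to make $W^0$ and $W^1$ weakly symmetric so that Corollary \ref{cor:KVP} applies, compute the two limiting Koetter--Vardy capacities, and average the rates. The paper defers the two capacity computations to appendix Lemmas \ref{L12} and \ref{L11} (tabulating $\norm{\piv}^2$ case by case), whereas you evaluate $\CKV(W^0)$ via $\esp(\pi(x))$ using Lemma \ref{lem:simple_formula}; both yield $\frac{(2+p)(1-p)^2}{2-p}$ and $(1-p)^4$, so this is only a cosmetic variation.
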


\begin{proof}
The $(U_0|U_0+U_1)$-construction can be decoded correctly by the Koetter-Vardy decoding algorithm if
it decodes correctly $U_0$ and $U_1$. 
Let $\pi_i$ be the APP probability vector seen by the decoder for $U_i$ for $i \in \{0,1\}$.
A $\qSCp$ is clearly a cyclic-symmetric channel and therefore the channel viewed by the $U_0$ decoder and the
$U_1$ decoder are also cyclic-symmetric by Proposition \ref{pr:csc}. A cyclic-symmetric channel is weakly symmetric and therefore
by Corollary \ref{cor:KVP}, decoding succeeds with probability $1-o(1)$ when we choose the rate $R_i$ of $U_i$ to be any 
positive number below $\lim_{q\rightarrow \infty}\esp\left( \norm{\pi_{i}}^2 \right)$ for $i \in \{0,1\}$.

In Section \ref{Appendix-UV} of the appendix it is proved in Lemmas \ref{L12} and \ref{L11} that
\begin{eqnarray*}
\esp\left( \norm{\pi_{0}}^2 \right) &=&\frac{(p+2)(p-1)^2}{2-p} +  \Oq \\
\esp\left( \norm{\pi_{1}}^2 \right) &=& (1-p)^4 + \Oq
\end{eqnarray*} 

 Since the rate $R$ of the $\UV$ construction is equal to $\frac{R_0+R_1}{2}$ 
decoding succeeds with probabilty $1 - o(1)$ if
$$R< \lim_{q \rightarrow \infty} \frac{
\esp \left(\norm{\pi_{0}}^2 \right)  
+ 
\esp \left(\norm{\pi_{1}}^2 \right)
}{2} 
= \frac{(p^3-4p^2+4p-4)(1-p)^2}{2(p-2)}. $$
\end{proof}

From Figure \ref{TwiceUV} we deduce that the $\UV$ decoder outperforms the RS decoder with
Guruswami-Sudan or Koetter-Vardy decoders
as soon as $R<0.17$.

\subsection{Iterated $\UV$-construction}
Now we will study what happens over a $q$-ary symmetric channel with error probability $p$ if we apply 
the iterated $\UV$-construction with Reed-Solomon codes as constituent codes. 
In particular, the following result handles the cases of the iterated $\UV$-construction of depth $2$ and $3$.

\begin{proposition}
\label{UV-2-3}
For any real $p$ in $[0,1]$ we define
\begin{equation}
\label{Depth2}
C^{\infty,(2)}_{\UV}(p) \eqdef \frac{Q(p)(1-p)^2}{4(p^2- 2p + 2)(3p - 4)(2-p)^2}
\end{equation}
and 
\begin{equation}
\label{Depth3}
C^{\infty,(3)}_{\UV}(p) \eqdef \frac{S(p)(1-p)^2}{T_1(p)
T_2(p)T_3(p)
\left(3p^2 - 6p + 4\right)\left(p^2 - 2p + 2\right)^2\left(7p - 8\right)\left(3p - 4\right)^2\left(2-p\right)^4}
\end{equation}

Then, for any real $R$ such that $R<C^{\infty,(2)}_{\UV}(p)$ (resp. $R<C^{\infty,(3)}_{\UV}(p)$) there exists an infinite family of iterated $\UV$-codes of depth $2$ (resp. of depth $3$) and rate $\geq R$ based on Reed-Solomon codes whose alphabet size $q$ increases with the length and whose probability of error 
with 
the Koetter-Vardy decoding algorithm goes to $0$ with the alphabet size.

Where

{\tiny
\begin{eqnarray*}
Q(p) &\eqdef & 3p^{11} - 40p^{10} + 243p^9 - 890p^8 + 2192p^7 - 3800p^6 +
4702p^5 - 4148p^4 + 2624p^3 - 1248p^2 + 480p - 128,\\
T_1(p) & \eqdef & p^4 - 4p^3 + 6p^2 - 4p + 2,\\
T_2(p)& \eqdef & 7p^2 - 18p + 12, \\
T_3(p)& \eqdef &  5p^2 - 12p+ 8 \hbox{ and }\\
S(p) &\eqdef & 
6615p^{35} - 269766p^{34} + 5348715p^{33} - 68697432p^{32} +
642499307p^{31} - 4663447618p^{30} + 27338551153p^{29} \\
& - & 
133009675740p^{28}+ 547673160274p^{27} - 1936548054764p^{26} + 
5946432348816p^{25} -15994984917120p^{24} \\
& + & 
37947048851166p^{23} - 79831430926900p^{22}  + 149553041935846p^{21} 
- 250287141028584p^{20} + 375085789739404p^{19} \\
& - & 
504157479736392p^{18} + 608316727420536p^{17}  -  659027903954592p^{16} 
+ 640716590979968p^{15} - 558310438932224p^{14} \\
& + & 435164216863552p^{13} - 302519286136704p^{12} + 186871196449024p^{11} 
- 102093104278528p^{10} + 49062052366336p^9 \\
& - & 
20617356455936p^8 + 7534906109952p^7 - 2386429566976p^6 
+ 655237726208p^5 - 156569829376p^4 \\
&+&
32471121920p^3 - 5628755968p^2 + 723517440p - 50331648
\end{eqnarray*}
}
\end{proposition}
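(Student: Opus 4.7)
The plan is to follow exactly the same pattern as in Proposition \ref{UV-1}, but applied to an iterated $\UV$-code of depth $\ell \in \{2,3\}$ with Reed-Solomon components $\{U_\xv : \xv \in \{0,1\}^\ell\}$ of individual rates $R_\xv$. Such a code has overall rate $R = \frac{1}{2^\ell}\sum_{\xv \in \{0,1\}^\ell} R_\xv$, and by the recursive decoding procedure of Section \ref{sec:polar} it is decoded correctly as soon as every constituent Koetter-Vardy decoder is correct, so by a union bound the global probability of error is upper-bounded by the sum over $\xv$ of the probabilities of error at each leaf.

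First I would verify that Corollary \ref{cor:KVP} can be invoked at every leaf. The $\qSCp$ channel is cyclic-symmetric in the sense of Definition \ref{def:strictly_cyclic_symmetric}, and by an immediate induction on $\ell$ using Proposition \ref{pr:csc}, every channel $W^\xv$ viewed by the constituent decoder $U_\xv$ is also cyclic-symmetric, hence weakly symmetric. Consequently, choosing for each $\xv$ a rate $R_\xv$ strictly below
\[
\CKV^{\infty}(W^\xv) \;\eqdef\; \lim_{q\to\infty}\esp\bigl(\norm{\pi^\xv}^{2}\bigr)
\]
guarantees by Corollary \ref{cor:KVP} that the Koetter-Vardy decoder at leaf $\xv$ succeeds with probability $1-o(1)$ as the alphabet size goes to infinity; the union bound over $2^\ell \in \{4,8\}$ leaves keeps the global probability of error $o(1)$. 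Taking each $R_\xv$ to its supremum and averaging therefore allows us to attain any rate strictly below
\[
\frac{1}{2^\ell}\sum_{\xv \in \{0,1\}^\ell} \CKV^{\infty}(W^\xv).
\]

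The core of the proof is then the explicit evaluation of these $2^\ell$ limits. Starting from the APP vector $\pi$ of the $\qSCp$, which has one entry equal to $1-p$ and $q-1$ entries equal to $p/(q-1)$, I would iteratively compute the APP vector for each $W^\xv$ by applying the operations \eqref{eq:pi_oplus} (for a bit $1$ in the label, corresponding to the $V$-branch) and \eqref{eq:pi_otimes} (for a bit $0$, corresponding to the $U$-branch). At each step the resulting $\pi^\xv$ is a random vector whose joint distribution can be tracked because cyclic-symmetry leaves only a small number of distinct orbit classes of coordinates. Taking the expectation of $\norm{\pi^\xv}^{2}$ and letting $q \to \infty$ yields a rational function of $p$. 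Summing over $\xv$ and averaging produces exactly the expressions $C^{\infty,(2)}_{\UV}(p)$ and $C^{\infty,(3)}_{\UV}(p)$ displayed in \eqref{Depth2} and \eqref{Depth3}, the polynomials $Q(p), S(p), T_1(p), T_2(p), T_3(p)$ being the by-products of combining the contributions of the $4$ or $8$ branches over a common denominator.

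The genuine obstacle is not the decoding argument, which is a routine extension of Proposition \ref{UV-1} together with Proposition \ref{pr:csc} and Corollary \ref{cor:KVP}, but rather the bookkeeping required to carry out the recursive symbolic computation of $\esp(\norm{\pi^\xv}^{2})$ through two or three levels of $\oplus$ and $\times$ operations. In particular one has to identify the right orbit structure at each stage so that the expectation reduces to finitely many terms that can be expressed in closed form as $q \to \infty$. I would defer these manipulations to an appendix, mirroring the treatment of depth $1$ in Lemmas \ref{L12} and \ref{L11} of Section \ref{Appendix-UV}, and present only the final rational expressions here; the polynomials $Q$, $T_i$ and $S$ in the statement are precisely what comes out of simplifying the sums at depths $2$ and $3$, respectively.
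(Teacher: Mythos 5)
Your proposal is correct and takes essentially the same route as the paper: the paper's proof likewise reduces the claim to the argument of Proposition \ref{UV-1} (cyclic-symmetry propagated by Proposition \ref{pr:csc}, Corollary \ref{cor:KVP} at each leaf, averaging the asymptotic Koetter--Vardy capacities of the $4$ or $8$ leaf channels) and defers the recursive computation of $\esp\left(\norm{\pi^\xv}^2\right)$ to appendix tables tracking the distribution of the APP vectors through the $\oplus$ and $\times$ operations, exactly as you describe.
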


\begin{proof}
The proof is given in Appendix \ref{Appendix-UV2} and \ref{Appendix-UV3} for iterated $\UV$ construction of depth $2$ and $3$, respectively.
\end{proof}

Figure \ref{TwiceUV} summarizes the performances of these iterated $\UV$-constructions
From this figure we see that if we apply the iterated $\UV$-construction of depth $2$ we get better performance than decoding a classical Reed-Solomon code with the Guruswami-Sudan decoder for low rate codes, specifically for $R<0.325$. Moreover, if we apply the iterated $\UV$-construction of depth $3$ we get even better results, we beat the Guruswami-Sudan for codes of rate $R< 0.475 $.

\begin{figure}[h!]
\centering
\includegraphics[scale=0.7]{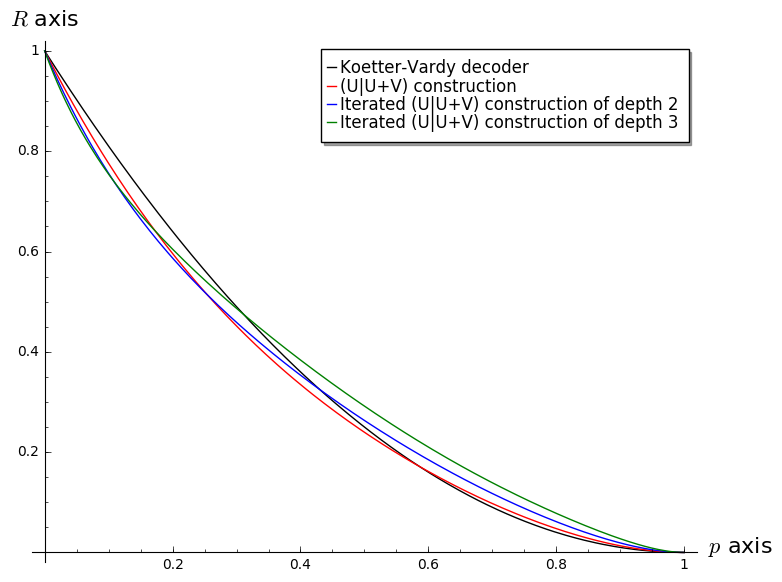}
\caption{Rate plotted against the crossover error probability $p$ for four code-constructions. The black line refers to standard Reed-Solomon codes decoded by the Guruswami-Sudan algorithm, the red line to the $\UV$-construction, the blue line to the iterated $\UV$-construction of depth $2$ and the green line to the iterated $\UV$-construction of depth $3$. \label{fig:infinite}}
\label{TwiceUV}
\end{figure}

\subsection{Finite length capacity}
\label{ss:finite_capacity}

Even if for finite alphabet size $q$ the Koetter-Vardy capacity cannot be understood as a capacity in the usual sense: no family of codes is known
which could be decoded with the Koetter-Vardy decoding algorithm and whose probability of error would go to zero as the codelength goes to infinity 
at any rate below the Koetter-Vardy capacity. Something like that is only true  approximately for AG codes when the size of the alphabet is a
square prime power and if we are willing to pay an additional term of $\frac{1}{\sqrt{q}-1}$ in  the gap between the Koetter-Vardy 
capacity and the actual  code rate. Actually, we can even be sure that for certain rates this result can not hold, since
the Koetter-Vardy capacity can be above the Shannon capacity for very noisy channels. Consider for instance 
the ``completely-noisy''  $q$-ary symmetric channel of crossover probability $\frac{q-1}{q}$.
Its capacity is $0$ whereas its Koetter-Vardy capacity is equal to $\frac{1}{q}$. Nevertheless it is still insightful to consider  
$f(W,\ell) \eqdef \frac{1}{2^\ell} \sum_{i=0}^{2^\ell -1} \CKV(W^i)$ where $W^i$ is the channel viewed by the constituent $U_i$ code for an 
iterated-$UV$ construction of depth $\ell$ for a given noisy channel. This could be considered as the limit for which we can not 
hope to have small probabilities of error after decoding when using Reed-Solomon codes constituent  codes and 
the Koetter-Vardy decoding algorithm. We have plotted these functions in Figure \ref{fig:finite_capacity} for $q=256$ and $\ell =0$ up to $\ell=6$ and 
a $\qSCp$. It can be seen
that for $\ell=5,6$ we get rather close to the actual capacity of the channel in this way.

\begin{figure}[h!]
\centering
\includegraphics[scale=0.7]{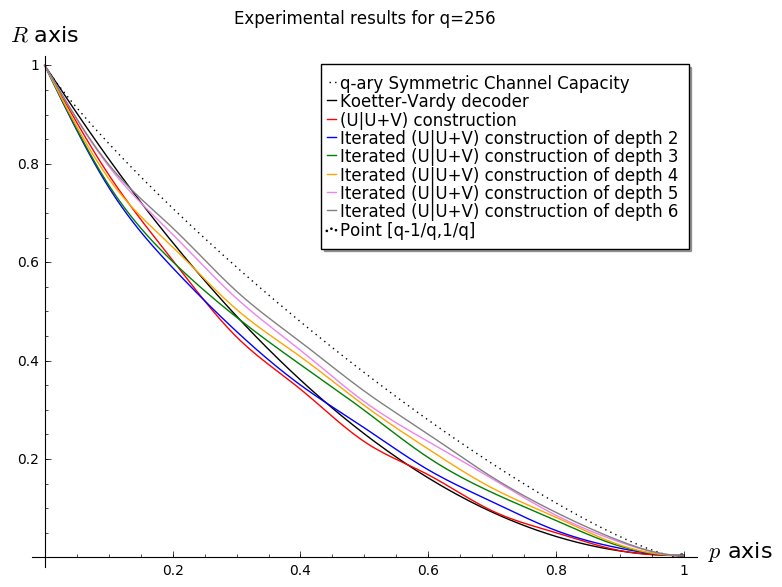}
\caption{average Koetter-Vardy capacity plotted against the crossover error probability $p$ for seven code constructions. 
The noise model is a 
a $\qSCp$.  \label{fig:finite_capacity}.}
\end{figure}

\section{Attaining the capacity with an iterated $\UV$ construction}
\label{sec:capacity}
When the number of levels for which we iterate this construction tends to infinity, we attain the capacity of any 
$q$-ary symmetric channel at least when the cardinality $q$ is prime. This is a straighforward consequence of the fact that polar codes
attain the capacity of any $q$-ary symmetric channel. Moreover the probability of error after decoding can be made to be 
almost exponentially small with respect to the overall codelength.
More precisely the aim of this section is to prove the following results about the probability of error.
 
 \begin{theorem}\label{th:errprob1}
Let $W$ be a cyclic-symmetric $q$-ary channel where $q$ is prime. Let $C$ be the capacity of this channel.
There exists $\epsilon_0 >0$ such that for any $\epsilon$ in the range $(0,\epsilon_0)$ and any $\beta$ in the range $(0,1/2)$ there exists a sequence of iterated $\UV$ codes with Reed-Solomon constituent codes of arbitrarily large length  which have  rate $\geq C - \epsilon$ when the codelength is sufficiently large and whose probability of error  $P_e$ is upper bounded by
$$
P_e \leq n e^{-\frac{K(\epsilon,\beta) N}{n \log N}}
$$
when decoded with the iterated $\UV$ decoder based on decoding the constituent codes with the 
Koetter-Vardy decoder with listsize bounded by $\OO{\frac{1}{\epsilon}}$ and 
where $N$ is the codelength, $n = O(\log \log N) ^{1/\beta}$, 
and $K(\epsilon, \beta)$ is some positive function of $\epsilon$ and $\beta$.
\end{theorem}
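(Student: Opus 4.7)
My proof combines three ingredients: the polarization theorem (Theorem \ref{th:polarization}), the exponential Koetter-Vardy bound for weakly symmetric channels (Theorem \ref{th:exponential}), and a short link between the Bhattacharyya parameter and the Koetter-Vardy capacity. Write $M \eqdef 2^\ell$ for the number of constituent codes and $N = nM$ for the total codelength.

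Fix $\epsilon \in (0,\epsilon_0)$ and $\beta \in (0,\tfrac12)$. By Theorem \ref{th:polarization}, for $\ell$ large enough the set of good indices $G \eqdef \{i \in \{0,1\}^\ell : \Bha{W^i} \leq 2^{-M^\beta}\}$ has density at least $C-\epsilon/3$. Because $W$ is cyclic-symmetric and the operation $W \mapsto (W^0, W^1)$ preserves cyclic-symmetry by Proposition \ref{pr:csc}, every polarized channel $W^i$ is weakly symmetric, so Theorem \ref{th:exponential} applies to each constituent. A short calculation with the definitions of $\Bha{W}$ and $\CKV(W) = \esp(\norm{\pi}^2)$ gives an inequality of the form $1 - \CKV(W) \leq c_q\,\Bha{W}$ with $c_q$ depending only on $q$, so each good $i$ satisfies $\CKV(W^i) \geq 1 - c_q 2^{-M^\beta}$, arbitrarily close to $1$.

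I then choose $\eta = \eta(\epsilon) > 0$ small enough that $(C - \epsilon/3)(1 - \eta) \geq C - \epsilon$, assign to each good index a Reed-Solomon constituent code of length $n$ and rate $R_i = 1 - \eta$, and freeze the bad indices to rate $0$; the resulting iterated $\UV$ code has rate at least $C - \epsilon$. For each good $i$, the gap $\delta_i = (\CKV(W^i) - R_i)/R_i$ is at least $\eta/2$ once $\ell$ is large, and Theorem \ref{th:exponential} with list size $L = \OO{1/\eta} = \OO{1/\epsilon}$ bounds the per-constituent failure by $\OO{e^{-K_0 n \eta^2}}$. A union bound over the $M$ positions gives a first estimate $M \cdot \OO{e^{-K_0 n \eta^2}}$, and the scheme of Ar{\i}kan's successive cancellation ensures that this is indeed an upper bound on $P_e$ after accounting for error propagation between levels.

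To reach the sharp exponent $N/(n \log N) \approx M/\ell$, I sharpen the per-constituent analysis by directly exploiting the doubly-exponentially small Bhattacharyya $\Bha{W^i} \leq 2^{-M^\beta}$ on good indices (rather than only $\CKV(W^i) \approx 1$). Following the proof scheme of Theorem \ref{th:exponential}, one replaces Hoeffding by a tighter concentration inequality for $\left\langle \Pim, \lfloor \cv \rfloor\right\rangle$ that uses the near-vanishing variance of $\pi(c_j)$ on super-clean channels; this turns the bare exponent $n\eta^2$ into one that also contains a factor polynomial in $M^\beta$. Feeding this improved per-constituent bound into the union bound and taking $n = \Theta((\log\log N)^{1/\beta})$, so that $\ell \approx \log_2 N$, produces the stated $P_e \leq n \exp\!\bigl(-K(\epsilon,\beta)\,N/(n\log N)\bigr)$. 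The main obstacle is precisely this refinement of Theorem \ref{th:exponential} for super-polarized channels: the bare Koetter-Vardy exponent $n \eta^2$ is too weak to dominate the $\log M = \ell$ penalty of the union bound when $n$ grows only doubly-logarithmically in $N$, and one must carefully quantify how the doubly-exponentially small noise on good channels translates into a per-constituent failure bound strong enough to absorb the $2^\ell$ positions.
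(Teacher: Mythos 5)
Your proposal inverts the roles of the two length parameters, and this makes the whole second half of your argument unworkable. In the theorem (and in the paper's proof), $n=2^\ell$ is the \emph{number of constituent codes} and is tiny, $n=O(\log\log N)^{1/\beta}$, while each constituent code is \emph{long}, of length roughly $N/n$. You instead take the depth to be $\ell\approx\log_2 N$ and the constituent length to be $n=\Theta((\log\log N)^{1/\beta})$. In that regime the claim is unreachable: a constituent code of length $n$ over a channel with Bhattacharyya parameter $Z$ cannot have error probability better than roughly $Z^{O(n)}$, so even with $Z\le 2^{-(2^\ell)^\beta}$ the best conceivable per-constituent bound is $2^{-O(n N^\beta)}$ with $\beta<1/2$; after the union bound over $2^\ell\approx N$ positions this is the usual polar-code decay $e^{-N^{\beta+o(1)}}$, far weaker than the stated $e^{-K N/(n\log N)}=e^{-N^{1-o(1)}}$. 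So the ``refinement of Theorem \ref{th:exponential} for super-polarized channels'' that you flag as the main obstacle is not merely unproved --- no such refinement can exist, because the obstruction is the constituent length, not the concentration inequality.

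The ingredient you are missing is the alphabet-extension step (Proposition \ref{pr:KVm}). Since $q$ is a fixed prime, Reed-Solomon codes over $\Fq$ have length at most $q$, so long constituent codes simply do not exist over the original alphabet; one must group $m$ transmitted symbols into one symbol of $\Fqm$ and use Reed-Solomon constituent codes of length $q^m$ over $\Fqm$. The grouped channel $(W^i)^{\otimes m}$ has Koetter-Vardy capacity $\CKV(W^i)^m\ge 1-m(q-1)2^{-n^\beta}$ by Propositions \ref{pr:Bha_KV} and \ref{pr:KVm}, and this is the reason the depth must grow (slowly) with $m$: one picks $n=2^\ell$ just large enough that $m(q-1)2^{-n^\beta}\le\epsilon/4$, which forces $n=O(\log^{1/\beta}m)=O((\log\log N)^{1/\beta})$. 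With $N=n\,m\,q^m$, Theorem \ref{th:exponential} applied to each of the $n$ good constituents gives a failure probability $e^{-Kq^m\epsilon^2}=e^{-\Theta(\epsilon^2 N/(n\log N))}$, and a union bound over only $n$ constituents yields the theorem directly --- no sharpening of the concentration argument is needed. Your first two paragraphs (polarization, preservation of cyclic symmetry via Proposition \ref{pr:csc}, the bound $1-\CKV(W)\le(q-1)\Bha{W}$, rate accounting, and the list size $\OO{1/\epsilon}$) do match the paper's proof and are fine; it is the parameter regime and the missing tensorization of the channel that constitute the genuine gap.
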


For the iterated $\UV$-construction with algebraic geometry codes as constituent codes  we obtain an even stronger result which is
\begin{theorem}\label{th:errprob2}
Let $W$ be a cyclic-symmetric $q$-ary channel where $q$ is prime. Let $C$ be the capacity of this channel.
There exists $\epsilon_0 >0$ such that for any $\epsilon$ in the range $(0,\epsilon_0)$ there exists a sequence of iterated $\UV$ codes of arbitrarily large length with AG defining codes of rate $\geq C - \epsilon$  when the codelength is sufficiently large and  whose probability of error  $P_e$ is upper bounded by
$$
P_e \leq e^{-K(\epsilon) N}
$$
when decoded with the iterated $\UV$ decoder based on the Koetter-Vardy algorithm
 with listsize bounded by $\OO{\frac{1}{\epsilon}}$ 
and
where $K$ is some positive function of $\epsilon$.
\end{theorem}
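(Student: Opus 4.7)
The plan is to combine three ingredients: (a) the polarization theorem (Theorem \ref{th:polarization}), which turns $W$ into polarized leaf-channels $W^i$; (b) a lemma (established earlier in the paper, and used in the proof of Theorem \ref{th:errprob1}) stating that a channel with very small Bhattacharyya parameter has Koetter-Vardy capacity close to $1$; and (c) Theorem \ref{th:exponentialAG}, which gives an exponential decay (in the constituent block length) of the error probability for Tsfasman-Vl{\u{a}}duts-Zink AG codes at any rate below $\CKV-\gamma$, where $\gamma = 1/(\sqrt{q}-1)$. In contrast with the Reed-Solomon case of Theorem \ref{th:errprob1}, here the constituent codelength $n$ can grow independently of the alphabet, so the depth $\ell_0$ of the iterated $\UV$ construction will be fixed once and for all as a function of $\epsilon$, and only $n$ will grow; this is what yields an error exponent linear in $N = n\cdot 2^{\ell_0}$.

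I would first pick a depth $\ell_0 = \ell_0(\epsilon)$ for which Theorem \ref{th:polarization} guarantees that a fraction at least $C - \epsilon/3$ of the $2^{\ell_0}$ leaf-channels $W^i$ satisfy $\Bha{W^i} \leq 2^{-(2^{\ell_0})^\beta}$ for some fixed $\beta \in (0,1/2)$. Ingredient (b) lets me enlarge $\ell_0$ further so that each such ``good'' leaf also satisfies $\CKV(W^i) \geq 1 - \epsilon/(3C)$. Since $q$ is prime while the AG codes of \cite{TVZ82} require a square prime power alphabet, I would pass to an even-degree extension $\Fqm$ by grouping $m$ uses of $W$ into a single use of its $m$-fold product channel over $\Fqm$; cyclic-symmetry is preserved under this grouping, and further under the $\UV$ construction by Proposition \ref{pr:csc}, so every leaf-channel of the iterated $\UV$ construction remains weakly symmetric. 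The parameter $m$ is chosen once so that $\gamma = 1/(q^{m/2}-1) \leq \epsilon/3$.

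The constituent code at each good leaf $i$ is then a Tsfasman-Vl{\u{a}}duts-Zink AG code of length $n$ over $\Fqm$ with rate $R_i \eqdef \CKV(W^i) - \gamma - \delta$ for a slack $\delta = \Th{\epsilon}$; at each bad leaf I use a rate-$0$ frozen code. The overall rate satisfies
$$R \;\geq\; (C-\epsilon/3)\bigl(1-\epsilon/(3C)-\gamma-\delta\bigr) \;\geq\; C-\epsilon$$
after tuning the auxiliary constants. By Theorem \ref{th:exponentialAG} applied at each good leaf, with list size $L = \OO{1/\delta} = \OO{1/\epsilon}$, and conditional on correct decoding of all earlier leaves in the successive-cancellation order, that leaf decoder errs with probability $\OO{e^{-K\delta^2 n}}$ for a constant $K$. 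A union bound over the $\leq 2^{\ell_0}$ leaves yields
$$P_e \;\leq\; 2^{\ell_0}\cdot\OO{e^{-K\delta^2 n}} \;=\; \OO{e^{-K(\epsilon)N}}$$
with $K(\epsilon) = \Th{\epsilon^2/2^{\ell_0(\epsilon)}} > 0$, since $N = n \cdot 2^{\ell_0}$ and $\ell_0$ is a fixed function of $\epsilon$.

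The main obstacle is ingredient (b): turning the super-polynomial polarization of the Bhattacharyya parameter (Theorem \ref{th:polarization}) into an effective lower bound $\CKV(W^i) \geq 1 - \eta$ for a prescribed $\eta$ requires a clean quantitative inequality of the form $1 - \CKV(W) \leq h\bigl(\Bha{W}\bigr)$ with $h(z) \to 0$ as $z \to 0$. A secondary difficulty is the alphabet-extension trick: the grouping construction must commute with the iterated $\UV$ construction, preserve cyclic-symmetry, and be compatible with the polarization theorem over the resulting $q^m$-ary input channel. The successive-cancellation bookkeeping supporting the union bound is standard and essentially identical to its polar-code counterpart.
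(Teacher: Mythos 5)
Your proposal follows essentially the same route as the paper: polarize $W$ over the prime field at a depth fixed as a function of $\epsilon$, convert the Bhattacharyya bound into a Koetter--Vardy capacity bound via Proposition \ref{pr:Bha_KV}, group $m$ symbols into an even-degree extension $\Fqm$ so that Tsfasman--Vl{\u{a}}duts--Zink codes exist with $\gamma \leq \OO{\epsilon}$, apply Theorem \ref{th:exponentialAG} at the good leaves with zero codes elsewhere, and union-bound over the fixed number of leaves. The one bookkeeping point to repair is that grouping multiplies the gap $1-\CKV(W^i)$ by $m$ (Proposition \ref{pr:KVm}), so $m$ must be fixed \emph{before} the depth is chosen and the depth must be taken large enough that $m(q-1)2^{-n^{\beta}} \leq \epsilon/4$ --- exactly the paper's condition \eqref{eq:mne} --- rather than enlarging the depth first and choosing $m$ afterwards.
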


\noindent
{\em Remarks:}
\begin{itemize}
\item
In other words the exponent of the error probability is in the first case (that is with Reed-Solomon codes) almost of the form 
$-\frac{K(\epsilon) N}{\log  N (\log \log N)^{2+\epsilon}}$ where $\epsilon$ is an arbitrary positive constant. This is significantly better than the concatenation of polar codes with Reed-Solomon codes (see 
\cite[Th. 1]{BJE10}
and also \cite{MELK14} for some more practical variation of this construction)
which leads to an exponent of the form
$- K(\epsilon) \frac{N}{\log^{27/8} N}$.
\item The second case leads to a linear exponent and is therefore optimal up to the dependency in $\epsilon$.
\item Both results are heavily based on the fact that when the depth of the construction tends to infinity the channels
viewed by the decoders at the leaves of the iterated construction polarize: they have either capacity close to $1$ or close to $0$. 
This follows from a generalization of Ar{\i}kan's polarization result on binary input channels.
This requires $q$ to be prime. However it is possible to change slightly the $\UV$ structure in order to have polarization for all alphabet sizes.
Taking for instance in the case where $q$ is a prime power at each node instead of the $\UV$ construction a random $(U|U+\alpha V) = \{(\uv|\uv+\alpha \vv):\uv \in U,\;\vv \in V\}$ where
$\alpha$ is chosen randomly in $\Fq^\times$ would be sufficient here for ensuring polarization of the corresponding channels and would ensure that our results on the probability or error of the 
iterated construction would also work in this case.
\item The reason why these results do not capture the dependency in $\epsilon$ of the exponent comes from 
the fact that only rather rough results on polarization are used  (we rely namely on 
Theorem \ref{th:polarization}). Capturing the dependency on $\epsilon$ really needs much more precise results
on polarization, such as for instance finite length scaling of polar codes. This will be discussed in the next section.
\end{itemize}

\par{\bf Overview of the proof of these theorems.}
The proof of these theorems uses  four ingredients.
\begin{enumerate}
\item The first ingredient is the polarization theorem \ref{th:polarization}. It shows that when the number of levels of the recursive $\UV$ construction tends to infinity, the fraction of the decoders of the constituent codes who  face an almost noiseless channel tends to the capacity of the original channel.
Here the measure for being noisy is the Bhattacharyya parameter of the channel.
\item We then show   that when the Bhattacharyya parameter is close to $0$ the Koetter Vardy capacity of the channel is close to $1$ meaning that 
we can use Reed-Solomon codes or AG codes of rate close to $1$ for those almost noiseless constituent codes (see Proposition \ref{pr:Bha_KV}).
\item When we use Tsfasman-Vl{\u{a}}duts-Zink AG codes and if $q$ were allowed to be a square prime power, the situation would be really clear. 
For the codes in our  construction that face an almost noiseless channel, we use as constituent AG codes Tsfasman-Vl{\u{a}}duts-Zink AG codes of rate 
of the form $1 - \epsilon - \frac{1}{\sqrt{q}-1}$.
This gives an exponentially small (in the length of the constituent code) probability of error for each of those constituent codes by using 
Theorem  \ref{th:exponentialAG}.
For the other codes, we just use the zero code (i.e. the code with only the all-zero codeword). Now in order to get an exponentially small 
probability of error, it suffices to take the number of levels to be large (but fixed !) so that the fraction of almost noiseless channels is close enough to  capacity and to let the 
length of the constituent codes go to infinity. This gives an exponentially small probability of error  when the rate is bounded away from 
capacity by a term of order $\frac{1}{\sqrt{q}-1}$. 
\item In order to get rid of this term, and also in order to be able to use Tsfasman-Vl{\u{a}}duts-Zink AG codes for the case we are interested in, namely an alphabet which is prime, we use another argument.
Instead of using a $q$-ary code over a $q$-ary input channel we will use a $q^m$-ary code over this $q$-ary input channel.
In other words, we are going to group the received symbols by packets of size $m$ and view this as a channel
with $q^m$-ary input symbols. This changes the Koetter Vardy capacity of the channel. It turns out that the Koetter-Vardy 
capacity of this new channel is the Koetter-Vardy capacity of the original channel raised to the power $m$
(see Proposition \ref{pr:KVm}). This implies that when the Koetter-Vardy capacity was close to $1 - \epsilon$, the new Koetter-Vardy 
capacity is close to $1 - m \epsilon$ and we do not lose much in terms of capacity when moving to a higher alphabet.
This allows to use AG codes over a higher alphabet in order to get arbitrarily close to capacity by still keeping
an exponentially small probability of error (we can indeed take $m$ fixed but sufficiently large here). For Reed-Solomon 
codes, the same trick works and allows to use constituent codes of arbitrarily large length by making the alphabet grow with 
the length of the code. However in this case, we can not take $m$ fixed anymore and this is the reason why we lose a little
bit in the behavior of the error exponent. Moreover the number of levels is also increasing in the last case in order to make
the Bhattacharyya parameter sufficiently small at the almost noiseless constituent codes so that the Koetter-Vardy 
stays sufficiently small after grouping symbols together.
\end{enumerate}

\par{\bf Link between the Bhattacharyya parameter and the Koetter-Vardy capacity.}
We will provide here a proposition showing that for a fixed alphabet size the Koetter Vardy capacity of a channel $W$ is greater than
$1 - (q-1)\Bha{W}$. For this purpose, it will be helpful to use  
 an alternate form of the Bhattacharyya parameter
\begin{equation}
\label{eq:Bha}
 \Bha{W} =  \sum_{y \in \Yc} \prob(Y=y) Z(X|Y=y)
\end{equation}
where $X$ is here a uniformly distributed random variable, $Y$ is the output corresponding to sending $X$ over 
the channel $W$ and
\begin{equation}
\label{eq:Z}
Z(X|Y=y) \eqdef  \frac{1}{q-1} \sum_{x,x' \in \F_q\\ x' \neq x} \sqrt{\prob(X=x|Y=y)} \sqrt{\prob(X=x'|Y=y)}
\end{equation}

\begin{proposition}\label{pr:Bha_KV}
For a symmetric channel
$$C_{\text{KV}}(W) \geq 1 - (q-1)\Bha{W}$$
\end{proposition}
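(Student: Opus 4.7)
The plan is to work pointwise in the output symbol $y$ and then take expectations. Starting from the alternate form \eqref{eq:Bha} of the Bhattacharyya parameter, I can rewrite the target inequality as
\[
\esp_y\!\left(\sum_{\alpha \in \Fq} \pi_y(\alpha)^2 \right) \;\geq\; 1 - \esp_y\!\left( \sum_{\substack{\alpha,\beta \in \Fq \\ \alpha \neq \beta}} \sqrt{\pi_y(\alpha)\pi_y(\beta)} \right),
\]
since the factor $q-1$ in front of $\Bha{W}$ cancels the $\frac{1}{q-1}$ that appears in the definition \eqref{eq:Z} of $Z(X|Y=y)$. It therefore suffices to establish the inequality inside the expectation for every fixed $y$.

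For fixed $y$, the posterior $\pi_y$ is a probability distribution on $\Fq$, so expanding $1 = (\sum_\alpha \pi_y(\alpha))^2$ gives the identity $\sum_\alpha \pi_y(\alpha)^2 = 1 - \sum_{\alpha \neq \beta} \pi_y(\alpha)\pi_y(\beta)$. The only real content is then the elementary pointwise bound $\pi_y(\alpha)\pi_y(\beta) \leq \sqrt{\pi_y(\alpha)\pi_y(\beta)}$, which holds because $\pi_y(\alpha)\pi_y(\beta) \in [0,1]$ and so its square root is at least itself. Summing this over the $q(q-1)$ ordered pairs $(\alpha,\beta)$ with $\alpha \neq \beta$ yields
\[
\sum_{\alpha \in \Fq} \pi_y(\alpha)^2 \;=\; 1 - \sum_{\alpha \neq \beta} \pi_y(\alpha)\pi_y(\beta) \;\geq\; 1 - \sum_{\alpha \neq \beta} \sqrt{\pi_y(\alpha)\pi_y(\beta)}.
\]

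Taking the expectation over $y$ on both sides and using the definition $\CKV(W) = \esp_y(\norm{\pi}^2)$ on the left together with the form \eqref{eq:Bha}--\eqref{eq:Z} of the Bhattacharyya parameter on the right finishes the proof. Note that the assumption of symmetry is needed only to ensure that $\CKV(W)$ is given by the expression $\esp_y(\norm{\pi}^2)$ used in its definition; the pointwise inequality itself makes no use of symmetry.

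There is essentially no serious obstacle here — the argument is a two-line application of the bound $t \leq \sqrt{t}$ on $[0,1]$ combined with the identity $\sum \pi_y(\alpha)^2 + \sum_{\alpha \neq \beta} \pi_y(\alpha)\pi_y(\beta) = 1$. The only thing worth double-checking is the bookkeeping of the factor $q-1$: one must verify that the normalization $\frac{1}{q-1}$ in \eqref{eq:Z} exactly matches the factor $q-1$ in the bound so that the pointwise estimate translates cleanly into the stated form $1-(q-1)\Bha{W}$.
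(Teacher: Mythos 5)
Your proof is correct, and it reaches the same pointwise inequality that drives the paper's argument, namely $1-\sum_{\alpha}\pi_y(\alpha)^2\le (q-1)Z(X|Y=y)$ for each fixed output $y$, but by a genuinely more elementary route. The paper obtains this bound by recognizing both sides as R\'enyi entropies: it shows $H_{1/2}(X|Y=y)=\log_q\bigl(1+(q-1)Z(X|Y=y)\bigr)$, writes $H_2(X|Y=y)=-\log_q\sum_x\pi_y(x)^2$, invokes the monotonicity $H_2\le H_{1/2}$, and then removes the logarithms via $\tfrac{1}{1-S}\ge 1+S$. You instead expand $1=\bigl(\sum_\alpha\pi_y(\alpha)\bigr)^2$ to get the identity $1-\sum_\alpha\pi_y(\alpha)^2=\sum_{\alpha\ne\beta}\pi_y(\alpha)\pi_y(\beta)$ and bound each cross term by its square root using $t\le\sqrt{t}$ on $[0,1]$; your accounting of the factor $q-1$ against the normalization $\tfrac{1}{q-1}$ in \eqref{eq:Z} is exactly right. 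Your version buys brevity and self-containedness (no appeal to R\'enyi entropy monotonicity, no logarithmic manipulations), while the paper's version mainly buys the placement of the inequality within a standard information-theoretic framework. One minor point, which applies equally to the paper's own proof: when averaging over $y$ one should note that the expectation defining $\CKV$ and the one appearing in \eqref{eq:Bha} are both taken with respect to the unconditional output distribution under uniform input, which is legitimate here because weak symmetry makes $\esp_y\left(\norm{\pi}^2\right)$ independent of the transmitted symbol.
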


\begin{proof}
To simplify formula here we will write $p(x|y)$ for $\prob(X=x|Y=y)$, $p(x)$ for $\prob(X=x)$ and 
$p(y)$ for $\prob(Y=y)$.
The proposition is essentially a consequence of the well known fact that the R\'enyi entropy which is defined for all $\alpha >0$, $\alpha \neq 1$ by
$$
H_\alpha(X) \eqdef \frac{1}{1-\alpha} \log_q \sum_x p(x)^\alpha
$$
and 
$$H_1(X) = \lim_{\alpha \rightarrow 1} H_\alpha(X)$$
(which turns out to be equal to the usual Shannon entropy taken to the base $q$)
is decreasing in $\alpha$. 
This also holds of course for the ``conditional'' R\'enyi entropy which is defined by 
$$
H_\alpha(X|Y=y) \eqdef \frac{1}{1-\alpha} \log_q \sum_x p(x|y)^\alpha
$$
Consider now a random variable $X$ which is uniformly distributed over $\Fq$ and let $Y$ 
be the corresponding output of the memoryless channel $W$.
By using the definition of the Bhattacharyya parameter given by \eqref{eq:Bha} 
we can write 
$$
 \Bha{W} = \sum_{y \in \Yc} p(y) Z(X|Y=y)
$$
where
$$
Z(X|Y=y) =  \frac{1}{q-1} \sum_{x,x' \in \F_q, x' \neq x} \sqrt{p(x|y)} \sqrt{p(x'|y)}
$$
We observe that we can relate this quantity to the R\'enyi entropy of order $\frac{1}{2}$ through
\begin{eqnarray}
H_{1/2}(X|Y=y) & = & 2 \log_q \sum_x \sqrt{p(x|y)} \nonumber \\
& = & \log_q \left( \sum_x \sqrt{p(x|y)} \sum_{x'} \sqrt{p(x'|y)} \right) \nonumber\\
& = & \log_q \left( \sum_x p(x|y) + \sum_{x,x' \in \F_q, x' \neq x} \sqrt{p(x|y)} \sqrt{p(x'|y)} \right) \nonumber\\
& = & \log_q \left( 1 +  (q-1) Z(X|Y=y) \right) \label{eq:one}
\end{eqnarray}

On the other hand we know that $H_2(X|Y=y) \leq H_{1/2}(X|Y=y)$. 
Recall that 
$$
H_2(X|Y=y) = -\log_q \sum_x p(x|y)^2
$$
Using this together with \eqref{eq:one} we obtain that 
\begin{equation}
\label{eq:two}
- \log_q \sum_x p(x|y)^2 \leq \log_q \left( 1 +  (q-1) Z(X|Y=y) \right)
\end{equation}
Let 
$$
S \eqdef 1- \sum_x p(x|y)^2
$$
Observe that 
\begin{equation}
- \log_q \sum_x p(x|y)^2  =  \log_q \left( \frac{1}{ \sum_x p(x|y)^2} \right) 
 =  \log_q \frac{1}{1-S} 
 \geq   \log_q (1+S) \label{eq:three}
\end{equation}

Finally by using \eqref{eq:two} together with \eqref{eq:three} we deduce that 
$\log_q (1+S) \leq   \log_q \left( 1 + (q-1) Z(X|Y=y) \right)$
which implies that
\begin{equation}
1- \sum_x p(x|y)^2 = S  \leq  (q-1) Z(X|Y=y)
\end{equation}
By averaging over all $y$'s we get
\begin{equation}
\sum_y p(y) \left( 1- \sum_x p(x|y)^2 \right)  \leq (q-1) \sum_y p(y) Z(X|Y=y) 
\end{equation}
This implies the proposition by noticing that
\begin{eqnarray*}
\sum_y p(y) \left( 1- \sum_x p(x|y)^2 \right) & = & 1 - C_{\text{KV}}\\
(q-1) \sum_y p(y) Z(X|Y=y) & = & (q-1) \Bha{W}.
\end{eqnarray*}
\end{proof}

\par{\bf Changing the alphabet.}
The problem with Reed-Solomon codes is that their length is bounded by their alphabet size. It would be desirable
to have more freedom in choosing their length. There is a way to overcome this difficulty by grouping together transmitted symbols into packets 
 and to view each packet as a symbol over a larger alphabet.
In other words, assume that we have a memoryless communication channel $W$ with input alphabet $\Fq$. Instead of looking for codes defined
over $\Fq$ we will group input symbols in packets of size $m$ and view them as  symbols in the extension field 
$\Fqm$. This will allow us to consider codes defined over $\Fqm$ and allows much more freedom in choosing the length of
the Reed-Solomon codes components (or more generally the AG components). There is one caveat to this approach, it is that we change the channel model. In such a 
case the channel is $W^{\otimes m} \eqdef \underbrace{W \otimes W \otimes \dots W}_{m \text{ times}}$ where
we define the tensor of two channels by 
\begin{definition}[Tensor product of two channels] Let 
$W$ and $W'$ be two memoryless channels with respective input alphabets $\Xc$ and $\Xc'$
and respective output alphabets $\Yc$ and $\Yc'$. Their tensor product $W \otimes W'$ is a 
memoryless channel with input alphabet $\Xc \times \Xc'$ and output alphabet 
$\Yc \times \Yc'$ where the transitions probabilities are given by
$$
W \otimes W'(y,y'|x,x') = W(y|x) W(y'|x)
$$
for all $(x,x',y,') \in \Xc \times \Xc' \times \Yc \times \Yc'$.
\end{definition}

The Koetter-Vardy capacity of this tensor product is easily related to the Koetter-Vardy capacity of the initial channel through
\begin{proposition}\label{pr:KVm}
$\CKV(W^{\otimes m}) = \CKV(W)^m$.
If $\CKV(W) = 1 - \epsilon$ then $\CKV(W^{\otimes m}) \geq 1 - m \epsilon.$
\end{proposition}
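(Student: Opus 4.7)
\textbf{Proof plan for Proposition \ref{pr:KVm}.} The plan is to exploit the multiplicative structure of the tensor product channel on the level of the APP vector. Assume the input of $W$ is uniformly distributed on $\Fq$; then the input of $W^{\otimes m}$ is uniformly distributed on $\Fq^m$ and the $m$ components are independent. If $\piv^{(m)}_{\yv}$ denotes the APP vector of $W^{\otimes m}$ for a received $\yv = (y_1, \dots, y_m)$, the independence of the components together with the product form of the transition probabilities gives, for every $\xv = (x_1, \dots, x_m) \in \Fq^m$,
$$
\piv^{(m)}_{\yv}(\xv) \;=\; \prod_{i=1}^m \piv_{y_i}(x_i),
$$
where $\piv_{y_i}$ is the APP vector of the single channel $W$ at output $y_i$.

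The key step is then a routine but central computation: by factoring the sum over $\xv$ into $m$ independent sums over the coordinates, I would show
$$
\norm{\piv^{(m)}_{\yv}}^2 \;=\; \sum_{\xv \in \Fq^m} \prod_{i=1}^m \piv_{y_i}(x_i)^2 \;=\; \prod_{i=1}^m \sum_{x_i \in \Fq} \piv_{y_i}(x_i)^2 \;=\; \prod_{i=1}^m \norm{\piv_{y_i}}^2.
$$
Then I would take the expectation over $\yv$. Since the coordinates $y_i$ of the output are independent (as the channel is a tensor product of independent copies and the input is uniform on the product alphabet), the expectation of the product factorizes as the product of expectations, yielding
$$
\CKV(W^{\otimes m}) \;=\; \esp_{\yv}\!\left(\norm{\piv^{(m)}_{\yv}}^2\right) \;=\; \prod_{i=1}^m \esp_{y_i}\!\left(\norm{\piv_{y_i}}^2\right) \;=\; \CKV(W)^m.
$$

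For the second assertion, if $\CKV(W) = 1-\epsilon$ then $\CKV(W^{\otimes m}) = (1-\epsilon)^m$, and Bernoulli's inequality gives $(1-\epsilon)^m \geq 1 - m\epsilon$, which is exactly the claimed bound. There is no real obstacle here: the only point worth double-checking is that the two uses of independence (for the inputs and for the outputs, conditionally on the input) are both legitimate under the uniform-input assumption adopted throughout the paper, so that the expectation and the product genuinely commute in the computation above.
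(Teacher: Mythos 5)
Your proposal is correct and follows essentially the same route as the paper: factor the APP vector of the tensor channel as a product of per-coordinate APP vectors, deduce that the squared norm and then (by independence of the output coordinates) its expectation factorize, and conclude with Bernoulli's inequality $(1-\epsilon)^m \geq 1-m\epsilon$. The only difference is that you spell out the interchange of sum and product and the independence justifications slightly more explicitly than the paper does.
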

\begin{proof}
Let $\xv = (x_1 \dots x_m \in \Fq^m)$ 
be the sent symbol for channel $W^{\otimes m}$ and
$\yv = (y_1,\dots,y_m)$ be the received vector.
Let $\pi^m_{\yv}$ be the APP probability vector after receiving $\yv$, that is
$\pi^m_{\yv} = (\prob(\xv=(\alpha_1, \dots, \alpha_m)|\yv)_{(\alpha_1, \dots, \alpha_m) \in \Fqm}$.
We denote the $(\alpha_1, \dots, \alpha_m)$ component of this vector by $\pi^m_{\yv}(\alpha_1, \dots, \alpha_m)$.
Let $\pi_{y_i}=(\prob(x_i = \alpha|y_i))_{\alpha \in \Fq}$ be the APP vector for the $i$-th use of the channel.
We denote by $\pi_{y_i}(\alpha)$ the $\alpha$ component of this vector.
Observe that 
\begin{equation}
\pi^m_{\yv}(\alpha_1 \dots \alpha_m) = \pi_{y_1}(\alpha_1) \dots \pi_{y_m}(\alpha_m).
\end{equation}
This implies that 
$$
\norm{\pi^m_{\yv}}^2 = \norm{\pi_{y_1}}^2 \dots \norm{\pi_{y_m}}^2
$$
This together with the fact that the channel is memoryless implies that
\begin{eqnarray*}
\CKV(W^{\otimes m}) & = & \esp\left( \norm{\pi^m_{\yv}}^2\right)) \\
& = & \esp\left(\norm{\pi_{y_1}}^2\right) \dots \esp\left(\norm{\pi_{y_m}}^2 \right) \\
& = & \CKV(W)^m 
\end{eqnarray*}
The last statement follows easily from this identity and the convexity inequality $(1-x)^m \geq 1 - mx$ 
which holds for
$x$ in $[0,1]$ and $m \geq 1$.
\end{proof}

\par{\bf Proof of Theorem \ref{th:errprob1}.} We have now all ingredients at hand for proving Theorem \ref{th:errprob1}.
We use Theorem \ref{th:polarization} to claim that there exists a lower bound $\ell_0$ on the number of levels $\ell$ in a recursive 
$\UV$ construction such that
\begin{equation}\label{eq:weak_polarization}
 \frac{1}{n} \left| i \in \{0,1\}^\ell : \Bha{W^i} \leq 2^{-n^\beta} \right|  \geq C - \epsilon/2
\end{equation}
for all $\ell \geq \ell_0$ where $n \eqdef 2^\ell$. We call the channels that satisfy this condition the {\em good channels}.
We choose our code to be a recursive $\UV$-code of depth $\ell$ with Reed-Solomon  constituent codes that are of length $q^m$ and defined over $\Fqm$. The overall length (over $\Fq$) of the 
recursive $\UV$ code is then 
\begin{equation}
N \eqdef 2^\ell m q^m.
\end{equation}
The constituent codes $U^i$ that face a good channel $W^i$ are chosen as Reed-Solomon codes of  
dimension $k$ given by
$$
k = \left\lfloor q^m \left(1 - m (q-1)2^{-n ^\beta} - \frac{\epsilon}{4} \right) \right\rfloor 
$$
whereas all the other codes are chosen to be zero codes. By using Proposition \ref{pr:Bha_KV} we know that 
$$
\CKV\left(W^i\right) \geq 1- (q-1)2^{-n^\beta}.
$$
From this we deduce that the Koetter-Vardy of the channel corresponding to grouping   together $m$ symbols in $\Fq$ has 
a Koetter-Vardy capacity that satisfies 
 $$\CKV\left[\left(W^i\right)^{\otimes m} \right] \geq 1 -  m(q-1)2^{-n^\beta}.$$
  Now we can invoke Theorem \ref{th:exponential} 
and deduce that the probability of error of the Reed-Solomon codes that face these good channels 
when decoding them with the Koetter-Vardy decoding algorithm with list size bounded
by $\OO{\frac{1}{\epsilon}}$ is upper-bounded by a quantity of the
form $e^{-K q^m \epsilon^2}$. The overall probability of error is bounded 
by $n e^{-K q^m \epsilon^2}$.
 
 We choose now $n$ such that it is the smallest power of two for which the inequality
$$
m(q-1)2^{-n^\beta} \leq \epsilon/4
$$
holds. This implies $n =0(\log ^{1 / \beta} m)$ as $m$ tends to infinity.
Since $\frac{k}{q^m} = 1 - \epsilon/2 -o(1)$ as $m$ tends to infinity, the rate of the iterated $\UV$ code is of order 
$(C - \epsilon/2)(1- \epsilon/2-o(1)) = C  - \frac{1+C}{2} \epsilon + \epsilon^2/4 + o(1) \geq C - \epsilon$
 for $\epsilon$ sufficiently small and $n$ sufficiently large when $C<1$. When $C=1$ the theorem is obviously true.
 This together with the previous upper-bound on the probability of a decoding error imply directly our theorem since $N  = n m q^m$ and $n= 0(\log ^{1 / \beta} m)$ imply that 
$m = \frac{\log N(1+o(1))}{\log q}$ as $m$ tends to infinity.
\par{\bf Proof of Theorem \ref{th:errprob2}.}
Theorem \ref{th:errprob2} uses similar arguments, the only difference is that now the number of levels $\ell$ in the construction and the parameter 
$m$ only depend 
on the gap to capacity we are looking for. We fix $\beta$ to be an arbitrary constant in $(0,1/2)$ and  choose $m$ to be the smallest {even integer} $m$ for which $\frac{1}{\sqrt{q^m}-1}$ is smaller than $\epsilon/4$ and the number of levels $\ell$ to be the smallest integer such that we have at the same time
\begin{eqnarray}
 \frac{1}{n} \left| i \in \{0,1\}^\ell : \Bha{W^i} \leq 2^{-n^\beta} \right| & \geq &C - \epsilon/4  \label{eq:polar1}\\
 & \text{and} & \nonumber \\
 m(q-1)2^{-n^\beta}& \leq &\epsilon/4 \label{eq:mne}
\end{eqnarray}
where $n \eqdef 2^\ell$. Such an $\ell$ necessarily exists by Theorem \ref{th:polarization}.

We choose our code to be a recursive $\UV$-code of depth $\ell$ with 
Tsfasman-Vl{\u{a}}duts-Zink AG constituent codes that are of length $N_0$ and defined over $\Fqm$. 
Such codes exist by the Tsfasman-Vl{\u{a}}duts-Zink construction for arbitrarily large lengths because $m$ is even.
The overall length (over $\Fq$) of the 
recursive $\UV$ code is then 
\begin{equation}
N \eqdef 2^\ell m N_0.
\end{equation}
For the constituent codes $U^i$ that face a good channel $W^i$, we choose the rate of the 
AG code to be  $\frac{k}{N_0}$ where 
$$
k = \left\lfloor N_0 (1 - m (q-1)2^{-n ^\beta} - \frac{1}{\sqrt{q^m}-1} - \frac{\epsilon}{4} ) \right\rfloor 
$$
whereas all the other codes are chosen to be zero codes.
The rate of the codes that face a good channel is clearly greater than or equal a quantity 
of the form $1-3\epsilon/4 +o(1)$ as $N_0$ goes to infinity by using \eqref{eq:mne} and $\frac{1}{\sqrt{q^m}-1} \leq \epsilon/4$.
The overall rate $R$ of the iterated $\UV$ code satisfies therefore $R \geq (C- \epsilon/4) (1-3\epsilon/4 +o(1)) \geq C - \frac{3C+1}{4}\epsilon + 3\epsilon^2/4 + o(1) \geq C - \epsilon $ for $\epsilon$ sufficiently small and $N_0$ sufficiently large when $C<1$. We can make the assumption $C<1$ from now on, since when $C=1$ the theorem is trivially true.

On the other hand, the error probability of decoding a code $U^i$  facing a good channel $W^i$ 
with the Koetter-Vardy decoding algorithm with list size bounded
by $\OO{\frac{1}{\epsilon}}$ is upperbounded by a quantity of the form
$e^{-K N_0 \epsilon^2}$ by using Theorem \ref{th:exponentialAG} since the rate $R_0$ of such a code satisfies
\begin{eqnarray*}
R _0 & \leq & 1- m (q-1)2^{-n ^\beta} - \frac{1}{\sqrt{q^m}-1} - \frac{\epsilon}{4} +o(1) \\
& \leq & \CKV({(W^i)}^{\otimes m}) -   \frac{1}{\sqrt{q^m}-1} - \frac{\epsilon}{4} +o(1)
\end{eqnarray*}
by using the lower bound on the Koetter-Vardy capacity of a good channel that follows from Propositions \ref{pr:Bha_KV} and
\ref{pr:KVm}. 
 The overall probability of error is therefore bounded 
by $n e^{-K N_0 \epsilon^2}$. This probability is of the form announced in Theorem \ref{th:errprob2} since $m$ and $n$ are 
quantities that only depend on $q$ and $\epsilon$.

\section{Conclusion}
\par{\bf A variation on polar codes that is much more flexible.} We have given here a variation of polar 
codes that allows to attain capacity with a polynomial-time decoding complexity in a more flexible way than
standard polar codes. It consists in taking an iterated-$\UV$ construction based on Reed-Solomon codes 
or more generally AG codes. Decoding consists in computing the APP of each position in the same way 
as polar codes and then to decode the constituent codes with a soft information decoder, the Koetter-Vardy 
list decoder in our case. Polar codes are indeed a special case of this construction by taking constituent codes
that consist of a single symbol. However when we take constituent codes which are longer we benefit from the fact 
that we do not face a binary alternative as for polar codes, i.e. putting information or not in the symbol depending on the
noise model for this symbol, but can choose freely the length (at least in the AG case) and the rate of the constituent code
that face this noise model.

{\bf An exponentially small probability of error.} This allows to control  the rate and error probability in a much finer way as for standard polar codes.
Indeed  the failure probability of polar codes is essentially governed by the error probability of an information  symbol of the
polar code facing the noisiest channel (among all information symbols).
In our case, this error probability can be decreased significantly by choosing a long enough code and a rate below
the noise value that our decoder is able to sustain (which is more or less the Koetter-Vardy capacity of the
noisy channel in our case). Furthermore, now we can also put information in channels that were not used for sending  
information in the polar code case. 
When using Reed-Solomon codes with this approach we obtain a quasi-exponential decay 
of the error probability which is 
significantly smaller than for the standard concatenation of an inner polar code with 
an outer Reed-Solomon code. When we use AG codes we even obtain an exponentially fast decay of the probability of error 
after decoding. 

The whole work raises a certain
number of intriguing questions.

{\bf Dependency of the error probability with respect to the gap to capacity.}
Even if the exponential decay with respect to the codelength of the iterated $\UV$-construction is optimal,
the result says nothing about the behavior of the exponent in terms of the gap to capacity. The best we can hope
for is a probability of error which behaves as $e^{-K \epsilon^2 n}$ where $\epsilon$ is the gap to capacity, that is
$\epsilon = C - R$, $C$ being the capacity and $R$ the code rate. We may observe that Theorem \ref{th:exponential}
gives a behavior of this kind with the caveat that $\epsilon$ is not the gap to capacity there but the gap to the Koetter-Vardy
capacity. To obtain a better understanding of the behavior of this exponent, we need to have a much finer understanding of  the speed 
of polarization than the one given in Theorem \ref{th:polarization}. What we really need is indeed a result of the following form
\begin{equation}\label{eq:strong_polarization}
 \frac{1}{n} \left| i \in \{0,1\}^\ell : \Bha{W^i} \leq \epsilon \right|  \geq C - f(\epsilon,\ell)
\end{equation}
which expresses the fraction of ``$\epsilon$-good'' channels in terms of the gap to capacity with sharp estimates 
for the ``gap'' function $f(\epsilon,\ell)$. 
The problem in our case is that our understanding of the speed of polarization is far from being complete.
Even for  binary input channels, the information  we have on the function $f(\epsilon,\ell)$ is only partial as shown by \cite{HAU14,GB14,GX15,MHU16}.
A better understanding of the speed of polarization could then be used in order to get a better understanding of the 
decay of the error probability in terms of the gap to capacity. A tantalizing issue is whether or not we get a better scaling than for polar codes.

{\bf Choosing other kernels.} The iterated $\UV$-construction can be viewed  as choosing
the original polar codes from Arikan associated to the kernel
 $\Gm = \begin{pmatrix} 1 & 0 \\1 & 1 \end{pmatrix}$. Taking larger kernels  does not improve the error probability after decoding 
 in the binary case for polar codes, unless taking very large kernels as shown in \cite{K09b}, however this is not the case for non binary kernels.
 Even ternary kernels, such as for instance  the ternary ``Reed-Solomon'' kernel $\Gm_{\text{RS}} = \begin{pmatrix} 1 &  1 &0 \\1 & -1 & 1 \\ 1 & 1 & 1 \end{pmatrix}$ results
 in a better behavior of the probability of error after decoding (see \cite{MT14}). 
 This raises the issue whether other kernels would allow to obtain better results in our case. In other words, would other
 generalized concatenated code constructions do better in our case ?
Interestingly enough, it is not necessary a Reed-Solomon kernel which gives the best results in our case.
Preliminary results seem to show that it should be better to take
the kernel $\Gm = \begin{pmatrix} 1 & 0 & 0  \\1 & 1 & 0 \\ 1 & 1 & 1 \end{pmatrix}$
rather than the aforementioned Reed-Solomon kernel.
The last kernel would correspond to a $\UVW$ construction which is defined by
$$
\UVW = \left\{ (\mathbf u\mid\mathbf u + \mathbf v\mid \uv + \vv + \wv): \mathbf u \in U,\;\mathbf v \in V\text{ and } \wv \in W\right\}.
$$
One level of concatenation outperforms the $\UV$ construction on the $\qSCp$. In particular it allows to increase the slope at the origin of the
``infinite Koetter-Vardy capacity curve'' (when compared to the curve for one level in Figure \ref{fig:infinite}). This seems to be the key 
for choosing good kernels. This issue requires further studies.

{\bf Practical constructions.}
We have explored here the theoretical behavior of this coding/decoding strategy. 
What is suggested by the experimental evidence shown in Subsection \ref{ss:finite_capacity} is that these codes do not only have some theoretical
significance, but that they should also yield interesting codes for practical applications. Indeed 
Figure \ref{fig:finite_capacity} shows that it should be possible to get very close 
to the channel capacity by using only a construction with  a small depth, say $5-6$ together with constituent codes of moderate length
that can be chosen to be Reed-Solomon codes (say codes of length a hundred/a few hundred at most). This raises many issues that we 
did not cover here, such as for instance
\begin{itemize}
\item to choose appropriately the code rate of each constituent code in order to maximize the overall rate 
with respect to a certain target error probability;
\item choose the multiplicities for each constituent code in order to attain a good overall tradeoff complexity vs. performance;
\item choose other constituent codes such as AG codes  especially in cases where the channel is an $\Fq$-input channel for small values of
$q$. It might also be worthwhile to study the use of subfield subcodes of Reed-Solomon codes in this setting (for instance
BCH codes).
\end{itemize}
The whole strategy leads to use Koetter-Vardy decoding for Reed-Solomon/AG  codes in a regime where the rate gets either 
very close to $0$ or to $1$. This could be exploited to lower the complexity of generic Koetter-Vardy decoding.

\appendix

\section*{Appendix notation and assumption}
\label{sec:appendix_notation}
Throughout the appendix we will use the same notation as in Section \ref{sec:polar} and denote by $U_i$ a constituent
code of an iterated $\UV$ construction of some depth $\ell$. Here $i$ is an $\ell$-bit word. We assume in the appendix that all the constituent codes
are Reed-Solomon codes and that the model of error is the $q$-ary symmetric channel of crossover probability $p$ ($\qSCp$). 
We also denote by $\pi_i$ the APP probability vectors 
computed for decoding $U_i$. 
Without loss of generality we may assume that the codeword which is sent is the $0$ codeword. 
With this assumption, in order to reduce the number of cases to be considered it will be very convenient to give these APP vectors only up to a permutation
acting on all positions with the exception of the first one which will always be fixed. We will namely use the following notation.
\begin{notation}
For two probability vectors $\pv$ and $\pv'$ in $\R^q$ we will write
$\pv \eqs \pv'$ if and only if $p(0)=p'(0)$ and $p(1),\dots,p(q-1)$ is a permutation of
$p'(1),\dots,p'(q-1)$.
\end{notation}
Moreover also in order to simplify the expressions wich will appear in these APP vectors we will use the following notation
\begin{notation}
$\varepsilon^t$ denotes an arbitrary function of $q$ which satisfies $|\varepsilon^t| = \OO{\frac{1}{q^t}}$.
\end{notation}

\section{The $\UV$ construction}
\label{Appendix-UV}

With the zero codeword assumption, the distribution of the APP vector $\piv$ of a $\qSCp$-channel  is  as follows
$\piv = \left(1-p,\frac{p}{q-1},\frac{p}{q-1},\dots,\frac{p}{q-1}\right)$ with probability $1-p$ and 
$\piv = \left(\frac{p}{q-1},\dots,\frac{p}{q-1},1-p,\frac{p}{q-1},\dots,\frac{p}{q-1}\right)$ with probability $\frac{p}{q-1}$ where the term $1-p$ appears in an arbitrary position
with the exception of the first one. We summarize this in Table \ref{tab:level0}.
\begin{table}[h!]
\caption{\label{tab:level0} }
\begin{center}
\begin{tabular}{|c|c|c|}
\hline 
$\piv$ & probability & $\norm{\piv}^2$   \\
\hline $p_0^0  = \left(1-p,\frac{p}{q-1},\frac{p}{q-1},\dots,\frac{p}{q-1}\right)$ & $1-p$ & $(1-p)^2 + \varepsilon$ \\
\hline $ p_0^1 \eqs \left(\frac{p}{q-1},1-p,\frac{p}{q-1} \dots \frac{p}{q-1}\right)
$ & $p$ & $(1-p)^2 +\varepsilon$ \\
\hline 
\end{tabular}
\end{center}
\end{table}

The distribution of $\piv \times \piv'$ where $\piv$ and $\piv'$ are independent and distributed as in Table \ref{tab:level0}
is given in Table \ref{tab:level1_times}.
\begin{table}[h!]
\caption{\label{tab:level1_times}}
\begin{center}
\begin{tabular}{|c|c|c|}
\hline 
$\piv_0 $ & prob. & $\norm{\piv}^2$   \\
\hline $\pv_1^0 =\left(1-\epsilon,\varepsilon^2,\dots,\varepsilon^2 \right)$ & $(1-p)^2$ & $1- \varepsilon$ \\
\hline $\pv_1^1 \eqs \left(A,A,B \dots B\right)$ & $2p(1-p)$ & $\left(\frac{1-p}{2-p}\right)^2 + \varepsilon$ \\
where $A=\frac{1-p}{2-p} + \varepsilon$ and $B=\frac{p}{(q-1)(2-p)}+ \epsilon^2$ & & \\
\hline $\pv_1^2 \eqs \left(B,A,A,B\dots B\right)$ & $p^2 + \varepsilon$ & $\left(\frac{1-p}{2-p}\right)^2 + \varepsilon$ \\
with $A$ and $B$ as in the previous case & & \\
\hline 
\end{tabular}
\end{center}
\end{table}

We have used in Table \ref{tab:level1_times} the following notation.
\begin{lemma}
\label{L12}
Let $\piv_{0}$ be the APP probability vector viewed by the decoder $U_0$. For the channel error model of the code $U_0$ we have
\begin{eqnarray*}
\esp\left( \norm{\piv_{0}}^2 \right) & = & \frac{(p+2)(p-1)^2}{2-p} +  \Oq
\end{eqnarray*}
\end{lemma}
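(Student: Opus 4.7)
The plan is to compute $\esp(\norm{\piv_0}^2)$ by a direct case analysis on the joint value of the received pair $(y_1(i), y_2(i))$ at a single coordinate of the $\UV$ codeword, using the memoryless structure of the $q$-ary symmetric channel and the assumption that the all-zero codeword is sent (so that $u(i)=v(i)=0$).

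First I would simplify \eqref{eq:pi_otimes} under $v(i)=0$ to
\[
(\Pim\times\Pim)(\alpha) = \frac{\Pim_{\yv_1}(\alpha)\,\Pim_{\yv_2}(\alpha)}{\sum_{\beta\in\fq} \Pim_{\yv_1}(\beta)\,\Pim_{\yv_2}(\beta)},
\]
and observe that, coordinatewise, the distribution of $\piv_0$ depends only on whether each $y_j(i)$ is zero or not and on whether the two nonzero values coincide. I would then enumerate the four cases summarized in Table~\ref{tab:level1_times}: (A) $y_1=y_2=0$, with probability $(1-p)^2$, giving $\norm{\piv_0}^2 = 1 - \Oq$; (B) exactly one of $y_1, y_2$ is zero, with total probability $2p(1-p)$; (C) $y_1, y_2$ both nonzero and distinct, with probability $p^2 - \Oq$; and (D) $y_1 = y_2\neq 0$, with probability $\Oq$. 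In cases (B) and (C), a brief calculation of numerator and denominator above shows that $\piv_0$ has exactly two entries of leading size $\frac{1-p}{2-p}+\Oq$ (at the positions $0$ and whichever of $y_1, y_2$ is nonzero in case~(B), or at the positions $y_1$ and $y_2$ in case~(C)) and $q-2$ entries of size $\Oq$, so that $\norm{\piv_0}^2 = 2\bigl(\frac{1-p}{2-p}\bigr)^{2} + \Oq$. Case (D) contributes only $\Oq$ to the expectation since its probability is $\Oq$.

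Summing the weighted contributions gives
\[
\esp(\norm{\piv_0}^2) = (1-p)^2 + \bigl(2p(1-p)+p^2\bigr) \cdot 2\bigl(\tfrac{1-p}{2-p}\bigr)^{2} + \Oq = (1-p)^2\bigl(1 + \tfrac{2p}{2-p}\bigr) + \Oq,
\]
which simplifies to $\frac{(p+2)(p-1)^2}{2-p} + \Oq$ as claimed.

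The main care lies in bookkeeping the $\Oq$ corrections consistently. One has to verify that in cases~(B) and~(C) the normalizing denominator $\sum_\beta \Pim_{\yv_1}(\beta)\Pim_{\yv_2}(\beta)$ is bounded below by a positive constant independent of $q$, so that the relative $\Oq$ corrections to each entry of $\piv_0$ survive normalization without blow-up. The key structural observation driving the final expression is the factor of $2$ in the squared norm for cases~(B)/(C): the two received symbols $y_1, y_2$ each pin a distinct concentrated coordinate in the product APP vector, and forgetting either of these concentrations would yield $\frac{2(p-1)^2}{2-p}$ instead of the correct $\frac{(p+2)(p-1)^2}{2-p}$.
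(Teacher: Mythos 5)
Your proposal is correct and follows essentially the same route as the paper: the paper's argument is exactly the case enumeration recorded in Table~\ref{tab:level1_times} (both symbols received correctly; exactly one in error; both in error with distinct values) together with the observation, stated as the ``Important remark'' of Appendix~\ref{Appendix-UV}, that the coincidence event $y_1=y_2\neq 0$ has probability $\Oq$ and may be discarded. Two small points: your value $\norm{\piv_0}^2=2\left(\tfrac{1-p}{2-p}\right)^2+\Oq$ in cases (B) and (C) is the one actually required to reach $\tfrac{(p+2)(p-1)^2}{2-p}$ (the factor $2$ is missing from the $\norm{\piv}^2$ column of Table~\ref{tab:level1_times}, evidently a typo, since the analogous entries of Table~\ref{tab:level2_timestimes} do carry the multiplicity factor), and your cautionary remark should not claim the denominator $\sum_\beta \Pim_{\yv_1}(\beta)\Pim_{\yv_2}(\beta)$ is bounded below by a constant --- in cases (B) and (C) it equals $\tfrac{p}{q-1}\left(2-p+\Oq\right)$, so it is of order $1/q$; what matters is that after factoring out the common scale $\tfrac{p}{q-1}$ from numerator and denominator the remaining factor $2-p+\Oq$ is bounded away from zero.
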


The distribution of $\piv \oplus \piv$ where $\piv$ and $\piv'$ are independent and distributed as in Table \ref{tab:level0}
is given in Table \ref{tab:level1_oplus}.
\begin{table}[h!]
\caption{\label{tab:level1_oplus}}
\begin{center}
\begin{tabular}{|c|c|c|}
\hline 
$\piv_1 $ & prob. & $\norm{\piv}^2$   \\
\hline 
$\pv_1^3 = \left(1-p_1+\varepsilon, \frac{p_1}{q-1}+\varepsilon^2, \dots, \frac{p_1}{q-1}+\varepsilon^2\right)$ & $(1-p)^2$ & $(1-p)^2 +\varepsilon$ \\
where $p_1 = 2p-p^2$ & & \\
\hline 
$\pv_1^4 \eqs \left( \frac{p_1}{q-1}+\varepsilon^2, 1-p_1+\varepsilon, \frac{p_1}{q-1}+\varepsilon^2 \dots \frac{p_1}{q-1}+\varepsilon^2\right)$ & $2p-p^2+\varepsilon$ & $(1-p)^2 + \varepsilon$ \\
\hline 
\end{tabular}
\end{center}
\end{table}

\begin{lemma}
\label{L11}
Let $\piv_{1}$ be the APP probability vector viewed by the decoder $U_1$. 
The channel error model for the code $U_1$ is a $q\hbox{-SC}_{p_1+\varepsilon}$ with $p_1=2p-p^2$
and 
we have
\begin{eqnarray*}
\esp\left( \norm{\piv_{1}}^2 \right) & = (1-p)^4 + \Oq.
\end{eqnarray*}
\end{lemma}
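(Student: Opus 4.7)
The statement comprises two claims which I would prove in order by a direct case analysis on the pair $(y_1(i), y_2(i))$ at a single position, working throughout under the zero-codeword assumption (justified by the cyclic-symmetry of the $\qSCp$ from Proposition \ref{pr:csc}). The key computation is to obtain $\pi_1 = \pi_{y_1} \oplus \pi_{y_2}$ via \eqref{eq:pi_oplus}, where $\pi_{y_1}$ and $\pi_{y_2}$ are independent and distributed as in Table \ref{tab:level0}. By covariance of $\oplus$ under diagonal shifts $(y_1,y_2) \mapsto (y_1+g,y_2+g)$ and under the shifts $(y_1,y_2)\mapsto (y_1,y_2+g)$, it suffices to treat two representative cases, $y_1 = y_2$ and $y_1 \neq y_2$. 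Substituting the two-valued form of $\pi_{y_j}$ into \eqref{eq:pi_oplus} reduces the $\gamma$-entry of $\pi_1$ to a short sum: the peak entry evaluates to $(1-p)^2 + \tfrac{p^2}{q-1} = 1 - p_1 + \varepsilon$, and each non-peak entry to $\tfrac{2p(1-p)}{q-1} + \tfrac{(q-2)p^2}{(q-1)^2} = \tfrac{p_1}{q-1} + \varepsilon^2$, matching exactly the two shapes $\pv_1^3, \pv_1^4$ already recorded in Table \ref{tab:level1_oplus}. The identity invoked is $1-p_1 = (1-p)^2$, which holds by definition of $p_1 = 2p - p^2$.

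Next I would argue that the peak of $\pi_1$ lands at position $0$ with probability $(1-p)^2 + \tfrac{p^2}{q-1} = 1 - p_1 + \varepsilon$ (corresponding to the event $y_1 = y_2$), and that, conditional on the peak lying elsewhere, its location is uniform over $\Fq\setminus\{0\}$ by the cyclic-symmetry of $\oplus$. Combining these two facts, the channel viewed by the $U_1$-decoder is exactly a $\qSCp$ of crossover probability $p_1 + \varepsilon$, which is the first claim of the lemma.

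The second claim then follows without further work: both $\pv_1^3$ and $\pv_1^4$ have the same squared norm
$$ \norm{\pv_1^3}^2 = \norm{\pv_1^4}^2 = (1-p_1)^2 + (q-1)\bigl(\tfrac{p_1}{q-1}\bigr)^2 + \varepsilon = (1-p_1)^2 + \varepsilon, $$
so $\esp(\norm{\pi_1}^2) = (1-p_1)^2 + \varepsilon = (1-p)^4 + \Oq$. I do not foresee any conceptual obstacle; the only delicate step anywhere in this plan is the careful bookkeeping of the $O(1/q)$ correction terms that absorb, among other things, the $p^2/(q-1)$ contribution from the ``coincidental'' sub-case $y_1 = y_2 \neq 0$ and the second-order cross terms in each computed entry.
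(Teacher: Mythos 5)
Your proposal is correct and follows essentially the same route as the paper: the paper's proof of Lemma \ref{L11} consists precisely of the convolution computation recorded in Table \ref{tab:level1_oplus} (the two shapes $\pv_1^3,\pv_1^4$ with peak entry $1-p_1+\varepsilon$ and off-peak entries $\tfrac{p_1}{q-1}+\varepsilon^2$, occurring with probabilities $1-p_1+\varepsilon$ and $p_1+\varepsilon$), followed by the same squared-norm evaluation $(1-p_1)^2+\varepsilon=(1-p)^4+\Oq$, with the coincidental case $y_1=y_2\neq 0$ absorbed into the $\Oq$ terms exactly as you do. The only discrepancy is with the table's third column, which reads $(1-p)^2+\varepsilon$ where your (correct) value $(1-p_1)^2+\varepsilon=(1-p)^4+\varepsilon$ is what the lemma actually asserts, so your computation matches the lemma rather than the table's apparent typo.
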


\noindent
{\bf Important remark:}
Observe that for the distribution of $\piv \times \piv'$ and the distribution of 
$\piv \oplus \piv'$ we have implicitly used the fact that with probability $1 - \Oq=1-\epsilon$ 
the two vectors $\piv$ and $\piv'$ have their $1-p$ entry in a different position when 
$\piv$ and $\piv'$ both correspond to the second case of Table \ref{tab:level0}. 
This accounts for the $\epsilon$ term in the probability for the third case of Tables
\ref{tab:level1_times} and \ref{tab:level1_oplus}. Since we are interested in obtaining the 
expected values of $\esp\left(\norm{\piv_i}^2 \right)$ only up to $\Oq$ we can readily ignore the cases when 
$\piv$ and $\piv'$ have ther $1-p$ value at the same position (assuming that this is not the first position).
This reasoning will be used repeatedly in the following sections.

\section{The iterated $\UV$ construction of depth $2$}
\label{Appendix-UV2}

\subsection{Computation of $\esp\left( \norm{\piv_{00}}^2\right)$}
Note that $\piv_{00}=\piv \times \piv'$ where  $\piv$ and $\piv'$ are independent and distributed as $\piv_0$ 
which is given in Table \ref{tab:level1_times}. Table \ref{tab:level2_timestimes} summarizes this distribution.
\begin{table}[h!]
\caption{\label{tab:level2_timestimes}}
\begin{center}
\begin{tabular}{|c|c|c|}
\hline 
$\piv_{00} $ & prob. & $\norm{\piv}^2$   \\
\hline 
$ \pv_2^0 \eqdef \left\{\begin{array}{c}
\pv_1^0  \times \pv_1^0,
\pv_1^0  \times \pv_1^1,\\
\pv_1^0  \times \pv_1^2,
\pv_1^1  \times \pv_1^1 \end{array} \right\}= \left(1-\epsilon,\varepsilon^2,\dots,\varepsilon^2 \right)$ & $(1-p)^2(1-2p+3p^3) + \varepsilon$ & $1- \varepsilon$ \\
\hline 
$\pv_2^1 \eqdef \pv_1^1 \times \pv_1^2 \eqs \left(C,C,C,C,D \dots D\right)$ & $4p^2(1-p)^2+\varepsilon$ & 
$4\left(\frac{1-p}{4-3p}\right)^2 + \varepsilon$ \\
where $C=\frac{1-p}{4-3p}+\varepsilon$ and $D=\frac{p}{(q-1)(4-3p)}+\varepsilon^2$ & & \\
\hline 
$\pv_2^2 \eqdef \pv_1^2 \times \pv_1^2 \eqs \left(D, C,C,C,C, D \dots D\right)$ & 
$p^2 + \varepsilon$ & $4\left(\frac{1-p}{4-3p}\right)^2 + \varepsilon$ \\
with $C$ and $D$ as in the previous case & & \\
\hline 
\end{tabular}
\end{center}
\end{table}

\begin{lemma}
\label{L21}
Let $\piv_{00}$ be the APP probability vector viewed by the decoder $U_{00}$. For the channel error model of the code $U_{00}$ we have 
\begin{eqnarray*}
\esp\left( \norm{\piv_{00}}^2 \right) & = & \frac{(5p^3-6p^2-5p-4)(1-p)^2}{3p-4}+\Oq
\end{eqnarray*}
\end{lemma}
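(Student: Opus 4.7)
The plan is to apply the law of total expectation using the three-class decomposition of $\piv_{00}$ displayed in Table~\ref{tab:level2_timestimes}, which I first need to justify. Since $U_{00}$ plays the role of $U$ inside the inner split $U_0 = (U_{00}\mid U_{00}+U_{01})$ and the transmitted codeword is assumed zero (so that $v(i)=0$ throughout), equation~\eqref{eq:pi_otimes} reduces to the normalised coordinatewise product. Hence $\piv_{00} = \piv\times\piv'$ where $\piv,\piv'$ are independent copies of $\piv_0$ distributed as in Table~\ref{tab:level1_times}. I would enumerate the nine ordered pairs $(\pv_1^a,\pv_1^b)$ for $a,b\in\{0,1,2\}$ and compute each product coordinate by coordinate.

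For the enumeration: any pair involving $\pv_1^0$ lands in the class $\pv_2^0$, because the $(1-\varepsilon,\varepsilon^2,\dots,\varepsilon^2)$ shape of $\pv_1^0$ forces the normalised product to concentrate at coordinate $0$. The pair $\pv_1^1\times\pv_1^1$ also yields $\pv_2^0$ except when the two second-$A$ coordinates coincide; by the reasoning of the \emph{important remark} at the end of Section~\ref{Appendix-UV} this collision occurs with probability $\Oq$ and thus contributes only $\Oq$ to $\esp(\norm{\piv_{00}}^2)$. The mixed pair $\pv_1^1\times\pv_1^2$ (in either order) generates the class $\pv_2^1$: a direct computation shows that four coordinates (position $0$, the second-$A$ coordinate of $\pv_1^1$, and the two $A$-coordinates of $\pv_1^2$) receive product magnitude $AB$ while the remaining $q-4$ receive $B^2$, normalising to $C = \tfrac{1-p}{4-3p} + \varepsilon$ and $D = \tfrac{p}{(q-1)(4-3p)} + \varepsilon^2$. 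The pair $\pv_1^2\times\pv_1^2$ generates $\pv_2^2$ with the same two values, but with position $0$ now in the $D$-group, since the product there is $B\cdot B$; once again coincidences between the $A$-positions of the two factors contribute only at order $\Oq$.

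With the partition justified I would read off $\norm{\pv_2^0}^2 = 1 - \varepsilon$ and $\norm{\pv_2^1}^2 = \norm{\pv_2^2}^2 = 4C^2 + (q-4)D^2 = 4\left(\tfrac{1-p}{4-3p}\right)^2 + \varepsilon$, so that, writing $P_c$ for the total probability of class $\pv_2^c$,
\begin{equation*}
\esp\left(\norm{\piv_{00}}^2\right) \;=\; P_0 + (P_1+P_2)\cdot 4\left(\tfrac{1-p}{4-3p}\right)^2 + \Oq.
\end{equation*}
The enumeration returns $P_0 = (1-p)^2(1+2p+3p^2) + \varepsilon$, $P_1 = 4p^3(1-p) + \varepsilon$ and $P_2 = p^4 + \varepsilon$ (with $P_0+P_1+P_2 = 1+\varepsilon$ as a sanity check), and crucially $P_1 + P_2 = p^3(4-3p) + \varepsilon$. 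Substituting gives
\begin{equation*}
\esp\left(\norm{\piv_{00}}^2\right) \;=\; (1-p)^2(1+2p+3p^2) + \frac{4p^3(1-p)^2}{4-3p} + \Oq \;=\; \frac{(1-p)^2\bigl[(1+2p+3p^2)(4-3p)+4p^3\bigr]}{4-3p} + \Oq.
\end{equation*}
Expanding the bracket yields $4+5p+6p^2-5p^3 = -(5p^3-6p^2-5p-4)$, and since $4-3p = -(3p-4)$ the ratio simplifies to $\tfrac{(5p^3-6p^2-5p-4)(1-p)^2}{3p-4} + \Oq$, the announced identity.

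The main obstacle is combinatorial bookkeeping rather than analytical depth: I have to track carefully which of the nine ordered pairs contribute to which of the classes $\pv_2^0, \pv_2^1, \pv_2^2$, and absorb the exceptional $A$-position collision events into the $\Oq$ error in a consistent way. Once this is settled, the closing polynomial identity is a short direct computation.
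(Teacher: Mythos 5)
Your proof is correct and follows essentially the paper's own route: the paper's argument for this lemma is precisely the classification of $\piv\times\piv'$ recorded in Table~\ref{tab:level2_timestimes} followed by the weighted sum of the $\norm{\cdot}^2$ column, which is exactly what you carry out (including the absorption of the $A$-position collision events into the $\Oq$ term, as in the ``important remark'' of Appendix~\ref{Appendix-UV}). Note that your class probabilities $P_0=(1-p)^2(1+2p+3p^2)$, $P_1=4p^3(1-p)$ and $P_2=p^4$ are the correct ones even though they differ from the probability column as printed in Table~\ref{tab:level2_timestimes}, which contains typographical errors: only your values sum to $1$, they are the ones consistent with the depth-$3$ tables (e.g.\ $\pv_3^1=\pv_2^1\times\pv_2^2$ having probability $8p^7(1-p)+\varepsilon$ and $\pv_3^2=\pv_2^2\times\pv_2^2$ having probability $p^8+\varepsilon$), and they are the ones that produce the stated closed form.
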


\subsection{Computation of $\esp\left( \norm{\piv_{01}}^2\right)$}
It can be observed that $\piv_{01}=\piv \oplus \piv'$ where  $\piv$ and $\piv'$ are independent and distributed as $\piv_0$ 
which is given in Table \ref{tab:level1_times}. Table \ref{tab:level2_timesoplus} summarizes this distribution.

\begin{table}[h!]
\caption{\label{tab:level2_timesoplus}}
\begin{center}
\begin{tabular}{|c|c|c|}
\hline 
$\piv_{01} $ & prob. & $\norm{\piv}^2$   \\
\hline 
$\pv_2^3  \eqdef \pv_1^0 \oplus \pv_1^0 = 
\left(1-\epsilon,\varepsilon^2,\dots,\varepsilon^2 \right)$ & $(1-p)^4$ & $1- \varepsilon$ \\
\hline
$ \pv_2^4 \eqdef \pv_1^0 \oplus \pv_1^1 \eqs
\left(E, E, F  \dots F \right)$ & $4p(1-p)^3$ & $2\left(\frac{1-p}{2-p}\right)^2+ \varepsilon$ \\
with $E=\frac{1-p}{2-p}+\varepsilon$ and $F=\frac{p}{(2-p)(q-1)}+\varepsilon^2$ & & \\
\hline
$\pv_2^5 \eqdef \pv_1^0 \oplus \pv_1^2 \eqs
\left(F, E, E, F \dots F \right)$ & $2p^2(1-p)^2+\varepsilon$ & $2\left(\frac{1-p}{2-p}\right)^2+ \varepsilon$ \\
with $E$ and $F$ as in the previous case & & \\
\hline
$\pv_2^6 \eqdef \pv_1^1 \oplus \pv_1^1 \eqs \left(G, G, G, G,H\dots H\right)$ & $4p^2(1-p)^2+\varepsilon$ &$4\left( \tfrac{1-p}{2-p}\right)^4+ \varepsilon$  \\
where $G=\left(\frac{1-p}{2-p}\right)^2+\varepsilon$ and $H=\frac{p(4-3p)}{(q-1)(2-p)^2}+\varepsilon^2$ & & \\
\hline 
$\pv_2^7 \eqdef \pv_1^1 \oplus \pv_1^2 \eqs \left(H, G,G,G,G,H\dots H\right)$ & $4p^2(1-p)^2+p^2+\varepsilon$ & 
$4\left( \tfrac{1-p}{2-p}\right)^4+ \varepsilon$ \\
with $G$ and $H$ as in the previous case & & \\
\hline 
\end{tabular}
\end{center}
\end{table}

\begin{lemma}
\label{L22}
Let $\piv_{01}$ be the APP probability vector viewed by the decoder $U_{01}$. For the channel error model of the code $U_{01}$ we have
\begin{eqnarray*}
\esp\left( \norm{\piv_{01}}^2 \right) & = & \frac{(2+p)^2(1-p)^4}{(2-p)^2}+\Oq
\end{eqnarray*}
\end{lemma}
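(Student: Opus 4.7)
Given Table \ref{tab:level2_timesoplus}, the expectation $\esp(\|\piv_{01}\|^2)$ is just a weighted sum $\sum_{i=3}^{7} \prob(\piv_{01}=\pv_2^i)\,\|\pv_2^i\|^2$ to which I would add the $\Oq$ error arising from the table's approximations. Setting $A \eqdef \frac{1-p}{2-p}$, I would observe that the five tabulated norm-squared values collapse to only three distinct leading values $1$, $2A^2$, $4A^4$, and I would group the sum accordingly:
$$
\esp\bigl(\|\piv_{01}\|^2\bigr) = (1-p)^4 + 2A^2\bigl[\prob(\pv_2^4)+\prob(\pv_2^5)\bigr] + 4A^4\bigl[\prob(\pv_2^6)+\prob(\pv_2^7)\bigr] + \Oq.
$$

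Next I would simplify the two bracketed sums. The first should collapse as $4p(1-p)^3 + 2p^2(1-p)^2 = 2p(1-p)^2(2-p)$, so combined with $2A^2 = 2(1-p)^2/(2-p)^2$ it contributes $4p(1-p)^4/(2-p)$. For the second I expect the bracket to simplify to $p^2(2-p)^2$, which combined with $4A^4 = 4(1-p)^4/(2-p)^4$ contributes $4p^2(1-p)^4/(2-p)^2$. Pulling the common factor $(1-p)^4/(2-p)^2$ out of all three terms reduces the whole expression to
$$
\frac{(1-p)^4}{(2-p)^2}\Bigl[(2-p)^2 + 4p(2-p) + 4p^2\Bigr] + \Oq,
$$
and I would close the computation via the identity $(2-p)^2 + 4p(2-p) + 4p^2 = \bigl((2-p)+2p\bigr)^2 = (2+p)^2$.

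The main obstacle I anticipate is justifying the allocation of the six product cases $\pv_1^a \oplus \pv_1^b$ ($a,b\in\{0,1,2\}$) among the five rows of the table -- in particular, correctly sorting the three ``spread'' combinations $\pv_1^1\oplus\pv_1^1$, $\pv_1^1\oplus\pv_1^2$, and $\pv_1^2\oplus\pv_1^2$ into $\pv_2^6$ or $\pv_2^7$. The deciding factor is whether position $0$ lies among the four generically ``large'' positions of the convolution: for $\pv_1^1\oplus\pv_1^1$ it always does, since $0$ lies in the support of the $A$-clump of both operands, placing the output in $\pv_2^6$; for the other two combinations $0$ is generically not among the four large positions, and the output is $\pv_2^7$. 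This sorting -- combined with the fact that accidental coincidences between the support positions of independent random permutations occur with probability $\Oq$ and so can be absorbed into the $\varepsilon^t$ error symbols -- is what drives both the case probabilities and the approximate equality of the norm-squared values. Once that allocation is verified (for instance by a direct computation of $(\piv\oplus\piv')(m)$ for representative values of $m$), the algebraic simplification above closes the proof.
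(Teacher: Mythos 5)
Your proposal is correct and takes essentially the same route as the paper, whose proof of Lemma \ref{L22} is just Table \ref{tab:level2_timesoplus} followed by the implicit weighted sum $\sum \prob \cdot \norm{\cdot}^2$: your case allocation (with $0$ among the four large positions exactly for $\pv_1^1\oplus\pv_1^1$), the collapse of the two brackets to $2p(1-p)^2(2-p)$ and $p^2(2-p)^2$, and the closing identity $(2-p)^2+4p(2-p)+4p^2=(2+p)^2$ all match the intended computation. Note that your second bracket silently corrects the paper's table, in which the probability listed for $\pv_2^7$, namely $4p^2(1-p)^2+p^2+\varepsilon$, is erroneous (the row should cover both $\pv_1^1\oplus\pv_1^2$ and the omitted $\pv_1^2\oplus\pv_1^2$, for a total of $4p^3(1-p)+p^4+\varepsilon=p^3(4-3p)+\varepsilon$); your value $p^2(2-p)^2$ for $\prob(\pv_2^6)+\prob(\pv_2^7)$ is the correct one, consistent with the probabilities summing to $1$ and with the lemma's stated conclusion.
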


\subsection{Computation of $\esp\left( \norm{\piv_{10}}^2\right)$ and $\esp\left( \norm{\piv_{11}}^2\right)$}
Note that $\piv_{10}$ and $\piv_{11}$ are distributed like $\piv \times \piv'$ and $\piv \oplus \piv'$ where
$\piv$ and $\piv'$ are independent and distributed like $\piv_1$ which is itself the APP vector
obtained from transmitting $0$ over a $q\hbox{-SC}_{p_1+\varepsilon}$ with $p_1=2p-p^2$.
We can therefore use directly both lemmas of the previous section and obtain
\begin{lemma}
\label{L234}
Let $\piv_{10}$ and $\piv_{11}$ be the APP probability vectors viewed by the decoder $U_{10}$ and $U_{11}$, respectively. For the channel error model of the code $U_{10}$ we have
\begin{eqnarray*}
\esp\left( \norm{\piv_{10}}^2 \right) & = & \frac{(2+p_1)(1-p_1)^2}{(2-p_1)}  + \Oq = 
\frac{(1-p)^4(-p^2+2p+2)}{(p^2-2p+2)}+\Oq \\
\end{eqnarray*}
The channel error model for the code $U_{11}$ is a $q\hbox{-SC}_{p_2+\varepsilon}$ with $p_2=2p_1-p_1^2=p(2-p)(p^2-2p+2)$
and 
we have
\begin{eqnarray*}
\esp\left( \norm{\piv_{11}}^2 \right) &= & (1-p_2)^2 + \Oq = (1-p)^8 + \Oq 
\end{eqnarray*}
\end{lemma}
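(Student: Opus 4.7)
\textbf{Proof proposal for Lemma \ref{L234}.} The key observation is that $U_1 = (U_{10}\mid U_{10}+U_{11})$ is itself a $\UV$-construction, so the decoders for $U_{10}$ and $U_{11}$ face exactly the situation already analyzed at depth $1$ in Appendix \ref{Appendix-UV}, with the sole difference that the underlying channel is no longer the original $\qSCp$ but rather the channel seen by the $U_1$-decoder. By Lemma \ref{L11}, this channel is a $q\text{-SC}_{p_1+\varepsilon}$ with $p_1 = 2p-p^2$, i.e.\ a $q$-ary symmetric channel whose crossover probability agrees with $p_1$ up to a term of order $\Oq$. The plan is therefore to lift the statements of Lemmas \ref{L12} and \ref{L11} to the second level by substituting $p \mapsto p_1$, and then to simplify the resulting rational expressions.

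Concretely, I would proceed as follows. First, observe that $\piv_{10}$ is obtained from two independent copies of $\piv_1$ by applying the $\times$ operation (since $U_{10}$ plays the role of $U$ in the $\UV$-construction of $U_1$), and $\piv_{11}$ is obtained from two independent copies of $\piv_1$ via $\oplus$. Second, inspect Table \ref{tab:level1_oplus}: the distribution of $\piv_1$ differs from that of a genuine APP vector of a $q\text{-SC}_{p_1}$ only by perturbations of size $\varepsilon^2$ in the individual entries and by $\Oq$ perturbations of the cluster probabilities. Third, since the operations $\oplus$ and $\times$ defined in \eqref{eq:pi_oplus} and \eqref{eq:pi_otimes} are polynomial in the entries of their arguments, these perturbations propagate to perturbations of order $\Oq$ in the expected squared norms. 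Fourth, apply Lemma \ref{L12} verbatim (with $p$ replaced by $p_1$) to obtain $\esp(\|\piv_{10}\|^2) = \frac{(2+p_1)(1-p_1)^2}{2-p_1} + \Oq$, and apply Lemma \ref{L11} to conclude that the channel for $U_{11}$ is a $q\text{-SC}_{p_2+\varepsilon}$ with $p_2 = 2p_1 - p_1^2$ and that $\esp(\|\piv_{11}\|^2) = (1-p_2)^2 + \Oq$.

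Finally, I would simplify. Using $1 - p_1 = (1-p)^2$ we obtain $(1-p_1)^2 = (1-p)^4$, $2-p_1 = p^2 - 2p + 2$, and $2 + p_1 = -p^2 + 2p + 2$, which immediately yields the stated formula for $\esp(\|\piv_{10}\|^2)$. Likewise $p_2 = 2p_1 - p_1^2 = 1-(1-p_1)^2 = 1-(1-p)^4 = p(2-p)(p^2-2p+2)$, in agreement with the claim, and $(1-p_2)^2 = (1-p_1)^4 = (1-p)^8$.

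The main technical delicacy lies in the recursive step: formally justifying that lemmas proved for a true $\qSCp$ continue to hold, up to $\Oq$, when the underlying channel is only a perturbed version of $q\text{-SC}_{p_1}$ with APP vectors of the form given in Table \ref{tab:level1_oplus}. The cleanest way to do this is to extend the ``important remark'' at the end of Appendix \ref{Appendix-UV}: the only events that could spoil the computation are those where the two independent copies of $\piv_1$ feeding the level-$2$ decoder have their dominant $1-p_1$ entries at the same non-zero position, but such coincidences occur with probability $\Oq$ and contribute at most $\Oq$ to any bounded moment. Everywhere else, the $\times$ and $\oplus$ operations on $\piv_1$ produce exactly the same APP clusters (up to $\varepsilon^2$ corrections per entry) as they would on a true $q\text{-SC}_{p_1}$ APP vector, which is what makes the recursive substitution $p \mapsto p_1$ legitimate at the level of the squared-norm expectation.
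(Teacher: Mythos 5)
Your proposal is correct and follows exactly the paper's argument: the paper likewise observes that $\piv_{10}$ and $\piv_{11}$ are distributed as $\piv\times\piv'$ and $\piv\oplus\piv'$ with $\piv,\piv'$ independent copies of $\piv_1$, which is the APP vector of a $q\hbox{-SC}_{p_1+\varepsilon}$, and then invokes Lemmas \ref{L12} and \ref{L11} with $p$ replaced by $p_1$. Your explicit algebraic simplifications ($1-p_1=(1-p)^2$, etc.) and your justification of why the $\varepsilon$-perturbation only contributes $\Oq$ are both correct and in fact more detailed than what the paper writes.
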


\section{The iterated $\UV$ construction of depth $3$}
\label{Appendix-UV3}
\subsection{Computation of $\esp\left( \norm{\piv_{000}}^2\right)$}

Note that $\piv_{000} = \piv \times \piv'$ where $\piv$ and $\piv'$ are independent and distributed as $\piv_{00}$ which is given in Table \ref{tab:level2_timestimes}. Table \ref{tab:level2_timestimestimes} summarizes this distribution.

\begin{table}[h!]
\caption{\label{tab:level2_timestimestimes}}
\begin{center}
\begin{tabular}{|c|c|c|}
\hline 
$\piv_{000} $ & prob. & $\norm{\piv}^2$   \\
\hline 
$\pv_3^0= \left\{\begin{array}{c}
\pv_2^0 \times \pv_2^0, \pv_2^0 \times \pv_2^1\\
\pv_2^0 \times \pv_2^2, \pv_2^1 \times \pv_2^1 
\end{array}
\right\}   = \left(1-\varepsilon,\varepsilon^2,\dots,\varepsilon^2 \right)$ &
$1-8p^7(1-p)-p^8 + \varepsilon$
& $1- \varepsilon$ \\
\hline 
$\pv_3^1 = \pv_2^1 \times \pv_2^2 
 \eqs (\underbrace{I, \ldots, I}_{8 \text{-times}}, J, \dots, J )$ & 
$8p^7(1-p)+ \varepsilon$ & $8\left( \frac{1-p}{8-7p}\right)^2 + \varepsilon$ \\
with $I=\frac{1-p}{8-7p} + \varepsilon$ and $J=\frac{p}{(8-7p)(q-1)} + \varepsilon$ & & \\
\hline 
$\pv_3^2 = \pv_2^2 \times \pv_2^2 
 \eqs (J, \underbrace{I, \ldots, I}_{8 \text{-times}}, J, \dots, J )$ & 
$p^8 + \varepsilon$ & $8\left( \frac{1-p}{8-7p}\right)^2 + \varepsilon$ \\
with $I$ and $J$ as in the previous case & & \\
\hline
\end{tabular}
\end{center}
\end{table}

\begin{lemma}
Let $\piv_{000}$ be the APP probability vector viewed by the decoder $U_{000}$ decoder.
For the channel error model of the code $U_{000}$ we have
\begin{eqnarray*}
\esp\left( \norm{\piv_{000}}^2 \right) & = & 
\frac{-(41p^7-14p^6-13p^5-12p^4-11p^3-10p^2-9p-8)(1-p)^2}{(8-7p)} + \Oq
\end{eqnarray*}
\end{lemma}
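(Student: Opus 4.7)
The plan is to compute $\esp(\norm{\piv_{000}}^2)$ by first deriving Table \ref{tab:level2_timestimestimes}, which tabulates the distribution of $\piv_{000}$ in three $\eqs$-classes, and then substituting into the definition of expectation and simplifying. For the first step, I would enumerate the nine pairings $(\pv_2^i,\pv_2^j)$ with $i,j\in\{0,1,2\}$ arising in $\piv_{000}=\piv\times\piv'$, where $\piv,\piv'$ are independent copies of $\piv_{00}$ distributed according to Table \ref{tab:level2_timestimes}, and identify the resulting $\eqs$-class. The underlying mechanism is that, beyond the fixed position $0$, the location of the $C$-block in $\pv_2^1$ and $\pv_2^2$ is essentially uniform over the remaining $q-1$ coordinates, so two independent copies have non-overlapping $C$-supports with probability $1-\Oq$. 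Under this mechanism, any pair containing $\pv_2^0$ (whose leading coordinate is $1-\varepsilon$) collapses to $\pv_3^0$ by concentration at position $0$; the diagonal $\pv_2^1\times\pv_2^1$ also collapses to $\pv_3^0$ because both factors put $C$ at position $0$, giving a $C^2$ leading entry that dominates after normalization. The three remaining cases $\pv_2^1\times\pv_2^2$, $\pv_2^2\times\pv_2^1$, and $\pv_2^2\times\pv_2^2$ each produce, to leading order in $1/q$, exactly $8$ surviving entries of size $CD$ with $D^2$ at all other positions. Normalizing gives the common value $C/(8C+(q-8)D)\to (1-p)/(8-7p)$ at those $8$ positions, and hence $\norm{\pv_3^1}^2=\norm{\pv_3^2}^2=8\bigl(\tfrac{1-p}{8-7p}\bigr)^2+\varepsilon$.

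With Table \ref{tab:level2_timestimestimes} in hand, and using $8p^7(1-p)+p^8 = p^7(8-7p)$ for the total ``bad'' probability, I would obtain
\begin{equation*}
\esp(\norm{\piv_{000}}^2) = \bigl(1-p^7(8-7p)\bigr) + p^7(8-7p)\cdot\frac{8(1-p)^2}{(8-7p)^2} + \Oq = \frac{(8-7p)-p^7\bigl[(8-7p)^2-8(1-p)^2\bigr]}{8-7p} + \Oq.
\end{equation*}
Expanding $(8-7p)^2-8(1-p)^2 = 56-96p+41p^2$ converts the numerator to $8-7p-56p^7+96p^8-41p^9$, which a direct expansion confirms equals $-(41p^7-14p^6-13p^5-12p^4-11p^3-10p^2-9p-8)(1-p)^2$, giving the claimed formula.

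The main obstacle is the combinatorial bookkeeping in the first step: one must verify carefully that in each of the three genuinely bad pairings exactly $8$ entries survive at the $CD$ scale, and that all accidental overlaps of the random $C$-supports (which occur with probability $\Oq$) contribute only to the $\Oq$ error term. Once the tabulated distribution of $\piv_{000}$ is confirmed, the remaining work is routine polynomial arithmetic.
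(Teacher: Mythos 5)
Your proposal is correct and follows essentially the same route as the paper: the paper's own justification consists of deriving Table \ref{tab:level2_timestimestimes} (grouping the nine ordered pairings into the collapsed class $\pv_3^0$ and the two genuinely noisy classes with $8$ surviving $CD$-scale entries, using the $1-\Oq$ disjointness of the random supports) and then taking the expectation, exactly as you do. Your closing arithmetic, including the identity $8-7p-p^7\bigl[(8-7p)^2-8(1-p)^2\bigr]=-(41p^7-14p^6-\dots-8)(1-p)^2$, checks out.
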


\subsection{Computation of $\esp\left( \norm{\piv_{001}}^2\right)$}
It can be observed that $\piv_{001} = \piv \oplus \piv'$ where $\piv$ and $\piv'$ are independent and distributed as $\piv_{00}$ which is given in Table \ref{tab:level2_timestimes}. Table \ref{tab:level2_timestimesoplus} summarizes this distribution.

\begin{table}[h!]
\caption{It is easy to check that $\norm{\pv_3^3}^2 = 1-\varepsilon$, 
$\norm{\pv_3^4}^2 = 4C^2 + \varepsilon$ and
$\norm{\pv_3^5}^2 = 16K^2 + \varepsilon$  with $C$ and $K$ as defined in this table. \label{tab:level2_timestimesoplus}}
\begin{center}
\begin{tabular}{|c|c|c|}
\hline 
$\piv_{001} $ & prob. & $\norm{\piv}^2$   
\\ \hline 
$\pv_3^3 = \pv_2^0 \oplus \pv_2^0 
 \eqs \left(1-\epsilon,\varepsilon^2,\dots,\varepsilon^2 \right)$ &
$(3p^2+2p+1)^2(1-p)^4 + \varepsilon$
 & $1- \varepsilon$ 
\\ \hline 
$\pv_3^4 = \left\{\begin{array}{c}
\pv_2^0 \oplus \pv_2^1 , \\
\pv_2^0 \oplus \pv_2^2
\end{array}
\right\} 
 \eqs  (\underbrace{C, \ldots, C}_{4 \text{-times}}, D, \dots, D)$& 
$2(3p^2 + 2p + 1)(4-3p)(1-p)^2p^3 $ & 
$4\left( \frac{1-p}{4-3p}\right)^2 $\\
with $C=\frac{1-p}{4-3p}+\varepsilon$ and $D=\frac{p}{(q-1)(4-3p)} + \varepsilon^2$ & $+ \varepsilon$ &
$+ \varepsilon$ \\
\hline
$\pv_3^5 = \left\{\begin{array}{c}
\pv_2^1 \oplus \pv_2^1 ,\\
\pv_2^1 \oplus \pv_2^2 ,\\
 \pv_2^2 \oplus \pv_2^2  
 \end{array}
\right\} 
 \eqs (\underbrace{K, \ldots, K}_{16 \text{-times}}, L, \dots, L )$ &
$(3p - 4)^2p^6 + \varepsilon$ & $16\left( \frac{1-p}{4-3p} \right)^4 + \varepsilon$\\
with $K = \left(\frac{1-p}{4-3p}\right)^2 + \varepsilon$ and $L = \frac{(8-7p)p}{(4-3p)^2(q-1)} + \varepsilon^2$ & & \\
\hline
\end{tabular}
\end{center}
\end{table}

\begin{lemma}
Let $\piv_{001}$ be the APP probability vector viewed by the decoder $U_{001}$ decoder.
For the channel error model of the code $U_{001}$ we have
\begin{eqnarray*}
\esp\left( \norm{\piv_{001}}^2 \right) & = & 
\frac{\left(5p^3-6p^2-5p-4\right)^2(1-p)^4}{(4-3p)^2} + \Oq
\end{eqnarray*}
\end{lemma}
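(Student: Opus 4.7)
By construction $\piv_{001} = \piv \oplus \piv'$ for two independent copies $\piv, \piv'$ of $\piv_{00}$. Its distribution up to the $\eqs$-equivalence and up to terms of order $\Oq$ can therefore be obtained from Table~\ref{tab:level2_timestimes} by enumerating the $3\times 3$ pairs $(\pv_2^i, \pv_2^j)$, $i,j\in\{0,1,2\}$, and applying the convolution formula~\eqref{eq:pi_oplus} case by case. I would follow exactly the same template as in Lemma~\ref{L22}: group the resulting shapes into classes indexed by their leading-order squared norm, and read off $\esp(\norm{\piv_{001}}^2)$ as the probability-weighted sum of squared norms across classes.

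\textbf{Building Table~\ref{tab:level2_timestimesoplus}.} Up to $\Oq$ the vector $\pv_2^0$ behaves like the unit mass $\delta_0$, while $\pv_2^1$ and $\pv_2^2$ each have four heavy entries of value $C = \tfrac{1-p}{4-3p}$ and $q-4$ light entries of value $D = \tfrac{p}{(q-1)(4-3p)}$, the only difference between them being whether position~$0$ is heavy or light. Convolving then gives three classes: $\pv_2^0 \oplus \pv_2^0$ stays concentrated at $0$ with $\norm{\piv}^2 = 1 - \varepsilon$; $\pv_2^0$ convolved with either $\pv_2^1$ or $\pv_2^2$ inherits the $4$-heavy-position shape of its nontrivial factor with $\norm{\piv}^2 = 4C^2 + \varepsilon$; and $\pv_2^i \oplus \pv_2^j$ for $i,j\in\{1,2\}$ produces a $16$-heavy-position shape with $\norm{\piv}^2 = 16 K^2 + \varepsilon$, where $K = C^2$. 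The probabilities in Table~\ref{tab:level2_timestimesoplus} then come from multiplying the corresponding entries of Table~\ref{tab:level2_timestimes} and summing over the combinations assigned to each class.

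\textbf{Algebraic finish.} Plugging in the three probabilities and squared norms and extracting the common factor $(1-p)^4/(4-3p)^2$ yields
$$\esp\!\left(\norm{\piv_{001}}^2\right) = \frac{(1-p)^4}{(4-3p)^2} \Bigl[(3p^2+2p+1)^2(4-3p)^2 + 8(3p^2+2p+1)(4-3p)\,p^3 + 16 p^6 \Bigr] + \Oq.$$
The three terms in the bracket are exactly $A^2$, $2AB$, $B^2$ with $A = (3p^2+2p+1)(4-3p)$ and $B = 4p^3$, so the bracket is the perfect square $(A+B)^2$. A one-line expansion gives $A+B = -5p^3 + 6p^2 + 5p + 4 = -(5p^3 - 6p^2 - 5p - 4)$, and squaring produces the announced numerator. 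Spotting this perfect-square structure is what collapses a potentially unwieldy polynomial into the compact form stated in the lemma.

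\textbf{Main obstacle.} The only genuinely nontrivial step is justifying the third class of shapes: one must check that the coincidence events, in which two heavy positions of the convolved $4$-heavy vectors happen to overlap, contribute only $\Oq$ to $\esp(\norm{\piv}^2)$ and can be absorbed into the $\varepsilon$ terms. This is the direct analogue, at level~$2$ with four heavy positions per factor, of the important remark at the end of Section~\ref{Appendix-UV}, and the same counting argument transfers essentially verbatim.
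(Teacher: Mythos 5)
Your proposal is correct and takes essentially the same route as the paper: you reconstruct Table~\ref{tab:level2_timestimesoplus} by convolving the pairs from Table~\ref{tab:level2_timestimes}, dispose of the heavy-position collision events exactly as in the remark closing Section~\ref{Appendix-UV}, and the resulting probability-weighted sum of squared norms agrees with the paper's. The only addition is the explicit perfect-square factorization $(A+B)^2$ at the end, which the paper leaves implicit but which correctly explains the compact form of the answer.
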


\subsection{Computation of $\esp\left( \norm{\piv_{010}}^2\right)$}
It can be observed that $\piv_{010} = \piv \times \piv'$ where $\piv$ and $\piv'$ are independent and distributed as $\piv_{01}$ which is given in Table \ref{tab:level2_timesoplus}. Table \ref{tab:level2_timesoplustimes} summarizes this distribution.

\begin{table}[h!]
\caption{It is easy to check that $\norm{\pv_3^6}^2 = 1-\varepsilon$, 
$\norm{\pv_3^7}^2 = 4C^2 + \varepsilon$,
$\norm{\pv_3^8}^2 = 4M^2 + 4N^2 + \varepsilon$ and
$\norm{\pv_3^9}^2 = 8P^2 + \varepsilon$ with $C$, $M$, $N$ and $P$ as defined in this table.\label{tab:level2_timesoplustimes}}
\begin{center}
\begin{tabular}{|c|c|}
\hline 
$\piv_{010} $ & prob. 
\\
\hline 
$\pv_3^6 = \left\{\begin{array}{c}
\pv_2^3 \times \pv_2^3, \\
\pv_2^3 \times \pv_2^4, \\
\pv_2^3 \times \pv_2^5, \\
\pv_2^3 \times \pv_2^6, \\
\pv_2^3 \times \pv_2^7, \\
\pv_2^4 \times \pv_2^4, \\
\pv_2^4 \times \pv_2^6, \\
\pv_2^6 \times \pv_2^6
\end{array}\right\}
 = \left(1-\epsilon,\varepsilon^2,\dots,\varepsilon^2 \right)$ &
$(1-p)^4(1+4p+10p^2+4p^3-p^4) + \varepsilon$
\\
\hline 
$\pv_3^7 = \left\{\begin{array}{c}
\pv_2^4 \times \pv_2^5 ,\\
\pv_2^5 \times \pv_2^5 ,
 \end{array}
\right\} 
 \eqs (\underbrace{C, \ldots, C}_{4 \text{-times}}, D, \dots, D)$ &
$4p^3(1-p)^4(4-3p)+\varepsilon$ 
\\
with $C = \frac{1-p}{4-3p}+ \varepsilon$ and $D = \frac{p}{(4-3p)(q-1)} + \varepsilon^2$ & 
\\
\hline
$\pv_3^8 = \left\{\begin{array}{c}
\pv_2^4 \times \pv_2^7 ,\\
\pv_2^5 \times \pv_2^6 ,
\pv_2^6 \times \pv_2^7 ,
 \end{array}
\right\}  
 \eqs (M,M, \underbrace{N, \ldots, N}_{4 \text{-times}}, O, \dots, O)$ &
$4p^4(1-p)^2(7p^2-18p+12) + \varepsilon$ 
\\
with $M = \frac{(1-p)^2(4-3p)}{7p^2-18p+12} + \varepsilon$,
$N=\frac{(1-p)^2}{7p^2-18p+12} + \varepsilon$ & 
\\
and $O = \frac{p(4-3p)}{(7p^2-18p+12)(q-1)} + \varepsilon^2$ &  
\\
\hline
$\pv_3^9 = \left\{\begin{array}{c}
\pv_2^6 \times \pv_2^7 ,\\
\pv_2^7 \times \pv_2^7 ,
 \end{array}
\right\}  
\eqs(\underbrace{P, \ldots, P}_{8 \text{-times}}, Q, \dots, Q)$ &
$p^5(4-3p)(5p^2-12p+8)+\varepsilon$ 
\\
with $P = \frac{(1-p)^2}{5p^2-12p+8}+ \varepsilon$ and $Q = \frac{p(4-3p)}{(5p^2-12p+8)(q-1)} + \varepsilon^2$ &  
\\
\hline
\end{tabular}
\end{center}
\end{table}

\begin{lemma}
Let $\piv_{010}$ be the APP probability vector viewed by the decoder $U_{010}$ decoder.
For the channel error model of the code $U_{010}$ we have
\begin{eqnarray*}
\esp\left( \norm{\piv_{010}}^2 \right) & = & 
\frac{S(p)(1-p)^4}
{\left(7p^2 - 18p + 12\right)\left(5p^2 - 12p + 8\right)\left(3p - 4\right))} + \Oq
\end{eqnarray*}
with $S(p) = \left(151p^9 - 662p^8 + 1094p^7 - 1624p^6 + 4105p^5 - 6598p^4 + 4252p^3 - 272p^2 - 96p - 384\right)$
\end{lemma}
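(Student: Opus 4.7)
Since $U_{010}$ plays the role of the ``$U$-code'' at node $01$, the APP vector $\piv_{010}$ is distributed as $\piv \times \piv'$ where $\piv$ and $\piv'$ are independent copies of $\piv_{01}$. The plan is therefore to push through the same case-analysis mechanism used in the previous two lemmas of Appendix~\ref{Appendix-UV3}: enumerate the five possible outcomes $\pv_2^3,\dots,\pv_2^7$ for $\piv_{01}$ with the probabilities recorded in Table~\ref{tab:level2_timesoplus}, form all unordered pairs, and for each pair apply the transformation \eqref{eq:pi_otimes} to obtain the joint outcome up to the equivalence $\eqs$. As in Appendix~\ref{Appendix-UV}, we may safely ignore the $\Oq$-probability event that two independent APP vectors place their ``large'' entries in exactly the same non-zero coordinate, since we only need $\esp(\norm{\piv_{010}}^2)$ up to an additive $\Oq$ term.

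Concretely, I would sort the $\binom{5}{2}+5=15$ pairs into the four equivalence classes already announced in Table~\ref{tab:level2_timesoplustimes}:
\begin{itemize}
\item Class $\pv_3^6$ (the eight pairs listed) yields a degenerate APP vector with norm $1-\varepsilon$.
\item Class $\pv_3^7$ comes from pairs where one factor has support of size $2$ on equally-weighted ``large'' entries (type $\pv_2^4$ or $\pv_2^5$) and the other is the single-large-entry type $\pv_2^5$; the $4$ surviving coordinates all equal $\tfrac{1-p}{4-3p}+\varepsilon$ after normalisation by the sum in the denominator of \eqref{eq:pi_otimes}, and the squared norm is $4C^2+\varepsilon$.
\item Class $\pv_3^8$ pools the pairings that mix the $2$-large-entry type with the $4$-equal-entry type $\pv_2^6$ or $\pv_2^7$; here the denominator of \eqref{eq:pi_otimes} evaluates at $Y_1(i),Y_2(i)$ to $7p^2-18p+12$ up to lower order terms, giving two entries of weight $M$ and four of weight $N$.
\item Class $\pv_3^9$ collects the $\pv_2^6/\pv_2^7$ pairings, whose denominator is $5p^2-12p+8$ and which produces $8$ equal entries $P$.
\end{itemize}
For each class the probability is obtained by summing the appropriate products from Table~\ref{tab:level2_timesoplus} (again dropping coincidence terms of order $\Oq$). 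The main verification is that the four probabilities add to $1-\Oq$, which is a good consistency check.

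Once the table is in place, the expectation is simply
\[
\esp(\norm{\piv_{010}}^2) = \sum_{j=6}^{9} \prob(\piv_{010}\eqs \pv_3^j)\,\norm{\pv_3^j}^2 + \Oq,
\]
substituting $\norm{\pv_3^6}^2=1$, $\norm{\pv_3^7}^2=4C^2=\tfrac{4(1-p)^2}{(4-3p)^2}$, $\norm{\pv_3^8}^2=4M^2+4N^2=\tfrac{4(1-p)^4\bigl((4-3p)^2+1\bigr)}{(7p^2-18p+12)^2}$ and $\norm{\pv_3^9}^2=8P^2=\tfrac{8(1-p)^4}{(5p^2-12p+8)^2}$. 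The common factor $(1-p)^4$ can be pulled out at once, since the probability of $\pv_3^6$ contains $(1-p)^4$, the probabilities of $\pv_3^7,\pv_3^8,\pv_3^9$ contain $(1-p)^4,(1-p)^2,1$ respectively while their squared norms supply the missing powers of $(1-p)^2$.

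The main obstacle is not conceptual but algebraic: after the substitution we are left with a rational function whose denominator is $(7p^2-18p+12)(5p^2-12p+8)(3p-4)$ (no further cancellation with $(4-3p)$ because $(4-3p)=-(3p-4)$ appears with odd multiplicity in the $\pv_3^8,\pv_3^9$ probabilities and in the $\pv_3^7$ norm), and whose numerator is a polynomial of degree $9$ in $p$. I would reduce the four terms to a common denominator, expand, collect, and check that the resulting numerator equals $S(p) = 151p^9 - 662p^8 + 1094p^7 - 1624p^6 + 4105p^5 - 6598p^4 + 4252p^3 - 272p^2 - 96p - 384$ up to the sign arising from writing $(3p-4)$ instead of $(4-3p)$. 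A computer algebra check at, say, $p=0$ (the expectation must equal $1$) and at $p=1/2$ provides a sanity test for the final closed form before declaring the lemma proved.
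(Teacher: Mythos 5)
Your overall strategy is exactly the paper's: the paper's ``proof'' of this lemma is nothing more than Table~\ref{tab:level2_timesoplustimes} (enumerate the pairs of outcomes of $\piv_{01}$, classify the normalized componentwise products into the four classes $\pv_3^6,\dots,\pv_3^9$, and sum probability times squared norm). So there is no methodological divergence to discuss.

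There is, however, a genuine error in the one place where you commit to an explicit formula: the squared norm of the class-$\pv_3^8$ vector. That vector has shape $(M,M,N,N,N,N,O,\dots,O)$ --- \emph{two} entries of size $M$ (coming from the two large positions of the $\pv_2^4$/$\pv_2^5$ factor, where the other factor contributes its small entry $H$) and \emph{four} entries of size $N$ (the four large positions of the $\pv_2^6$/$\pv_2^7$ factor, paired with the small entry $F$) --- so its squared norm is $2M^2+4N^2$, not $4M^2+4N^2$. Moreover the correct value of $M$ is $\frac{(1-p)(4-3p)}{7p^2-18p+12}$, with only one power of $(1-p)$: indeed $M\propto E\cdot H=\frac{(1-p)p(4-3p)}{(q-1)(2-p)^3}$ while $N\propto F\cdot G=\frac{p(1-p)^2}{(q-1)(2-p)^3}$, so $M/N=(4-3p)/(1-p)$, and only with this $M$ does the normalization check $2M+4N+(q-6)O=1+\Oq$ go through (with your value the entries sum to $\frac{12-26p+21p^2-6p^3}{12-18p+7p^2}\neq 1$). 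The correct contribution of this class is therefore $\frac{8p^4(1-p)^4\left((4-3p)^2+2(1-p)^2\right)}{7p^2-18p+12}$, whereas your formula gives $\frac{16p^4(1-p)^6\left((4-3p)^2+1\right)}{7p^2-18p+12}$; at $p=1/2$ these are about $0.711$ versus $0.024$, and the resulting expectation is $0.4400$ (matching $S(p)$) versus $0.3971$. So the final algebraic verification you propose would fail. Relatedly, your verbal description of the classes by ``type of factor'' alone is not the right criterion: the classification is governed by whether the large-entry supports of the two factors intersect (generically, if and only if both contain position $0$), which is why $\pv_2^4\times\pv_2^6$ --- a mix of a two-large-entry type with a four-large-entry type --- nevertheless collapses to the degenerate class $\pv_3^6$. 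Once you correct the $\pv_3^8$ norm and make the support-intersection criterion explicit (and recompute the outcome probabilities of $\piv_{01}$ yourself rather than trusting the stated ones --- the probability of $\pv_2^7$ must be $p^3(4-3p)$ for the five probabilities to sum to $1$), the computation does close and yields the stated $S(p)$.
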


\subsection{Computation of $\esp\left( \norm{\piv_{011}}^2\right)$}
It can be observed that $\piv_{011} = \piv \oplus \piv'$ where $\piv$ and $\piv'$ are independent and distributed as $\piv_{01}$ which is given in Table \ref{tab:level2_timesoplus}. Table \ref{tab:level2_timesoplusoplus} summarizes this distribution.

\begin{table}[h!]
\caption{It is easy to check that $\norm{\pv_3^{10}}^2 = 1-\varepsilon$, 
$\norm{\pv_3^{11}}^2 = \norm{\pv_3^{12}}^2 = 2A^2 + \varepsilon$,
$\norm{\pv_3^{13}}^2 = 4G^2+ \varepsilon$,
$\norm{\pv_3^{14}}^2 = 8R^2+ \varepsilon$
and
$\norm{\pv_3^{15}}^2 = 16T^2 + \varepsilon$ with $A$, $G$, $R$ and $T$ as defined in this table.\label{tab:level2_timesoplusoplus}}
\begin{center}
\begin{tabular}{|c|c|}
\hline 
$\piv_{011} $ & prob.  
\\
\hline 
$\pv_3^{10} =
\pv_2^3 \oplus \pv_2^3 =  
 \left(1-\epsilon,\varepsilon^2,\dots,\varepsilon^2 \right)$ &
$(1-p)^8+\varepsilon$
\\
\hline 
$\pv_3^{11} = \pv_2^3 \oplus \pv_2^4  
 \eqs \left(A, A, B, \dots, B\right)$ & $8p(1-p)^7+\varepsilon$\\
with $A=\frac{1-p}{2-p} + \varepsilon$ and $B=\frac{p}{(2-p)(q-1)}+\varepsilon^2$ & \\
\hline
$\pv_3^{12} = 
\pv_2^3 \oplus \pv_2^5 \eqs \left(B, A, A, B, \dots, B\right)$ & $4p^2(1-p)^6+\varepsilon$ \\
with $A$ and $B$ as in the previous case & \\
\hline
$\pv_3^{13} = \left\{ \begin{array}{c}
\pv_2^3 \oplus \pv_2^6, \\
\pv_2^3 \oplus \pv_2^7, \\
\pv_2^4 \oplus \pv_2^4, \\
\pv_2^4 \oplus \pv_2^5, \\
\pv_2^5 \oplus \pv_2^5
\end{array}\right\} 
 \eqs (\underbrace{G, \ldots, G}_{4 \text{-times}}, H, \dots, H)$ &
$6(1-p)^4(2-p)^2p^2+\varepsilon$ 
\\
with $G = \left(\frac{1-p}{2-p}\right)^2+ \varepsilon$ and $H = \frac{p(4-3p)}{(2-p)^2(q-1)} + \varepsilon^2$ & 
\\
\hline
$\pv_3^{14} = \left\{ \begin{array}{c}
\pv_2^4 \oplus \pv_2^6, \\
\pv_2^4 \oplus \pv_2^7, \\
\pv_2^5 \oplus \pv_2^6, \\
\pv_2^5 \oplus \pv_2^7
\end{array}\right\} 
 \eqs (\underbrace{R, \ldots, R}_{8 \text{-times}}, S, \dots, S)$ & 
$4(1-p)^2(2-p)^3p^3+\varepsilon$ \\
with $R=\left(\frac{1-p}{2-p}\right)^3 + \varepsilon$ and $S=\frac{(7p^2-18p+12)p}{(2-p)^2(q-1)} + \varepsilon^2$ & \\
\hline
$\pv_3^{15} = \left\{\begin{array}{c}
\pv_2^6 \oplus \pv_2^6, \\
\pv_2^6 \oplus \pv_2^7, \\
\pv_2^7 \oplus \pv_2^7
\end{array}\right\} 
 \eqs (\underbrace{T, \ldots, T}_{16\text{-times}}, U, \dots, U)$ & 
$(2-p)^4p^4+\varepsilon$ \\
with $T=\left(\frac{1-p}{2-p}\right)^4 + \varepsilon$ and
$U=\frac{(5p^2-12p+8)p(4-3p)}{(2-p)^4(q-1)}+ \varepsilon^2$ & \\
\hline
\end{tabular}
\end{center}
\end{table}

\begin{lemma}
Let $\piv_{011}$ be the APP probability vector viewed by the decoder $U_{011}$ decoder.
For the channel error model of the code $U_{011}$ we have
\begin{eqnarray*}
\esp\left( \norm{\piv_{010}}^2 \right) & = & 
\frac{(p+2)^4(p-1)^8}{(2-p)^4}+ \Oq
\end{eqnarray*}
\end{lemma}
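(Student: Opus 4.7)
The plan is to compute $\esp\bigl(\norm{\piv_{011}}^2\bigr)$ directly from Table \ref{tab:level2_timesoplusoplus}, which enumerates (up to the equivalence $\eqs$) the six classes $\pv_3^{10},\dots,\pv_3^{15}$ of outcomes for $\piv_{011}=\piv\oplus\piv'$ with $\piv,\piv'$ independent copies of $\piv_{01}$, together with their probabilities and squared norms, all exact modulo $\Oq$. Setting $A \eqdef \tfrac{1-p}{2-p}$, the squared norms read $1,\ 2A^2,\ 2A^2,\ 4A^4,\ 8A^6,\ 16A^8$ (using the table's $G=A^2,\ R=A^3,\ T=A^4$). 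Summing $\prob(\pv_3^j)\cdot\norm{\pv_3^j}^2$ over $j=10,\dots,15$ yields, modulo $\Oq$,
$$(1-p)^8 + \frac{16p(1-p)^9}{(2-p)^2} + \frac{8p^2(1-p)^8}{(2-p)^2} + \frac{24p^2(1-p)^8}{(2-p)^2} + \frac{32p^3(1-p)^8}{(2-p)^3} + \frac{16p^4(1-p)^8}{(2-p)^4}.$$

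After combining the two $p^2$-terms into $32p^2(1-p)^8/(2-p)^2$ and factoring out $(1-p)^8/(2-p)^4$, the lemma reduces to verifying the polynomial identity
$$(2-p)^4 + 16p(1-p)(2-p)^2 + 32p^2(2-p)^2 + 32p^3(2-p) + 16p^4 \;=\; (p+2)^4.$$
I expand each summand and collect coefficients of $p^k$ for $k=0,\dots,4$, which produces $16,\ 32,\ 24,\ 8,\ 1$, matching the binomial expansion $(p+2)^4 = p^4 + 8p^3 + 24p^2 + 32p + 16$. Substituting back gives the claimed closed form $\tfrac{(p+2)^4(1-p)^8}{(2-p)^4}+\Oq$.

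The proof is entirely mechanical once Table \ref{tab:level2_timesoplusoplus} is accepted, so there is no conceptual obstacle. The only bookkeeping care needed is: (i) distinguishing the $\pv_3^{11}$ contribution, whose probability $8p(1-p)^7$ carries an extra $(1-p)$ factor relative to the other $p^k(1-p)^8$ contributions, and (ii) tracking the $\Oq$ remainder, which, exactly as in the earlier lemmas of the appendix (cf. the ``Important remark'' after Lemma \ref{L11}), absorbs the small probability that two independent APP vectors happen to place their distinguished non-first coordinate at the same position. The most delicate piece of arithmetic is the collapse of the $p^3$ coefficient from $-8+80-128+64$ to $8$ and of the $p^4$ coefficient from $1-16+32-32+16$ to $1$; both follow by direct substitution and confirm the recognition of $(p+2)^4$.
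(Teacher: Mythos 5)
Your computation is correct and is exactly the argument the paper intends: the lemma is proved by reading off the six probability/norm pairs from Table \ref{tab:level2_timesoplusoplus} and summing, and your polynomial identity $(2-p)^4+16p(1-p)(2-p)^2+32p^2(2-p)^2+32p^3(2-p)+16p^4=(p+2)^4$ checks out (it is the binomial expansion of $((2-p)+2p)^4$ in disguise). Nothing is missing.
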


\subsection{Computation of $\esp\left( \norm{\piv_{100}}^2\right)$,
$\esp\left( \norm{\piv_{101}}^2\right)$,
$\esp\left( \norm{\piv_{110}}^2\right)$ and 
$\esp\left( \norm{\piv_{111}}^2\right)$}

Note that $\piv_{110}$ and $\piv_{111}$ are distributed like $\piv \times \piv'$ and $\piv \oplus \piv'$ where $\piv$ and $\piv'$ are independent and distributed like $\piv_{11}$ which is itself the APP vector obtained from transmitting $0$ over a $q\hbox{-SC}_{p_2+\varepsilon}$ with $p_2=2p_1-p_1^2$. We can therefore use directly Lemmas \ref{L12} and \ref{L11} and obtain

\begin{lemma}
Let $\piv_{110}$ and $\piv_{111}$ be the APP probability vectors viewed by the decoder $U_{110}$ and $U_{111}$, respectively. For the channel error model of the code $U_{110}$ we have
\begin{eqnarray*}
\esp\left( \norm{\piv_{110}}^2 \right) & = & \frac{(2+p_2)(1-p_2)^2}{(2-p_2)} + \Oq\\
& = & \frac{-\left(p^4 - 4p^3 + 6p^2 - 4p - 2\right)(p - 1)^8}{(p^4 - 4p^3 + 6p^2 - 4p +
2)} + \Oq
\end{eqnarray*}
The channel error model for the code $U_{111}$ is a $q\hbox{-SC}_{p_3+\varepsilon}$ with $p_3=2p_2-p_2^2$
\begin{eqnarray*}
\esp\left( \norm{\piv_{111}}^2 \right) &= & (1-p_3)^2 + \Oq = (1-p)^{16} + \Oq 
\end{eqnarray*}
\end{lemma}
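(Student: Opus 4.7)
The key observation is that the derivation reduces, via Lemma \ref{L234} (the depth-$2$ analog for the right branch), to a one-level computation with the effective crossover probability $p_2 = 2p_1 - p_1^2$, where $p_1 = 2p-p^2$. Indeed, Lemma \ref{L234} asserts that the channel faced by the decoder for $U_{11}$ is a $q\hbox{-SC}_{p_2+\varepsilon}$, so $\piv_{11}$ has, up to an $\varepsilon$-additive perturbation in each probability, the same distribution as the APP vector obtained by transmitting the zero symbol over a $q\hbox{-SC}_{p_2}$. The APP vectors $\piv_{110}$ and $\piv_{111}$ are obtained from two independent copies of $\piv_{11}$ through the operations $\times$ and $\oplus$ respectively.

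The plan is therefore to apply Lemmas \ref{L12} and \ref{L11} with $p$ replaced by $p_2$. Lemma \ref{L12}, applied in this way, gives directly
\[
\esp\left(\norm{\piv_{110}}^2\right) = \frac{(p_2+2)(1-p_2)^2}{2-p_2} + \Oq,
\]
and Lemma \ref{L11} shows that the channel faced by the $U_{111}$ decoder is a $q\hbox{-SC}_{p_3+\varepsilon}$ with $p_3 = 2p_2 - p_2^2$ and that
\[
\esp\left(\norm{\piv_{111}}^2\right) = (1-p_3)^2 + \Oq.
\]
The applicability of these two lemmas is justified by the Important Remark at the end of Appendix \ref{Appendix-UV}: collisions of the special ``$1-p$'' coordinate in two independent APP vectors contribute only an $\Oq$ term, so the recursion is legitimate up to the $\varepsilon$-tolerance already carried throughout the analysis.

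It remains to simplify both expressions. The key elementary identity, obtained by direct expansion, is $1-p_i = (1-p_{i-1})^2$, which yields $1-p_1 = (1-p)^2$, $1-p_2 = (1-p)^4$, $1-p_3 = (1-p)^8$. Substituting in the $U_{111}$ formula gives $(1-p_3)^2 = (1-p)^{16} + \Oq$, matching the stated expression. For the $U_{110}$ formula, substituting $1-p_2 = (1-p)^4$ gives $2-p_2 = 1 + (1-p)^4 = p^4 - 4p^3 + 6p^2 - 4p + 2$ and $p_2+2 = 3 - (1-p)^4 = -(p^4 - 4p^3 + 6p^2 - 4p - 2)$, after which the claimed closed form follows directly.

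The proof is essentially a bookkeeping exercise relying on the recursive structure, with no genuine obstacle; the only point requiring care is to invoke Lemmas \ref{L12} and \ref{L11} at the right place and to verify that the $\varepsilon$-perturbation in the effective channel $q\hbox{-SC}_{p_2+\varepsilon}$ (rather than a clean $q\hbox{-SC}_{p_2}$) still produces the same results to within $\Oq$, which is immediate since the quantities computed are smooth functions of the crossover probability and all probabilities involved are already tracked up to $\Oq$ additive error.
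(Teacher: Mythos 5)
Your proof is correct and follows essentially the same route as the paper: observe that $\piv_{110}$ and $\piv_{111}$ are distributed as $\piv\times\piv'$ and $\piv\oplus\piv'$ for independent copies of $\piv_{11}$, which by Lemma \ref{L234} is the APP vector of a $q\hbox{-SC}_{p_2+\varepsilon}$, and then apply Lemmas \ref{L12} and \ref{L11} with $p$ replaced by $p_2$. Your explicit verification of the identities $1-p_i=(1-p_{i-1})^2$ and the resulting closed forms is exactly the (omitted) bookkeeping the paper relies on.
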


Note that $\piv_{100}$ and $\piv_{101}$ are distributed like $\piv \times \piv'$
and $\piv \oplus \piv'$ where $\piv$ and $\piv'$ are independent and distributed like $\piv_{10}$.
We can therefore use directly Lemmas \ref{L21} and \ref{L22} and obtain

\begin{lemma}
Let $\piv_{100}$ and $\piv_{101}$ be the APP probability vectors viewed by the decoder $U_{100}$ and $U_{101}$, respectively. For the channel error model of the code $U_{100}$ and $U_{101}$ we have
\begin{eqnarray*}
\esp\left( \norm{\piv_{100}}^2 \right) &= & \frac{(5p_1^3-6p_1-5p_1-4)(1-p_1)^2}{(3p_1-4)}+ \Oq \\
& = &
\frac{\left(5p^6 - 30p^5 + 66p^4 - 64p^3 + 19p^2 + 10p + 4\right)(p - 1)^4}{(3p^2- 6p + 4)} + \Oq \\
\esp\left( \norm{\piv_{101}}^2 \right) &= & \frac{(p_1+2)^2(1-p_1)^4}{(2-p_1)^2} + \Oq = 
\frac{(p^2 - 2p - 2)^2(p - 1)^8}{(p^2 - 2p + 2)^2} + \Oq
\end{eqnarray*}

\end{lemma}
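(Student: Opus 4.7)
The plan is to reduce this lemma to the already-established Lemmas \ref{L21} and \ref{L22}, exploiting the fact that $\piv_{10}$ behaves like a level-$0$ APP vector but for a different crossover probability. The key observation, which is already recorded in Lemma \ref{L234}, is that the channel viewed by the decoder of $U_{10}$ is (up to $\Oq$ perturbations) a $\qSCp$ with crossover probability $p_1 \eqdef 2p - p^2$. Consequently $\piv_{10}$ has the same distribution (again up to $\Oq$ perturbations) as the base APP vector $\piv$ of a $\qSCp$ when $p$ is replaced by $p_1$ throughout.

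First I would spell out that, since $U_{100}$ is the left child of $U_{10}$ in the iterated $\UV$ tree, the vector $\piv_{100}$ is distributed as $\piv \times \piv'$ where $\piv,\piv'$ are i.i.d.\ copies of $\piv_{10}$. This is precisely the situation analyzed in Lemma \ref{L21} but with $p$ replaced by $p_1$. Invoking that lemma directly yields
$$\esp\left(\norm{\piv_{100}}^2\right) = \frac{(5p_1^3 - 6p_1^2 - 5p_1 - 4)(1-p_1)^2}{3p_1 - 4}+ \Oq.$$
An entirely analogous argument for $\piv_{101} \sim \piv \oplus \piv'$, using Lemma \ref{L22} with $p \leftarrow p_1$, gives
$$\esp\left(\norm{\piv_{101}}^2\right) = \frac{(p_1+2)^2 (1-p_1)^4}{(2-p_1)^2} + \Oq.$$
Because $\piv_{10}$ inherits the cyclic-symmetric structure of the underlying $q$-SC (by Proposition \ref{pr:csc}), the hypotheses of Lemmas \ref{L21} and \ref{L22} are indeed satisfied, so no extra work is needed to justify reusing those tables.

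The only remaining step is the algebraic substitution $p_1 = 2p - p^2$, together with the convenient identities $1 - p_1 = (1-p)^2$, $2 - p_1 = p^2 - 2p + 2$, and $p_1 + 2 = -(p^2 - 2p - 2)$. Plugging these in, the $\piv_{101}$ expression collapses immediately to $(p^2 - 2p - 2)^2 (p-1)^8 / (p^2 - 2p + 2)^2$, while the $\piv_{100}$ expression requires expanding the cubic $5p_1^3 - 6p_1^2 - 5p_1 - 4$ and the linear denominator $3p_1 - 4$ in $p$ and then dividing, producing the sextic numerator and quadratic denominator stated in the lemma. The main obstacle here is purely bookkeeping: this is the kind of computation that is easy to get wrong by hand and that I would verify with a computer algebra system. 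No new probabilistic argument is needed, since all the heavy lifting has already been done at levels $1$ and $2$ of the construction.
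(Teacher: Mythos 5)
Your overall strategy is the paper's: identify $\piv_{100}$ and $\piv_{101}$ as $\piv\times\piv'$ and $\piv\oplus\piv'$ for i.i.d.\ copies of $\piv_{10}$, invoke Lemmas \ref{L21} and \ref{L22} with $p$ replaced by $p_1=2p-p^2$, and substitute. The closing algebra (the identities $1-p_1=(1-p)^2$, $2-p_1=p^2-2p+2$, $p_1+2=-(p^2-2p-2)$) and the final formulas are correct.

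However, the justification you give for why Lemmas \ref{L21} and \ref{L22} apply with $p\leftarrow p_1$ is wrong as stated. You claim that the channel viewed by the decoder of $U_{10}$ is, up to $\Oq$, a $q\hbox{-SC}_{p_1}$, and you attribute this to Lemma \ref{L234}. Neither is true: Lemma \ref{L234} asserts that the channel viewed by $U_{11}$ is a $q\hbox{-SC}_{p_2+\varepsilon}$ with $p_2=2p_1-p_1^2$, while the channel viewed by $U_{10}$ is the $W^0$-combination of a $q\hbox{-SC}_{p_1}$ and is \emph{not} a $q$-ary symmetric channel; its APP vector takes the three shapes of Table \ref{tab:level1_times} (with $p\leftarrow p_1$), and indeed $\esp\left(\norm{\piv_{10}}^2\right)=\frac{(2+p_1)(1-p_1)^2}{2-p_1}+\Oq$ rather than $(1-p_1)^2+\Oq$. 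This is not a harmless slip: if $\piv_{10}$ really were the base APP vector of a $q\hbox{-SC}_{p_1}$, then $\piv_{100}=\piv_{10}\times\piv_{10}'$ would be the situation of Lemma \ref{L12} (with $p\leftarrow p_1$), not of Lemma \ref{L21}, and you would land on the wrong formula. The correct chain is: by Lemma \ref{L11} it is $\piv_1$ that is (up to $\varepsilon$) the base APP vector of a $q\hbox{-SC}_{p_1}$; hence $\piv_{10}=\piv_1\times\piv_1'$ is distributed like $\piv_0$ of Table \ref{tab:level1_times} with $p$ replaced by $p_1$; hence $\piv_{100}=\piv_{10}\times\piv_{10}'$ and $\piv_{101}=\piv_{10}\oplus\piv_{10}'$ are exactly the inputs analyzed in Lemmas \ref{L21} and \ref{L22}, with $p\leftarrow p_1$. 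With that one correction your argument coincides with the paper's proof.
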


\section{The Koetter-Vardy decoding algorithm for AG codes}
\label{sec:KV_AG}

The Koetter-Vardy decoding algorithm for Reed-Solomon codes can be adapted to AG codes as was shown in \cite{KV03a}. 
In this appendix, we give a short description of this algorithm and a review of the main results of \cite{KV03a} that  we need
to  prove
Theorem \ref{th:listsize_AG}. This section is essentially nothing but a subset of results presented for AG codes in the preprint version 
\cite{KV03a} which we repeat here for the convenience of the reader since the additional material present in the preprint version
has not been published as far as we know.

We first begin with the notion of a {\em gap} which will be useful to describe the Koetter-Vardy soft decoding algorithm 
for AG codes. We consider an algebraic curve $\X$ defined over a finite field $\Fq$ of genus $g$. 
Let $Q$ be a rational point on $\X$. We also assume that $\X$ has at least $n$ other rational points $P_1,\dots,P_n$ besides $\X$.
 A positive integer $i$ is called a {\em (Weierstrass) gap} at $Q$ if $\mathcal L(iQ)=\mathcal L((i-1)Q)$. Otherwise $i$ is a {\em non-gap} at $Q$. It is well known that gaps at $Q$ lie in the interval $[0, 2g-1]$ and that the number of gaps is equal to $g$.

These gaps at $Q$ can be used to construct a basis for the space $\mathcal L(mQ)$: we fix an arbitrary rational function $\phi_i \in \mathcal L(iQ) \setminus \mathcal L((i-1)Q)$ if $i$ is a non-gap at $Q$ and we set $\phi_i = 0$ otherwise, for $i\in \{0, \ldots, m\}$.

The ring of rational functions that have either no pole or just one pole at $Q$ which is defined by 
$$\mathcal K_Q \eqdef \bigcup_{i=0}^{\infty} \mathcal L(iQ)$$
will also be helpful in what follows.
In other words, 
we can write any polynomial $A(Y) \in \mathcal K_Q[Y]$ in a unique way as $A(Y)=\sum_{i,j} a_{i,j}\phi_i Y^j$.
This allows to define for a pair $(w_Q, w_y)$ of nonnegative real numbers the 
\emph{$(w_Q, w_y)$-weighted $Q$-valuation} of $A(Y)$, denoted by $\deg_{w_Q,w_y}(A(Y))$, which is  the maximum over all numbers $iw_Q + jw_y$ such that $a_{i,j}\neq 0$.

We will also need the notion of the multiplicity of a polynomial in $\mathcal K_Q[Y]$ at a certain point $P$.
For this purpose, it will be convenient to introduce a new basis for $\mathcal L(mQ)$. 
We define $\phi_{0,P},\phi_{1,P},\dots,\phi_{m,P}$ as follows. If there exists at least one $f \in \mathcal L(mQ)$ that has multiplicity
exactly $i$ at $P$ we set $\phi_{i,P}$ to be one of these functions (we make an arbitrary choice if there are several functions of this kind).
If there is no such function we set $\phi_{i,P}= 0$. For the case we are interested in, namely 
$n > m \geq 2g-1$, it is known that there are exactly $g$ indices $i$ for 
which $\phi_{i,P}= 0$. It is known \cite{S93a} that 
the set of functions among $\phi_{0,P},\phi_{1,P},\dots,\phi_{m,P}$ which are non zero form a basis of $\mathcal K_Q[Y]$.
We write from now on each $f$ in $\mathcal L(mQ)$ in a unique way as
$$
f = \sum_{i=0}^m a_{i} \phi_{i,P}
$$
when we assume that $a_{i}=0$ if $\phi_{i,P}=0$.
\begin{definition}[multiplicity of a polynomial in {$\mathcal K_Q[Y]$}]
Let $A(Y)$ be a polynomial in $\mathcal K_Q[Y]$ and consider the shifted polynomial $A(Y+\alpha)$ expressed using the above basis, that is,
\begin{equation}
\label{equation-multiplicity}
A(Y+\alpha) = \sum_{i,j} b_{i,j} \phi_{i,P} (Y+\alpha)^j
\end{equation}
we say that $A(Y)$ has a \emph{zero of multiplicity $m$} at the interpolation point $(P,\alpha)$ if $b_{i,j} = 0$ for $i+j<m$ and there exists a nonzero coefficient $b_{i,j}$ with $i+j=m$.
\end{definition}

We are ready now for describing the Koetter-Vardy decoding algorithm for AG codes.
We consider here an  AG code $\C{\X}{\P}{mQ}$ of length $n$ over $\Fq$ defined by a set of $n+1$ distinct $\mathbb F_q$-rational points:  $Q$ and $\mathcal P =\left\{P_1, \ldots, P_n\right\}$. We are also given a multiplicity matrix $\Mm=\left(m_{\alpha,j}\right)_{\substack{\alpha \in \Fq \\
1 \leq j \leq n}}$. 

\noindent
{\bf Interpolation step:} It  consists in computing the (nontrivial) polynomial $Q_{\Mm}(Y)\in \mathcal K_Q[Y]$ of minimal $(1,m)$-weighted $Q$-valuation that has a zero of multiplicity at least $m_{\alpha,j}$ at the interpolation point $(P_j,\alpha)$.

\noindent
{\bf Factorization step:}  It consists in identifying all the factors of $Q_{\Mm}(Y)$ of type $Y-f$ with $f\in \mathcal L(mQ)$. 
The output of the algorithm is a list of the codewords that correspond to these factors.

The following quantities will be useful for understanding this algorithm.
\begin{itemize}
\item The number of monomials whose $(w_Q, w_y)$-weighted $Q$-valuation is at most $\delta$ is denoted $N_{w_Q, w_y} (\delta, \X)$. Thus:
$$N_{w_Q, w_y} (\delta, \X) \eqdef |\left\{ (i,j)\mid i,j \geq 0,\text{ $\phi_i \neq 0$ and $iw_Q + jw_y \leq \delta$}\right\}|$$
\item We define the inverse function
$$\Delta_{w_Q, w_y} (\nu, \X) \eqdef \min \left\{ \delta \in \mathbb Z \mid N_{w_Q, w_y} (\delta, \X) > \nu \right\} $$
\end{itemize}

To get a better understanding of  the soft-decision algorithm of AG codes the following theorem will be very helpful 
\begin{theorem}[Theorem 18 and Corollary 20 \cite{KV03a}]
\label{AG-1}
Let $C=C(\Mm)$ denote the cost of the multiplicity matrix $\Mm$.
The list obtained by factoring the interpolation polynomial $Q_{\Mm}(Y)$ contains a codeword $\mathbf c \in \C{\X}{\P}{mQ}$ if
\begin{equation}
\label{eq:SC} 
\left\langle \Mm, \left\lfloor \mathbf c\right\rfloor\right\rangle >  \Delta_{1,m}(C,\mathcal X) 
\end{equation}
We have the following upper bound on $\Delta_{1,m}(C,\mathcal X)$:
\begin{equation}
\label{eq:upper_bound_Delta}
\Delta_{1,m}(C,\mathcal X) \leq g + \sqrt{2m(C+g)+g^2}.
\end{equation} 
\end{theorem}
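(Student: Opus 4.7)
The plan is to establish two independent claims: the interpolation-plus-factorisation argument that yields the success condition \eqref{eq:SC}, and the combinatorial estimate on $\Delta_{1,m}(C,\X)$. I would organise the argument in three stages.

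\textbf{Stage 1: Existence of an interpolation polynomial with small $Q$-valuation.} I would first verify that the minimum $(1,m)$-weighted $Q$-valuation of a nonzero $Q_\Mm(Y)\in\mathcal K_Q[Y]$ satisfying all the interpolation conditions is at most $\Delta_{1,m}(C,\X)$. Expanding $Q_\Mm(Y+\alpha)$ in the basis $\{\phi_{i,P_j}(Y+\alpha)^\ell\}$ as in \eqref{equation-multiplicity}, the requirement that $Q_\Mm$ vanish to order at least $m_{\alpha,j}$ at $(P_j,\alpha)$ amounts to $\binom{m_{\alpha,j}+1}{2}$ homogeneous linear equations on the coefficients. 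Summed over $(\alpha,j)$, the total number of constraints equals $C=C(\Mm)$. The space of candidate $A(Y)$ with $(1,m)$-weighted $Q$-valuation at most $\delta$ has dimension $N_{1,m}(\delta,\X)$; whenever this dimension exceeds $C$, a nonzero solution exists. By the definition of $\Delta_{1,m}$, the minimal admissible $\delta$ is therefore at most $\Delta_{1,m}(C,\X)$.

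\textbf{Stage 2: Factor theorem.} The second step is the standard ``too many zeros'' argument. For $f\in\mathcal L(mQ)$ corresponding to a codeword $\cv$, consider $A(f)\eqdef Q_\Mm(f)\in\mathcal K_Q$. Each monomial $\phi_i Y^j$ appearing in $Q_\Mm$ contributes a function whose pole order at $Q$ is at most $i+jm$, so $A(f)$ has pole order at most $\Delta_{1,m}(C,\X)$ at $Q$. At the other end, because $f(P_j)=c_j$, the shift $f-c_j$ vanishes at $P_j$, and expanding $A(Y+c_j)$ in the basis $\{\phi_{i,P_j}(Y+c_j)^\ell\}$ shows that every surviving monomial $b_{i,\ell}\phi_{i,P_j}(f-c_j)^\ell$ vanishes to order at least $i+\ell\ge m_{c_j,j}$ at $P_j$. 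Summing multiplicities over $j$ gives $\sum_j m_{c_j,j}=\langle\Mm,\lfloor\cv\rfloor\rangle$. Hypothesis \eqref{eq:SC} makes this strictly larger than the pole order at $Q$, which forces $A(f)\equiv 0$ by the fundamental estimate that a nonzero element of $\mathcal L(kQ)$ has at most $k$ zeros in $\P$. Hence $(Y-f)\mid Q_\Mm$, so the factorisation step outputs $f$, as claimed.

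\textbf{Stage 3: Upper bound on $\Delta_{1,m}(C,\X)$.} For \eqref{eq:upper_bound_Delta} I would bound $N_{1,m}(\delta,\X)$ from below by counting pairs $(i,j)$ with $i+jm\le\delta$ and $\phi_i\ne 0$. The Weierstrass gap theorem guarantees that the number of non-gaps in $[0,k]$ is at least $k+1-g$ whenever $k\ge g$ (and exactly $k+1-g$ for $k\ge 2g-1$). Setting $J=\lfloor(\delta-g)/m\rfloor$, I would get
\[
N_{1,m}(\delta,\X)\ \ge\ \sum_{j=0}^{J}\bigl(\delta-jm+1-g\bigr)\ =\ (J+1)(\delta+1-g)-\tfrac{m}{2}J(J+1).
\]
Plugging in $\delta=g+\sqrt{2m(C+g)+g^2}$, so that $(\delta-g)(\delta+g)=2m(C+g)$, a direct manipulation (bounding $J+1>(\delta-g)/m$ and grouping terms) yields $N_{1,m}(\delta,\X)>C$, hence $\Delta_{1,m}(C,\X)\le\delta$ by definition.

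\textbf{Where I expect difficulty.} Stage 1 and Stage 2 are essentially formal once one is comfortable with the basis $\{\phi_{i,P}\}$ and with the pole/zero accounting on $\mathcal K_Q$; the only subtlety is in Stage 2, where one must check that the mixed basis $\phi_{i,P_j}(Y+c_j)^\ell$ correctly captures the ``multiplicity at $(P_j,c_j)$'' notion defined in the text. The main technical obstacle is Stage 3: the naive sum overshoots or undershoots because of the floor in $J$ and because the inequality ``non-gaps in $[0,k]\ge k+1-g$'' is sharp only for $k\ge 2g-1$. The extra additive $+g$ inside the square root in \eqref{eq:upper_bound_Delta} is precisely what absorbs these discrepancies, so the calculation must be done carefully enough to confirm this is the right correction; that bookkeeping is the real work.
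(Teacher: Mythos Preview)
Your argument is correct and follows the paper's line. Stages~1--2 are the paper's proof (in fact the paper omits your Stage~1 entirely and simply uses $\deg_{1,m}Q_{\Mm}\le\Delta_{1,m}(C,\X)$ without comment, so your inclusion of the linear-algebra existence step is a welcome addition). For Stage~3 the paper packages the same count slightly more cleanly: it writes
\[
N_{1,m}(\delta,\X)\ \ge\ N_{1,m}(\delta)\;-\;g\Bigl\lfloor\tfrac{\delta}{m}+1\Bigr\rfloor,
\]
where $N_{1,m}(\delta)$ is the ordinary bivariate monomial count, and then invokes the known bound $N_{1,m}(\delta)>\delta^2/(2m)$ from \cite{KV03}; this yields $N_{1,m}(\delta,\X)>\tfrac{\delta^2}{2m}-g(\tfrac{\delta}{m}+1)$, i.e.\ $(\delta-g)^2<2m(\nu+g)+g^2$, directly, sidestepping the floor bookkeeping you anticipate. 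One small slip in your sketch: from $\delta=g+\sqrt{2m(C+g)+g^2}$ one gets $(\delta-g)^2=2m(C+g)+g^2$, not $(\delta-g)(\delta+g)=2m(C+g)$; this does not affect the strategy, but the algebra you defer will need to be redone from the correct identity.
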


\begin{proof}

We first prove that if the condition \eqref{eq:SC} holds then the list contains the codeword $\cv$.

Recall that for every $\mathbf c\in \C{\X}{\P}{mQ}$ there exists a rational function $f \in \mathcal L(mQ)$ such that $c_j = f(P_j)$ for $j=1, \ldots, n$. 
Given the interpolation polynomial $Q_{\Mm}(Y)\in \mathcal K_{Q}[Y]$,
we consider the function $h\in \mathcal K_Q$ defined by $h=Q_{\Mm}(f)$.
By construction, $Q_{\Mm}(Y)$ passes trough the points $(P_\ell,c_\ell)$ with multiplicities at least $m_\ell$ where $\left\langle \Mm,\lfloor \mathbf c \rfloor\right\rangle = m_1 + \cdots + m_n$. Then:
\begin{itemize}
\item 
We claim that  the function $h=Q_{\Mm}(f)$ has at least $\left\langle \Mm,\lfloor \mathbf c \rfloor\right\rangle$ zeros in $\mathcal P$ counted with multiplicities. 
Indeed, if $Q_{\Mm}(Y)$ passes through the interpolation point $(P_\ell, c_\ell)$ with multiplicity at least $m_\ell$ and we express $Q_{\Mm}(Y)$  in the basis 
of the $\Phi_{i,P_\ell}$'s we have
$$Q_{\Mm}(Y) = \sum_{i,j} a_{i,j} \phi_{i,P_\ell} (Y-c_\ell)^j.$$
But, by \eqref{equation-multiplicity}
it is required that $a_{i,j}=0$ if $i+j<m_\ell$. We thus get that $h=Q_{\Mm}(f) = \sum_{i,j} a_{i,j} \phi_{i,P_\ell} (f-c_ell)^j$ has a zero of multiplicity $m_\ell$ at the point $P_\ell$
since $f(P_\ell)=c_\ell$.

\item Since $f\in \mathcal L(mQ)$ and $\deg_{1,m}(Q_{\Mm}(Y))\leq \Delta_{1,m}(C,\mathcal X)$ then $h$ has at most $\Delta_{1,m}(C,\mathcal X)$ poles at $Q$. And these are its only poles since $h\in \mathcal K_Q$.

\end{itemize}
That is, if $\left\langle \Mm,\lfloor \mathbf c \rfloor\right\rangle > \Delta_{1,m}(C,\mathcal X)$, then $h$ has more zeros than poles. Thus, $Q_{\Mm}(f) = h \equiv 0$, in other words, $Y-f$ is a factor of $Q_{\Mm}(Y)$.

For the second statement of the theorem, let  $A(X,Y)=\sum_{i=1}^{\infty} \sum_{j=1}^{\infty} a_{i,j}X^i Y^j$ be a bivariate polynomial over $\mathbb F_q$ and let $w_X, w_Y\in \mathbb R$. In \cite{KV03}, the $(w_X, w_Y)$-weighted degree of $A(X,Y)$ is defined as the maximum over all numbers $iw_X + jw_Y$ such that $a_{i,j}\neq 0$. Moreover, the number of monomials of $(w_X,w_Y)$-weighted degree at most $\delta$ is denoted in \cite{KV03} as $N_{w_X,w_Y}(\delta)$. That is,
$$N_{w_X, w_Y}(\delta) \eqdef \left| \left\{ X^i Y^j \mid i,j \geq 0 \hbox{ and } iw_X + jw_Y \leq \delta\right\}\right|$$

It is easy to see that
$$N_{w_Q,w_Y}(\delta, \mathcal X) \geq N_{w_Q, w_Y}(\delta) - g\left\lfloor \frac{\delta}{w_Q} + 1\right\rfloor$$
Indeed, the number of different expressions $\phi_iY^j$ in $N_{w_Q,w_Y}(\delta, \mathcal X)$ is equal to $N_{w_Q, w_Y}(\delta)$ but taking into account that some of the functions $\phi_i$ are zero. Then, the result follows from the fact that the number of functions $\phi_i$ that are zero, or equivalent, the number of gaps, is bounded by the genus $g$ of the curve $\mathcal X$; and the fact that the number of monomials $Y^j$ such that $jw_Q \leq \delta$ is upper bounded by $\frac{\delta}{w_Q}$.

Thus, using \cite[Lemma1]{KV03} we have
$$N_{1,m}(\delta, \mathcal X) \geq N_{1,m}(\delta) - g\left\lfloor \frac{\delta}{m}+1\right\rfloor > \frac{\delta^2}{2m} - g\left( \frac{\delta}{m}
+1\right)$$

By replacing $N_{1,m}(\delta, \mathcal X)$ by $\nu$ we can write the above expression as:
$(\delta - g)^2 < 2m(\nu + g) + g^2$.
Then, by the definition of $\Delta_{1,m}(\nu, \mathcal X)$, we get the following upper bound:
$$\Delta_{1,m}(\nu, \mathcal X) \leq g + \sqrt{2m(\nu + g) + g^2}$$

\end{proof}

As for Reed-Solomon codes we can obtain an \emph{algebraic soft-decoding for AG codes with list size limited to $L$}. In 
the following we adapt the  ideas of \cite{KV03} to AG codes.

\begin{lemma}
\label{AG-2}
The number of codewords on the list produced by the soft-decision decoder for the AG code $\C{\X}{\P}{mQ}$ with a given multiplicity matrix $\Mm$ does not exceed
$$L_m(\Mm) \eqdef \frac{g+\sqrt{2m(C+g)+g^2}}{m}$$
where $g$ denotes the genus of the curve $\X$ and $C=C(\Mm)$ the cost of matrix $\Mm$.
\end{lemma}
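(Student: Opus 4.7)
The plan is to mirror the argument used in the Reed-Solomon case in \cite{KV03}: bound the number of factors of the interpolation polynomial $Q_{\Mm}(Y)$ of the form $Y-f$ with $f\in\mathcal L(mQ)$ by its ``$Y$-degree,'' and then control this $Y$-degree using the $(1,m)$-weighted $Q$-valuation bound already established in Theorem~\ref{AG-1}.

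First, I would observe that every codeword $\cv$ on the list produced by the soft-decoder comes from a factor $Y-f$ of $Q_{\Mm}(Y)$ with $f\in\mathcal L(mQ)$ (this is by definition of the factorization step). Writing
\[
Q_{\Mm}(Y)=\sum_{i,j} a_{i,j}\,\phi_i\,Y^j,
\]
these factors are in bijection with the roots of $Q_{\Mm}$ viewed as a polynomial in $Y$ with coefficients in $\mathcal K_Q$, hence their number is at most $\deg_Y Q_{\Mm}(Y):=\max\{j:\exists i,\ a_{i,j}\neq 0\}$.

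Next I would relate $\deg_Y Q_{\Mm}$ to the $(1,m)$-weighted $Q$-valuation. By construction, $Q_{\Mm}(Y)$ has minimal $(1,m)$-weighted $Q$-valuation among nontrivial polynomials satisfying the interpolation conditions, and from the argument leading to Theorem~\ref{AG-1} this valuation is upper bounded by $\Delta_{1,m}(C,\X)$. For a monomial $\phi_i Y^j$ with $\phi_i\neq 0$, the $(1,m)$-weighted $Q$-valuation equals $i+jm\geq jm$. Therefore any monomial appearing in $Q_{\Mm}(Y)$ satisfies $jm\leq \Delta_{1,m}(C,\X)$, which yields
\[
\deg_Y Q_{\Mm}(Y)\;\leq\;\frac{\Delta_{1,m}(C,\X)}{m}.
\]

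Finally I would plug in the explicit upper bound on $\Delta_{1,m}(C,\X)$ proved in Theorem~\ref{AG-1}, namely $\Delta_{1,m}(C,\X)\leq g+\sqrt{2m(C+g)+g^2}$, to obtain
\[
L_m(\Mm)\;\leq\;\frac{g+\sqrt{2m(C+g)+g^2}}{m},
\]
which is exactly the claimed bound. The only subtle step is the second one: I must be careful to use the correct notion of $(1,m)$-weighted $Q$-valuation on the basis $\{\phi_i Y^j\}$ (rather than on monomials $X^iY^j$) and to invoke the fact that $Q_{\Mm}$ is chosen with minimal such valuation, so that its valuation is bounded by the first value of $\delta$ for which $N_{1,m}(\delta,\X)>C$, i.e.\ by $\Delta_{1,m}(C,\X)$. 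The remaining steps are essentially bookkeeping.
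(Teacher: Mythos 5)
Your proposal is correct and follows essentially the same route as the paper: bound the list size by the $Y$-degree (i.e.\ the $(0,1)$-weighted $Q$-valuation) of $Q_{\Mm}(Y)$, bound that by $\deg_{1,m} Q_{\Mm}(Y)/m \leq \Delta_{1,m}(C,\X)/m$, and then invoke the explicit upper bound on $\Delta_{1,m}(C,\X)$ from Theorem~\ref{AG-1}. The paper's proof is exactly this chain of three inequalities, so no further comment is needed.
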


\begin{proof}
Similar to \cite[Lemma 15]{KV03} the number of codewords of the list is upper-bounded by the $(0,1)$-weight $Q$-valuation of the interpolation polynomial $Q_{\Mm}(Y)$. By definition of weighted $Q$-valuation, we have:
$$\deg_{0,1} Q_{\Mm}(Y)\leq \frac{\deg_{1,m} Q_{\Mm}(Y)}{m}\leq \frac{\Delta_{1,m}(C, \mathcal X)}{m}\leq \frac{g+\sqrt{2m(C+g)+g^2}}{m} $$
where the first inequality follows from the definition of weighted $Q$-valuation, the second inequality follows from the definition of $\Delta_{1,m}(C, \X)$ and the third inequality follows from Theorem \ref{AG-1}.
\end{proof}

\begin{remark}
\label{remark:AG1}
The very definition of $L_m(\Mm)$ implies
\begin{equation}
\label{eq:CM_LM}
2C(\Mm) = m L_m(\Mm)^2 - 2L_m(\Mm)g - 2g
\end{equation}
\end{remark}

\begin{lemma}
\label{AG-3}
For a given multiplicity matrix $\Mm$, the algebraic soft-decision decoding algorithm outputs a list that contains a codeword $\mathbf c \in \C{\X}{\P}{mQ}$ if
$$\frac{\left\langle \Mm,\left\lfloor \mathbf c\right\rfloor\right\rangle}{\sqrt{\left\langle \Mm,\Mm\right\rangle + \left\langle \Mm,1\right\rangle}}\geq \sqrt{m} + \frac{2g+\sqrt{2mg}}{\sqrt{2C(\Mm)}}$$
where $g$ denotes the genus of the curve $\X$ and $C=C(\Mm)$ the cost of matrix $\Mm$.
\end{lemma}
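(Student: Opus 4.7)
The plan is to chain two ingredients: Theorem \ref{AG-1} (the sufficient interpolation condition plus its upper bound on $\Delta_{1,m}$) with the identity for the cost of a multiplicity matrix that expresses $2C(\Mm) = \langle \Mm,\Mm\rangle + \langle \Mm,\one\rangle$ (cf.\ \eqref{eq:cost}). All the work is in rearranging the bound on $\Delta_{1,m}(C,\X)$ into the ratio form that appears in the statement.

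First I would invoke Theorem \ref{AG-1}: the Koetter--Vardy list-decoder for the AG code $\C{X}{P}{mQ}$ succeeds in including the codeword $\cv$ provided
$$
\left\langle \Mm,\lfloor \cv\rfloor\right\rangle \;>\; \Delta_{1,m}(C,\X),
$$
and then substitute in the explicit bound \eqref{eq:upper_bound_Delta}, namely
$\Delta_{1,m}(C,\X) \leq g + \sqrt{2m(C+g)+g^2}$. Hence it is sufficient to show
$\left\langle \Mm,\lfloor \cv\rfloor\right\rangle \geq g + \sqrt{2m(C+g)+g^2}$ (turning the strict inequality into the non-strict form of the statement is harmless, as one may absorb an arbitrarily small slack into the hypothesis or replace $\Mm$ by a slight enlargement).

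Next I would simplify the square root using subadditivity $\sqrt{a+b}\leq\sqrt a+\sqrt b$ twice: expand $2m(C+g)+g^2 = 2mC+2mg+g^2$, apply subadditivity once to split off $\sqrt{2mC}$, and again to bound $\sqrt{2mg+g^2}\leq \sqrt{2mg}+g$. This yields
$$
g+\sqrt{2m(C+g)+g^2} \;\leq\; 2g+\sqrt{2mg}+\sqrt{2mC}.
$$
So it suffices that $\left\langle \Mm,\lfloor \cv\rfloor\right\rangle \geq 2g+\sqrt{2mg}+\sqrt{2mC}$. Dividing both sides by $\sqrt{2C}$ and using the cost identity $2C(\Mm) = \langle\Mm,\Mm\rangle+\langle\Mm,\one\rangle$ gives exactly
$$
\frac{\left\langle \Mm,\lfloor \cv\rfloor\right\rangle}{\sqrt{\langle\Mm,\Mm\rangle+\langle\Mm,\one\rangle}} \;\geq\; \sqrt{m} + \frac{2g+\sqrt{2mg}}{\sqrt{2C(\Mm)}},
$$
which is the claimed sufficient condition.

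There is no real obstacle here, since both pieces are already in place. The only point that requires a little care is the subadditivity step: one must be careful to split the radicand into terms that recombine into the two summands $\sqrt m$ (after dividing by $\sqrt{2C}$, the $\sqrt{2mC}$ term produces the leading $\sqrt m$) and $(2g+\sqrt{2mg})/\sqrt{2C(\Mm)}$. Using a looser splitting (e.g., $\sqrt{2mg+g^2}\leq g\sqrt{2m/g+1}$) would not recover the clean numerator $2g+\sqrt{2mg}$. This mirrors the Reed--Solomon derivation in \cite{KV03} with $k-1$ replaced by $m$ and an extra genus correction, so I would also point out that specialising to $g=0$ (the Reed--Solomon case) recovers the standard bound.
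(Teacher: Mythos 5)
Your proof is correct and follows essentially the same route as the paper: invoke Theorem \ref{AG-1} with the bound $\Delta_{1,m}(C,\X)\leq g+\sqrt{2m(C+g)+g^2}$, bound this by $2g+\sqrt{2mg}+\sqrt{2mC}$ via subadditivity of the square root (the paper states exactly this inequality), and divide by $\sqrt{2C(\Mm)}=\sqrt{\langle\Mm,\Mm\rangle+\langle\Mm,\one\rangle}$. The only difference is that you spell out the subadditivity steps and the strict-versus-non-strict issue, both of which the paper leaves implicit.
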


\begin{proof}
This follows immediately from Theorem \ref{AG-1}
and the fact that 
$$\sqrt{2mC} + \sqrt{2mg} + 2g \geq g + \sqrt{2m(C+g) + g^2} \hbox{ for all } g,m,C\geq 0$$
\end{proof}

Let $\Pim$ be a given reliability matrix and $\Mm$ be the corresponding multiplicity matrix produced by \cite{KV03}[Algorithm A]. Then, by \cite{KV03}[Lemma 16] there exists a positive real number $\lambda$ such that 
$\Mm=\lambda \Pim - \Jm$
where $\Jm$ denotes a $q\times n$ matrix whose entries are all nonnegative real numbers not exceeding $1$.
Then:
$$\left\langle \Mm, \Mm\right\rangle + \left\langle \Mm,\one\right\rangle = \lambda^2 \left\langle
\Pim, \Pim \right\rangle - \lambda \left\langle \Pim, 2\Jm-\one\right\rangle + \left\langle \Jm, \Jm-\one\right\rangle$$
By definition we have that
\begin{equation}
\label{Equation-List}
\left\langle \Mm, \Mm\right\rangle + \left\langle \Mm,\one\right\rangle = 2C(\Mm)
\end{equation}

We can use both expressions to get a quadratic equation in $\lambda$ which has only one positive root:

\begin{equation}
\label{Eq-AG-2}
\lambda = \underbrace{\frac{\left\langle \Pim, 2\Jm-\one\right\rangle}{2\left\langle \Pim, \Pim\right\rangle}}_{\lambda_2} + \underbrace{\sqrt{\frac{\left\langle \Pim, 2\Jm-\one\right\rangle^2}{4\left\langle \Pim, \Pim\right\rangle^2} + \frac{\left\langle \Jm, \one-\Jm\right\rangle}{\left\langle \Pim, \Pim\right\rangle} + \frac{2C(\Mm)}{\left\langle \Pim, \Pim\right\rangle}}}_{\lambda_1}
\end{equation}

\subsection{Proof of Theorem \ref{th:listsize_AG}}

By using $\Mm=\lambda\Pim - \Jm$, we can reformulate the sufficient condition of Theorem \ref{AG-3}, 
by rewriting $\left\langle \Mm,\lfloor \cv \rfloor\right\rangle = \left\langle \lambda\Pim - \Jm,\lfloor \cv \rfloor\right\rangle
= \lambda  \left\langle \Pim  ,\lfloor \cv \rfloor\right\rangle -  \left\langle  \Jm,\lfloor \cv \rfloor\right\rangle$ and obtain
$$
\frac{\lambda  \left\langle \Pim  ,\lfloor \cv \rfloor\right\rangle -  \left\langle  \Jm,\lfloor \cv \rfloor\right\rangle}{\sqrt{2C(\Mm)}}
\geq \sqrt{m} + \frac{g+\sqrt{2mg}}{\sqrt{2C(\Mm)}}
$$
which is equivalent to
$$\frac{\left\langle \Pim, \lfloor \mathbf c\rfloor\right\rangle}{\sqrt{\left\langle \Pim, \Pim\right\rangle}}
\left( \lambda - \frac{\left\langle \Jm,\lfloor \mathbf c\rfloor\right\rangle}{\left\langle \Pim, \lfloor \mathbf c \rfloor \right\rangle}\right) 
\frac{\sqrt{\left\langle \Pim, \Pim\right\rangle}}{\sqrt{2C(\Mm)}}
\geq \sqrt{m} + \frac{g+\sqrt{2mg}}{\sqrt{2C(\Mm)}}$$

Using the expression for $\lambda$ in Equation \eqref{Eq-AG-2}, we can express the previous formula as:
\begin{equation}
\label{eq:F1F2F3}
\frac{\left\langle \Pim, \lfloor \mathbf c\rfloor\right\rangle}{\sqrt{\left\langle \Pim, \Pim\right\rangle}}
(F_1 - F_2 - F_3)\geq \sqrt{m} + \frac{g+\sqrt{2mg}}{\sqrt{2C(\Mm)}} 
\end{equation}
with:
\begin{eqnarray*}
F_1 &\eqdef &\lambda_1 \frac{\sqrt{\left\langle \Pim, \Pim\right\rangle}}{\sqrt{2C(\Mm)}} \\
&=  &\sqrt{\frac{\left\langle \Pim, 2\Jm-\one\right\rangle^2}{4\left\langle \Pim, \Pim\right\rangle^2} + \frac{\left\langle \Jm, \one-\Jm\right\rangle}{\left\langle \Pim, \Pim\right\rangle} + \frac{2C(\Mm)}{\left\langle \Pim, \Pim\right\rangle}}\frac{\sqrt{\left\langle \Pim, \Pim\right\rangle}}{\sqrt{2C(\Mm)}}\\
& = & \sqrt{1 + \frac{\left\langle \Jm, \one-\Jm\right\rangle}{2C(\Mm)}+ \frac{\left\langle \Pim, 2\Jm-\one\right\rangle^2}{8C(\Mm)\left\langle \Pim, \Pim\right\rangle}}\\
& \geq &1
\end{eqnarray*}
$$
F_2 \eqdef  -\lambda_2 \frac{\sqrt{\left\langle \Pim, \Pim\right\rangle}}{\sqrt{2C(\Mm)}} = \frac{1}{\sqrt{2C(\Mm)}}\frac{\left\langle \Pim, \one-2\Jm\right\rangle}{2\sqrt{\left\langle \Pim, \Pim\right\rangle}}
\leq \frac{1}{\sqrt{2C(\Mm)}} \frac{n}{2\sqrt{\tfrac{n}{q}}}$$
To obtain the previous inequality we have used the fact that $\left\langle \Pim, \one-2\Jm\right\rangle \leq \left\langle \Pim, \one\right\rangle = n$ and $\left\langle \Pim, \Pim\right\rangle \geq \frac{n}{q}$.
$$F_3 \eqdef \frac{\left\langle \Jm,\lfloor \mathbf c\rfloor\right\rangle}{\left\langle \Pim, \lfloor \mathbf c \rfloor \right\rangle} \frac{\sqrt{\left\langle \Pim, \Pim\right\rangle}}{\sqrt{2C(\Mm)}} 
\leq \frac{1}{\sqrt{2C(\Mm)}} \frac{n}{\sqrt{m} + \frac{g+\sqrt{2mg}}{\sqrt{2C(\Mm)}}}
= \frac{n}{\sqrt{2C(\Mm)m} + g + \sqrt{2mg}} \leq \frac{1}{\sqrt{2C(\Mm)}} \frac{n}{\sqrt{m}}$$
To obtain the previous inequality we have made use of the following two observations:
\begin{itemize}
\item  $\left\langle \Jm, \lfloor \mathbf c\rfloor\right\rangle \leq \left\langle \one, \lfloor \mathbf c \rfloor\right\rangle = n$;
\item if $\Pim$ and $\mathbf c$ are such that \eqref{eq:listsize_AG} holds, then {\em a fortiori}:
$$\frac{\left\langle \Pim, \lfloor \mathbf c\rfloor \right\rangle}{\sqrt{\left\langle \Pim, \Pim\right\rangle}} \geq \sqrt{m} + \frac{2g+\sqrt{2mg}}{\sqrt{2C(\Mm)}}$$
\end{itemize}

Using all the bounds that we have just given for the $F_i$'s we obtain that \eqref{eq:F1F2F3} holds when
$$
\frac{\left\langle \Pim, \lfloor \mathbf c\rfloor\right\rangle}{\sqrt{\left\langle \Pim, \Pim\right\rangle}}
\left( 1 - \frac{1}{\sqrt{2C(\Mm)}} \frac{n}{2\sqrt{\tfrac{n}{q}}} -  \frac{1}{\sqrt{2C(\Mm)}} \frac{n}{\sqrt{m}}\right)\geq \sqrt{m} + \frac{g+\sqrt{2mg}}{\sqrt{2C(\Mm)}} 
$$
which is equivalent to
\begin{equation}
\label{eq:new}
\frac{\left\langle \Pim, \lfloor \mathbf c\rfloor\right\rangle}{\sqrt{\left\langle \Pim, \Pim\right\rangle}}
\geq 
\frac{\sqrt{m} + \frac{g+\sqrt{2mg}}{\sqrt{2C(\Mm)}}}{1 - \frac{1}{\sqrt{2C(\Mm)}} \frac{n}{2\sqrt{\tfrac{n}{q}}} -  \frac{1}{\sqrt{2C(\Mm)}} \frac{n}{\sqrt{m}}}
\end{equation}

From $L\leq L_m(\Mm) \leq L+1$ and $2C(\Mm)= mL_m^2(\Mm) - 2L_m(\Mm)g - 2g$ we deduce that
\begin{equation}
\label{eq:lower_bound_C}
\sqrt{2C(\Mm)} \geq \sqrt{mL^2- 2(L+1)g - 2g} = L\sqrt{m} \sqrt{1-\tfrac{2g}{mL}\left( 1+\tfrac{2}{L}\right)}
\end{equation}

Therefore \eqref{eq:new} holds if we have
$$
\frac{\left\langle \Pim, \lfloor \mathbf c\rfloor\right\rangle}{\sqrt{\left\langle \Pim, \Pim\right\rangle}}
\geq 
\frac{\sqrt{m} + \frac{g+\sqrt{2mg}}{L\sqrt{m} \sqrt{1-\tfrac{2g}{mL}\left( 1+\tfrac{2}{L}\right)}}}{1 - \frac{1}{L\sqrt{m} \sqrt{1-\tfrac{2g}{mL}\left( 1+\tfrac{2}{L}\right)}}
\left( \frac{n}{2\sqrt{\tfrac{n}{q}}} +   \frac{n}{\sqrt{m}} \right)}
$$

By introducing $\tilde{R}= \frac{m}{n}$ and $\tilde{\gamma} \eqdef \frac{g}{m}$ we obtain that 
this inequality can be rewritten as
$$
\frac{\left\langle \Pim, \lfloor \mathbf c\rfloor\right\rangle}{\sqrt{\left\langle \Pim, \Pim\right\rangle}}
\geq \sqrt{m} 
\frac{1+ \frac{\tilde{\gamma}+\sqrt{2 \tilde{\gamma}}}{L \sqrt{1-\tfrac{2\tilde{\gamma}}{L}\left( 1+\tfrac{2}{L}\right)}}}{1 - \frac{1}{L \sqrt{1-\tfrac{2 \tilde{\gamma}}{L}\left( 1+\tfrac{2}{L}\right)}}
\left( \frac{\sqrt{q }}{2 \sqrt{\tilde{R}}} +   \frac{1}{\tilde{R}} \right)} = \sqrt{m}\left(1 + \OO{\frac{1}{L}} \right)
$$

This completes the proof of the theorem, that is, $\mathbf c\in \C{\X}{\P}{mQ}$ is on the list produced by the soft-decision decoder, provided that
this last inequality is satisfied.

\end{document}